\documentclass[11pt]{amsart}
\usepackage[margin=1in]{geometry}
\usepackage{graphicx}
\usepackage{amsmath}
\usepackage{amsthm}
\usepackage{physics}
\usepackage{amssymb}
\usepackage{xcolor}
\usepackage{hyperref}
\usepackage{mathtools}
\usepackage{comment}
\usepackage[font=footnotesize]{caption}
\usepackage{subcaption}
\usepackage{enumitem}
\usepackage{placeins,bbm}
\usepackage[normalem]{ulem}
\usepackage{booktabs}
\usepackage{mathrsfs}

\hypersetup{
    colorlinks,
    linkcolor={red!50!black},
    citecolor={blue!50!black},
    urlcolor={blue!80!black}
}

\theoremstyle{plain}
\newtheorem{theorem}{Theorem}[section]
\newtheorem{lemma}[theorem]{Lemma}
\newtheorem{proposition}[theorem]{Proposition}
\newtheorem{corollary}[theorem]{Corollary}

\theoremstyle{definition}
\newtheorem{definition}[theorem]{Definition}
\newtheorem{example}[theorem]{Example}

\theoremstyle{remark}

\newtheorem{claim}[theorem]{Claim}

\newtheorem{remark}[theorem]{Remark}
\newtheorem{assumption}{Assumption}

\usepackage{etoolbox}

\makeatletter
\AtEndEnvironment{remark}{\hfill$\triangleleft$}
\makeatother
\makeatletter
\AtEndEnvironment{example}{\hfill$\triangleleft$}
\makeatother

\numberwithin{equation}{section}

\usepackage{dsfont}
\usepackage{mathrsfs}

\newcommand{\indicator}[1]{\mathds{1}_{#1}}

\DeclarePairedDelimiter{\mset}{\langle}{\rangle}
\DeclarePairedDelimiter{\set}{\{}{\}}

\DeclarePairedDelimiter{\brackets}{[}{]}

\renewcommand{\P}{\mathbb{P}}
\newcommand{\R}{\mathbb{R}}

\newcommand{\Z}{\mathbb{Z}}
\newcommand{\bA}{\mathbb{A}}
\newcommand{\bG}{\mathbb{G}}
\newcommand{\tG}{\widetilde{\bG}}
\newcommand{\bH}{\mathbb{H}}
\newcommand{\bK}{\mathbb{K}}
\newcommand{\bL}{\mathbb{L}}
\newcommand{\bS}{\mathbb{S}}
\newcommand{\bZ}{\mathbf{Z}}
\newcommand{\cA}{\mathcal{A}}
\newcommand{\cC}{\mathcal{C}}

\newcommand{\sF}{\mathscr{F}}

\newcommand{\cL}{\mathcal{L}}
\newcommand{\cM}{\mathcal{M}}

\newcommand{\cP}{\mathcal{P}}

\newcommand{\cT}{\mathcal{T}}
\newcommand{\cV}{\mathcal{V}}

\newcommand{\sX}{\mathscr{X}}

\newcommand{\rmd}{\mathrm{d}}
\newcommand{\hc}{\mathrm{hc}}

\newcommand{\ext}[1]{\mathrm{ext}(#1)}

\newcommand{\supp}[1]{\mathrm{supp}(#1)}

\newcommand{\intr}[1]{\mathrm{int}(#1)}
\newcommand{\interior}[2]{\mathrm{int}_{#1}(#2)}

\newcommand{\onion}[2][n]{#2^{(#1)}}
\newcommand{\exb}{\partial^{\mathrm{ex}}}
\newcommand{\remove}[1]{}

\definecolor{newred}{HTML}{ff382e}
\definecolor{newgreen}{HTML}{549641}
\definecolor{newblue}{HTML}{4c4cfc}
\definecolor{neworange}{HTML}{c98702}

\allowdisplaybreaks
\let\emptyset\varnothing

\definecolor{Qcolor}{RGB}{31,120,180}

\title[The maximal hard-core model]{A general framework for crystallization in maximal hard-core models}
\author[A. Barg]{Alexander Barg}\address{(AB) University of Maryland, College Park, MD 20817, USA}\email{abarg@umd.edu} 
\author[Q. He]{Qidong He}\address{(QH) Northeastern University, Boston, MA 02115, USA}\email{q.he@northeastern.edu}
\author[G. Wang]{Geyang Wang}\address{(GW) University of Maryland, College Park, MD 20817, USA}\email{wanggy@umd.edu}

\begin{document}

\begin{abstract}
We study general maximal hard-core models on periodic lattice-type graphs. While for standard models, phase coexistence may arise for high activity, for maximal models this behavior may occur also for sufficiently low activity values. Relying on the concept of volume allocation, we develop a {\em unified set of assumptions} that imply the Peierls condition, a convergent cluster expansion for the
partition function, and hence the conclusions of Pirogov--Sinai theory for this class of models. We further check these assumptions for a number of examples including standard lattice-type periodic graphs, proving crystallization for both high and low activity. We also derive estimates for the values of activity that bound the phase coexistence regions in the phase diagram, rewriting for this purpose the proof of a technical result in the derivation of 
Pirogov--Sinai theory.
\end{abstract}
\maketitle

\section{Introduction}

The hard-core model is a classic object of study in statistical physics and probabilistic combinatorics \cite{gaunt1965hard,dobrushin1968problem,brightwell2002,Jenssen_2024,blanca2019phase,mazel2019high,mazel2025high,he2024high}. The simplest model in this class is the model of independent
sets of vertices in a graph $\bG$, where an independent set $I$ in a finite region $\Lambda\Subset \bG$ is assigned a probability that depends on the number of vertices in the set (`occupied sites') and is proportional to $u^{|I|}$, where $u$ is called the activity parameter. Once this distribution is extended to the entire graph $\bG$, thereby yielding a Gibbs measure, one of the main questions studied in the literature is whether, for a given $u$, this measure is unique or whether multiple such measures exist. 
The value of $u$ at which the answer changes is associated with the phase transition and is called the critical activity. 
The existence of a critical activity is not guaranteed and is not known even in such simple cases as $\bG=\Z^2$, the 
square lattice on the plane. Sometimes it is nevertheless possible to show that for sufficiently large $u$ there are multiple Gibbs measures; for instance for $\Z^2$ this is true starting with $u\ge 5.3506$ \cite{blanca2019phase}.

A variation of the hard-core model on lattices, considered recently in \cite{wang2025maximal,wang2026recoverable}, limits itself to {\em maximal independent sets}, pursuing the same set of questions as in the classical case. Given the significant attention devoted to the latter, it is somewhat surprising that the maximal model has not previously been considered in the mathematical literature (we are aware of only one earlier work, \cite{dallastaStatisticalMechanicsMaximal2009}, which studied this model on random graphs through Monte Carlo simulations). The two cited papers studied phase coexistence for the maximal model for $\bG=\Z^2$ and the planar triangular lattice $\bA_2$. A new feature of
the maximal model compared to the standard case arises because, unlike independent sets, maximal independent sets cannot be too sparse without breaking the maximality constraint, and
therefore it makes sense to address the phase coexistence question both for high- and
low-activity cases. For the low-activity case, the analysis in \cite{wang2025maximal,wang2026recoverable} developed the ideas in \cite{mazel2019high,mazel2025high}, using Pirogov--Sinai theory to draw conclusions
about phase coexistence. 

In this work, we continue the study of the maximal hard-core model on periodic graphs, 
phrasing the problem in more general terms. Namely, we allow several different types of
tiles with distinct chemical potentials. A configuration of the model is formed
once tiles are placed on vertices of $\bG$, as detailed below, under the assumptions 
that tiles placed on distinct vertices do not overlap, forming an independent set (in the corresponding exclusion graph), 
and that no additional tile can be added without creating overlaps, supporting the maximality condition.

Our main goal is to develop a {\em general approach} to phase coexistence and crystallization for
both high- and low-activity regimes relying on Pirogov--Sinai theory. A formal approach
to the application of this theory to the hard-core case was recently expounded in \cite{mazel2024pirogov}, 
where the hard-core constraints were formulated as interaction models with infinite potentials. 
We use this approach in the form presented in \cite{he2026unified}, relying on the 
notion of volume allocations. We further rely on the contour formalism of Zahradn{\'i}k \cite{zahradnik1984alternate} to isolate the assumptions the 
allocation should satisfy to support the Peierls condition on the energy of the contours
that separate distinct ground states in the configuration. Once this condition is satisfied,
the conclusions of Pirogov--Sinai theory apply to the model, implying that stable periodic
ground states give rise to distinct extremal periodic Gibbs measures \cite{zahradnik1984alternate}. We verify these assumptions for a group of 
lattice/periodic graphs frequently studied in the literature, including the hypercubic lattices $\Z^d$ ($d\ge 2$),
the triangular lattice $\bA_2$, and the hexagonal (honeycomb) lattice $\bH_2$ for both low- and high-activity 
regimes; if they are satisfied, crystallization follows immediately from the general results
in this paper.

 Our other goal is to extract conditions on the parameters (the inverse temperature and the Peierls constant) for Pirogov--Sinai theory to apply. For the standard (non-maximal) hard-core
model certain conditions were previously derived in \cite{he2024high}; however, these results are not in a form applicable to the maximal model. To find explicit conditions on the
model parameters, in Section~\ref{sec:PS-theory} we re-derive Zahradn{\'i}k's theorem in the form that implies certain conditions for the parameter regimes that support  the conclusions
of Pirogov--Sinai theory. In examples, however, these conditions yield surprisingly poor estimates of the model parameters, for instance, on the values of high/low activity
starting from which the theory applies. Improving these inequalities is therefore an open problem, which we leave for future research.

\section{The model}
\label{sec:the model}
We begin with a general definition of the hard-core model on lattice (periodic) graphs. 
Let $B$ be a non-degenerate $d\times d$ matrix ($d\ge 2$), let $\bL:=B\cdot\Z^d$ be a lattice, and let $\bG$ be the union of finitely many disjoint translations of $\bL$. 
Let $\rho_1:=\min\set{\abs{u-v}\mid u,v\in\bG,\ u\ne v}$ be the smallest
distance between distinct sites, where $\abs{\cdot}$ denotes the Euclidean norm,
and let $\rho_2>\rho_1$ be the next distinct distance.
We draw an edge between $u,v \in \bG$ if $\abs{u-v}=\rho_1$,
and write $u \sim v$ in that case.
We consider hard-core models defined on graphs $\bG$ of this form.

Examples, normalized to have nearest-neighbor distance one ($\rho_1 = 1$), include:
\begin{itemize}
    \item the hypercubic lattices $\Z^d$ ($d\ge 2$), with $\bG=\bL=\Z^d$;
    \item the triangular lattice $\bA_2$, with
    \begin{equation*}
        \bG=\bL
        =\set{m(1,0)+n(1/2,\sqrt{3}/2)\mid m,n\in\Z};
    \end{equation*}
    \item the honeycomb lattice $\bH_2$, obtained from the scaled triangular lattice $\bL=\sqrt{3}\bA_2$ as the disjoint union $\bG=\bL\sqcup\left((0,1)+\bL\right)$.
\end{itemize}
See~\cite{grunbaum1987tilings} for these and other periodic planar lattices.

\begin{figure}[t]
    \centering
    \includegraphics[width=0.94\textwidth]{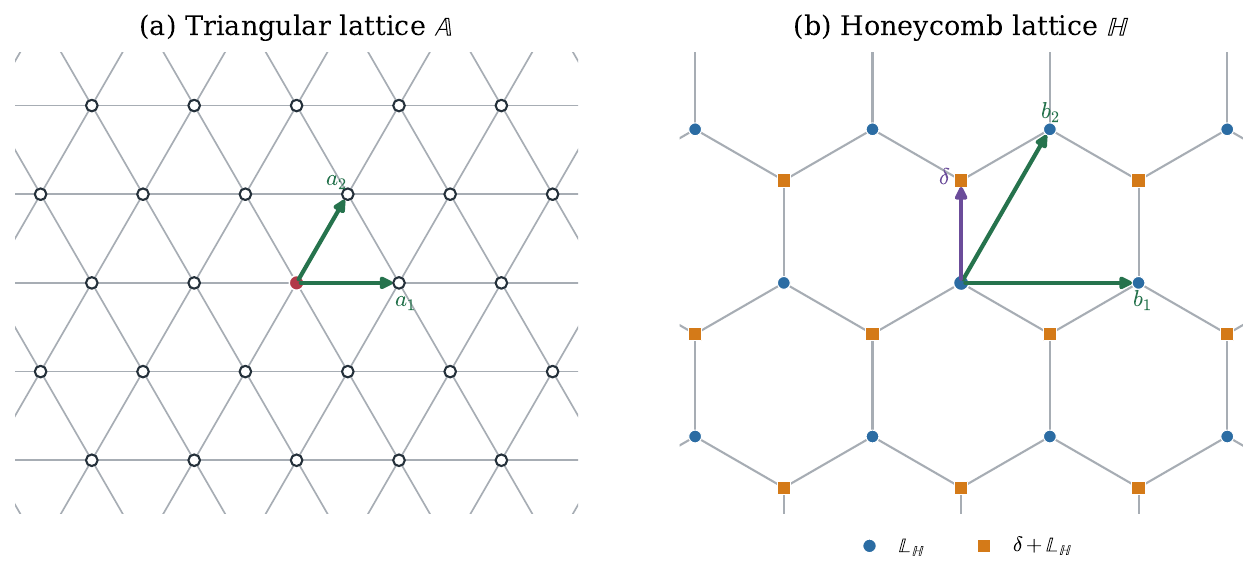}
    \caption{The triangular and honeycomb lattices in the coordinate
    realizations above.  Left: $\bA_2$ is generated by
    $a_1=(1,0)$ and $a_2=(1/2,\sqrt3/2)$.  Right:
    $\bH_2=\bL\sqcup(\delta+\bL)$, where
    $\bL$ is generated by $b_1=(\sqrt3,0)$ and
    $b_2=(\sqrt3/2,3/2)$, and $\delta=(0,1)$.  Circles and squares
    distinguish the two translates.  Edges are drawn between nearest neighbors,
    all at distance one.}
    \label{fig:triangular and honeycomb lattices}
\end{figure}

Throughout the paper,
$B_r(s):=\set{s'\in\R^d\mid\abs{s'-s}<r}$ is the open ball in $\R^d$ of radius $r$.
For the nearest-neighbor graph defined above, we write $\rmd_{\mathrm{gr}}(u,v)$
for the \textbf{graph distance} between $u,v\in\bG$, that is, the number of
edges of a shortest path joining them, with $\rmd_{\mathrm{gr}}(u,v)=\infty$
when no such path exists, and
\begin{equation*}
    \label{eqn:graph ball}
    B^{\mathrm{gr}}_r(v):=\set{u\in\bG\mid \rmd_{\mathrm{gr}}(u,v)\le r},
    \qquad r\in\Z_{\ge 0},
\end{equation*}
for the corresponding \textbf{graph ball} of radius $r$ around $v$. 

Let $T_1,\dots,T_k$ be non-empty, bounded subsets of $\R^d$, which we refer to as \textbf{tiles}. 
A \textbf{hard-core configuration} is a function $\omega:\bG\to\set{0,1,\dots,k}$ such that 
\begin{equation}\label{eq:hard-core}
    \intr{u+T_{\omega(u)}}\cap\intr{v+T_{\omega(v)}}=\emptyset\quad
\end{equation}
for every distinct $u,v\in\bG$ such that $\omega(u),\omega(v)\ne 0,$ where $\intr{\cdot}$ denotes the interior of a set. 
We identify a hard-core configuration $\omega$ with the set 
\begin{equation}
    \label{eqn: set interpretation of configuration}
    X=X(\omega):=\set{(v,\omega(v))\mid v\in\bG,\omega(v)\ne 0},
\end{equation}
and write $X_\Lambda:=X\cap(\Lambda\times\set{1,\dots,k})$ for $\Lambda\subset\R^d$, and $T_x:=v+T_{\omega(v)}$. Fix a \emph{unit} vector $\vec{\mu}=(\mu_1,\dots,\mu_k)\in\bS^{k-1}\subset\R^k$, whose components are interpreted as the chemical potentials of the tiles and let  $\mu_x:=\mu_{\omega(v)}$ for $x=(v,\omega(v))\in X(\omega)$. We also call $x=(v,\omega(v))\in X(\omega)$ a \textbf{particle} of $X$.
A hard-core configuration $X$ is \textbf{maximal} if for all $(v,i)\in\bG\times\set{1,\dots,k}$, either $(v,i)\in X$ or $X\cup\set{(v,i)}$ is not a hard-core configuration in the sense that adding $(v,i)$ to $X$ would violate \eqref{eq:hard-core} or $\omega(v)\ne 0$. In other words, $X$ is maximal if it is a maximal element in the set of hard-core configurations with respect to set containment.
Let $\Omega$ be the set of maximal configurations and equip it with the (metrizable) product topology~\cite[Section~4.1]{georgii2011gibbs}: Under this topology, two maximal configurations are \emph{close} if they agree on a large ball centered at the origin. 

\begin{remark} To connect this definition with the hard-core model of independent sets ubiquitous in combinatorics, recall the definition of the hard-core Gibbs measure (with or without the maximality constraint, which is inconsequential in this discussion). Let $\sX(\bG)$ be the collection of independent sets in $\bG$. For $X\in\sX(\bG)$ let $X_\Lambda$ be its restriction to a finite region $\Lambda\Subset \bG$. A probability measure $\P$ on $\sX(\bG)$ is called a 
{\em hard-core Gibbs measure} with {\em activity} $\sf u$  if for every $\Lambda\Subset \bG$, every $X'\in \sX(\bG)$ and $\P$-almost every $X\in \sX(\bG)$, the conditional probability on $\Lambda$ satisfies
\begin{equation} \label{eq: hardcore measure}
  \P_\Lambda(X'\mid X)\propto 
  {\sf u}^{\abs{X'_{\Lambda}}}\indicator{\{X'_{\Lambda^c}=X_{\Lambda^c}\}},
\end{equation}
where the indicator isolates the event that $X'$ and $X$ coincide on the {\em complement} of $\Lambda$ in $\bG$ and where the right-hand side should be appropriately normalized. This measure can be extended to the entire lattice $\bG$ via the usual Dobrushin--Lanford--Ruelle (DLR) prescription~\cite{biskup2009reflection,georgii2011gibbs}.

To link our definition with this model, we limit ourselves to particles of a single type, taking $k=1$. Then $\mu_1\in\set{\pm1}$, and the activity is controlled by the additional parameter, $\beta$, which plays the role of inverse temperature.  Taking ${\sf u}=e^{\beta\mu_1}$, we recover the model of independent sets with Hamiltonian written as in~\eqref{eqn:formal Hamiltonian} below. 
\end{remark}

\begin{remark}
    \label{rem:maximality is local}
    We note that maximality can be verified by checking a set of local conditions, since a hard-core configuration $X$ fails to be maximal if and only if there exist $v\in\bG$ and $1\le i\le k$ such that $(v,i)\not\in X$ and $X\cup\set{(v,i)}$ is also a hard-core configuration. 
    \end{remark}

The formal Hamiltonian of a maximal configuration $X\in\Omega$ is defined by
\begin{equation}
    \label{eqn:formal Hamiltonian}
    H(X):=-\sum_{x\in X}\mu_x
\end{equation}
and adapted to finite volumes in the usual way~\cite[Definition (6.14)]{friedli2017statistical}.
The Gibbs measure(s) of the model~\eqref{eqn:formal Hamiltonian} at inverse temperature $\beta$ is/are defined via the DLR prescription.

In this paper, we assume that $\mu_1,\dots,\mu_k$ are either all in $\R_{>0}$ or all in $\R_{<0}$ and focus on the regime $\beta\gg 1$.

The following class of maximal hard-core models is of particular interest.
\begin{definition}[Nearest-neighbor maximal hard-core model]
    \label{def:nearest-neighbor model}
    A maximal hard-core model on $\bG$ is called \textbf{nearest-neighbor} if $k=1$ and the single tile $T$ satisfies
    \begin{equation*}
        \label{eqn:nearest-neighbor tile}
        \intr{u+T}\cap\intr{v+T}\ne\emptyset
        \iff u\sim v
        \qquad\text{for all distinct }u,v\in\bG.
    \end{equation*}
Therefore, $\Omega$ is the collection of maximal independent sets of $\bG$.
Since $k=1$, we identify a particle $x=(v,1)\in X$ with its site
$v\in\bG$, and $\mu_x=\mu_1\in\set{\pm1}$.
One can take $T$ to be a Euclidean ball with radius $r$ satisfying $\rho_1 / 2 < r < \rho_2 /2$.
\end{definition}

\subsection{The assumptions}
\label{sec:the assumptions}

In this subsection, we detail the geometric assumptions that will enable us to control the applicability of Pirogov--Sinai theory for maximal hard-core models. 
In particular, the framework of volume allocations is imported from~\cite{he2026unified}, with suitable modifications (see Items~\ref{itm:continuity} and~\ref{itm:bounded support} of Definition~\ref{def:volume allocation} and Assumption~\ref{asm:optimization}) to accommodate the maximality constraint and chemical potentials of either sign, the latter needed to cover both high- and low-activity regimes.

We begin with the definition of a volume allocation, which
may look somewhat abstract at first glance; the reader is invited to consult Definition~\ref{def: discrete allocation} and Example~\ref{ex:Z2 counting allocation} for a visual perspective.

\begin{definition}
    \label{def:volume allocation}
    Let $S\subset\R^d$ be an $\bL$-invariant space with an $\bL$-invariant, locally finite measure $\lambda$.
    A \textbf{volume allocation} is a family of measurable functions $\phi_x(\cdot\mid X):S\to[0,1]$, defined for every maximal configuration $X\in\Omega$ and $x\in X$, satisfying the following properties:
    \begin{enumerate}
        \item \textit{Translation invariance}: For every $t\in\bL$, 
        \begin{equation}
            \label{eqn:translation invariance}
            \phi_{t+x}(t+\cdot\mid t+X)=\phi_x(\cdot\mid X)\quad\lambda\text{-a.e.},
        \end{equation}
        where $t+X:=\set{(t+v,\omega(v))\mid v\in\bG,\,\omega(v)\ne 0}$ (see~\eqref{eqn: set interpretation of configuration}).
        \item \label{itm:continuity} \textit{Continuity}: For all sequences of maximal configurations $(X_n)_{n\ge 1}$ converging to $X$ in the product topology,
        \begin{equation}
            \label{eqn:continuity}
            \lim_{n\to\infty}\phi_x(\cdot\mid X_n)=\phi_x(\cdot\mid X)\quad\lambda\text{-a.e.}
        \end{equation}
        \item \textit{Partition of unity}: 
        \begin{equation}
            \label{eqn:partition of unity}
            \sum_{x\in X}\phi_x(\cdot\mid X)=1\quad\lambda\text{-a.e.}
        \end{equation}
        \item \label{itm:bounded support} \textit{Bounded support}: There exists a constant $R>0$ such that for every maximal configuration $X \in \Omega$ and $x\in X$, $\set{s\in S\mid\phi_x(s\mid X)\ne 0}\subset B_R(x) \cap S$.
    \end{enumerate}
\end{definition}

Given a volume allocation $\set{\phi_x(\cdot\mid X)}$, the \textbf{allocated volume} of $x\in X$ in a maximal configuration $X\in\Omega$ is defined as
\begin{equation*}
    v(x\mid X):=\int_S\lambda(\dd{s})\phi_x(s\mid X).
\end{equation*}

We introduce the following shorthand notation: For every maximal configuration $X\in\Omega$ and every subset $X'\subset X$ of its particles\footnote{$X'$ is not necessarily maximal.}, write
\begin{equation*}
    v(X'\mid X):=\sum_{x\in X'}v(x\mid X).
\end{equation*}

\begin{definition}
    \label{def: discrete allocation}
    Let $S = \bG$ and $\lambda$ be the counting measure on $S$.\footnote{For the counting measure, $\lambda$-a.e. means for every vertex
    $y\in\bG$.}
    For the nearest-neighbor maximal hard-core model, define the \textbf{discrete volume allocation} by
     \begin{equation}
        \label{eqn:Z2 counting allocation}
        \phi_x(y\mid X):=
        \begin{cases}
            1, & y=x,\\
            1/k_y(X), & y\notin X\text{ and }y\sim x,\\
            0, & \text{otherwise},
        \end{cases}
    \end{equation}
    where, for $y\in \bG\setminus X$,
    \begin{equation*}
        k_y(X):=\abs{\set{x'\in X\mid x'\sim y}}.
    \end{equation*}
    In words, every vertex carries one unit of volume: an occupied vertex keeps
    its own unit, while a vacant vertex divides its unit equally among its
    occupied neighbors.
\end{definition}

    The allocation defined by~\eqref{eqn:Z2 counting allocation} is translation invariant and continuous in the product topology, and satisfies $ \sum_{x\in X}\phi_x(y\mid X)=1$ for every $y \in \bG$.
    Finally, for every $x \in X$, $\{s \in \bG \mid \phi_x(s \mid X) \ne 0\} \subset B_{2\rho_1}(x) \cap \bG$.
    Figure~\ref{fig:Z2 checkerboard allocation} shows an example of~\eqref{eqn:Z2 counting allocation} for the nearest-neighbor maximal hard-core model on $\Z^2$.

    \begin{figure}[tbp]
    \centering
    \includegraphics[width=0.35\textwidth]{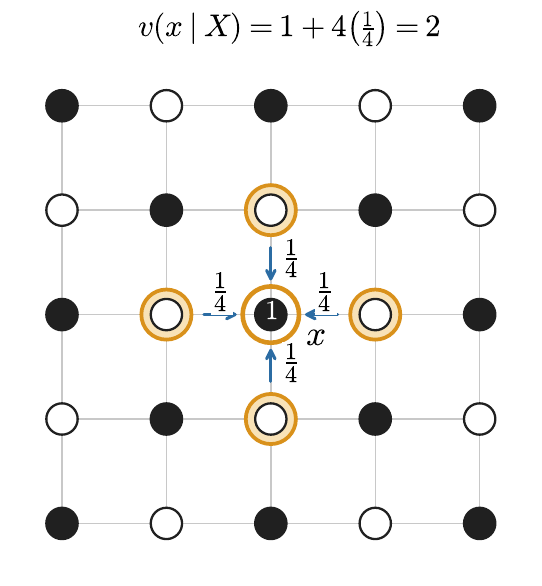}
    \caption{
    Occupied vertices are black and vacant vertices are
    white.  
    The particle $x$ keeps the unit at its own site
    and receives a share $1/4$
    from each highlighted neighbor, because each such neighbor has four occupied
    neighbors. }
    \label{fig:Z2 checkerboard allocation}
    \end{figure}

\begin{assumption}
    \label{asm:optimization}
    Let $\sigma\in\set{\pm1}$ be the (common) sign of $\mu_1,\dots,\mu_k$.
    Assume that there exist a volume allocation $\set{\phi_x(\cdot\mid X)}$ and a constant $p_\ast>0$ such that:
    \begin{enumerate}[label=(\alph*),ref=\theassumption(\alph*)]
        \item \label{itm:minorization} \textit{Local optimization}: For every maximal configuration $X\in\Omega$ and $x\in X$, 
        \begin{equation}
            \label{eqn:local optimization}
            \sigma p_\ast\, v(x\mid X)\ge \mu_x.
        \end{equation}
        \item \label{itm:global optimization} \textit{Global optimization}: There exists a finite number (one or more) of maximal configurations $\overline{X}\in\Omega$, called \textbf{perfect configurations}, such that
        \begin{equation}
            \label{eqn:global optimization}
            \sigma p_\ast\, v(x\mid\overline{X})=\mu_x\quad\text{for every }x\in \overline{X}.
        \end{equation}
    \end{enumerate}
\end{assumption}

\begin{remark}
    The volume allocation bears some similarity to $m$-potentials~\cite{holsztynski1978peierls} that can be used to identify the periodic ground states of the model. We point out that this analogy does not imply a full equivalence, as demonstrated, for instance, by the inequality in~\eqref{eqn:local optimization}. 
    
\end{remark}

\begin{example}
    \label{ex:Z2 counting allocation}
    Consider the nearest-neighbor maximal hard-core model on $\Z^2$ at high activity, for which we take $\sigma=1$ and $\mu_x = 1$. 
    Every occupied vertex $x$ has four vacant neighbors, and $k_y(X)\le 4$ for each vacant neighbor $y$ of $x$.  
    Hence,
    \begin{equation*}
        v(x\mid X)
        =1+\sum_{y\sim x}\frac{1}{k_y(X)}
        \ge 1+4\left(\frac14\right)=2.
    \end{equation*}
    Thus~\eqref{eqn:local optimization} reads
    $\frac12v(x\mid X)\ge1$ with $p_\ast=1/2$.  
    Equality holds for every particle when every vacant neighbor has all four of its neighbors occupied.  This condition yields the two checkerboard configurations, which are the perfect configurations in~\eqref{eqn:global optimization}.

    In comparison, in the low-activity case, we take $\sigma=-1$ and $\mu_x=-1$.
    We have that $v(x\mid X)\le5$ since $k_y(X)\ge1$. 
    There are ten perfect configurations~\cite{golomb1970tiling,jauslin2018high,wang2025maximal}, and $p_\ast = 1/5$.
\end{example}

To state the next assumption, we first introduce some terminology.
A maximal configuration is \textbf{periodic} if it is invariant under translations by a full-rank sublattice of $\bL$.
Let $\bK\subset\bL$ be a full-rank sublattice.
A \textbf{supercell} of $\bK$ is the region $A\cdot[0,1)^d\subset\R^d$, where $A\cdot\Z^d\subset\bK$ is a full-rank sublattice of $\bK$.
Moving in the reverse direction, given a non-degenerate parallelepiped $G=A\cdot[0,1)^d$, let  $\cL^{G}:=A\cdot\Z^d$ be the \textbf{grid} of $G$ (the lattice formed by translations of $G$ by integer multiples of its sides), and denote by $\cT^G$ the set of translations of $G$ under $\cL^G$ (the set of all ``faces'' of $\cL^G$). Let $\sF^G$ be the $\sigma$-algebra generated by $\cT^G$. 
A supercell of a periodic configuration $X$ is a supercell of the largest sublattice of $\bL$ under which $X$ is invariant.
A \textbf{fundamental parallelepiped} of $\bK$ is a supercell of $\bK$ of minimum Euclidean volume.
Figure~\ref{fig:honeycomb supercell} illustrates these notions for a periodic configuration of the honeycomb lattice.

\begin{figure}[htbp]
    \centering
    \includegraphics[width=0.7\textwidth]{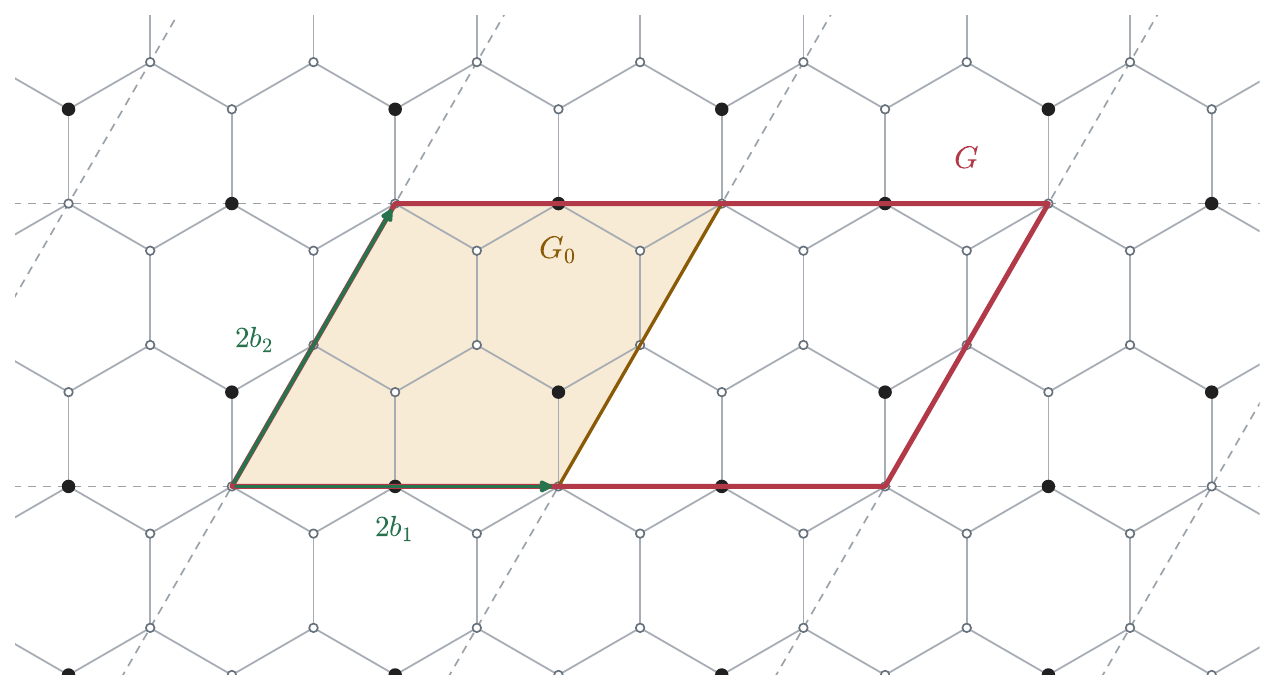}
    \caption{Supercells and fundamental parallelepipeds on the honeycomb
    lattice $\bH_2$, in the coordinates of
    Figure~\ref{fig:triangular and honeycomb lattices}. 
    The configuration is invariant
    under the sublattice $\bK$ generated by $2b_1$ and $2b_2$, which has index $4$
    in $\bL$.  The shaded parallelogram $G_0$ spanned by $2b_1$
    and $2b_2$ is a fundamental parallelepiped of $\bK$, and the
    dashed lines show its translates $\cT^{G_0}$ under the grid
    $\cL^{G_0}=\bK$.  The parallelogram $G$ spanned by $4b_1$ and $2b_2$ is
    another supercell of $\bK$: its grid $\cL^{G}$ has index $2$ in $\bK$, so
    $G$ has twice the minimum volume.}
    \label{fig:honeycomb supercell}
\end{figure}

By Proposition~\ref{prop:perfect iff periodic ground state} below, a perfect configuration is necessarily periodic. Below we will assume (Assumption~\ref{asm:supercell}) that there is
a supercell $F\subset\R^d$ common to all perfect configurations and use the
shorthand $\sF:=\sF^F$. If there is no ambiguity, we refer to the elements of $\cT^F$ as \textbf{cells}. Given $\Lambda\in\sF$, let $\partial^{\mathrm{in}}\Lambda$ denote the union of those cells in $\Lambda$ whose closure intersects the closure of $\Lambda^c$.
Given $G=A\cdot[0,1)^d$, a cell $K=t+G\in\cT^G$ where $t\in\cL^G$,  and an integer $n\ge0$,
define
\begin{equation*}
    \onion[n]{K}
    :=\bigcup_{m\in\set{-n,\ldots,n}^d}(K+A\cdot m).
\end{equation*}
Thus, $\onion[n]{K}$ is the block of $(2n+1)^d$ cells centered at $K$;
in particular, $\onion[0]{K}=K$.

We make the following simple but important observation: 
The maximal hard-core model can be viewed as a finite-range model on $\cL^F\cong\Z^d$, and each maximal configuration can therefore be viewed as an (admissible) spin configuration on $\Z^d$ in the sense of~\cite{mazel2024pirogov}. To justify this, let $F= A \cdot [0,1)^d$ be large enough that a particle (tile) centered in $F$ can interact only with particles centered in $\onion[1]{F}$.
Namely, if $\Lambda,\Lambda'\in\sF$ have disjoint closures and $X,X'$ are hard-core configurations, then $X_\Lambda\cup X'_{\Lambda'}$ is again a hard-core configuration. 
The translates of $F$ partition $\R^d$ and can be indexed by $n \in \Z^d$, and below we use the notation $K_n = A \cdot n + F$. 
Regard the local particle pattern $X_{K_n}$ as a single spin at $n\in\Z^d$.  
The resulting spin space is finite, and we assign energy from~\eqref{eqn:formal Hamiltonian} to each supercell. 
In addition, if either the hard-core or maximality condition is violated between adjacent supercells, we assign $+\infty$ to these cells. 

The notion of \textbf{$\overline{X}$-correctness} is defined in the usual way for every periodic ground state  $\overline{X}\in\Omega$: A cell $K\in\cT^F$ is $\overline{X}$-correct with respect to a maximal configuration $X$ if $X_{\onion[1]{K}}=\overline{X}_{\onion[1]{K}}$.

\begin{assumption}[{Allocation screening by a perfect configuration}]
    \label{asm:supercell}
    There exists a choice of a supercell $F$ common to all \textit{perfect} configurations (see Proposition~\ref{prop:perfect iff periodic ground state}) and larger than the interaction radius with the following property:
If $\overline{X}$ is a perfect configuration and $F$ is $\overline{X}$-correct with respect to a maximal configuration $X$, then for every $x\in X_F$,
    \begin{equation}
        \label{eqn:allocation screening}
        \phi_x(\cdot\mid X)=\phi_x(\cdot\mid\overline{X})\quad\lambda\text{-a.e.}
    \end{equation}
\end{assumption}

In words,~\eqref{eqn:allocation screening} means that if the
central cell $F$ and its surrounding layer agree with a perfect
configuration $\overline{X}$, then changes farther away cannot affect the
allocation of any particle (tile) centered in $F$.  The correct surrounding
layer therefore acts as a buffer between the allocation inside $F$ and
defects outside $\onion[1]{F}$.

\subsection{Periodic ground states}

A maximal configuration ${X}\in \Omega$ is called a \textbf{ground state} of the system if for every bounded region $\Lambda\subset\bG$ and every configuration $X'\in\Omega$ that agrees with $X$ on $\Lambda^c$,
    \begin{equation}\label{eq: periodic ground state}
        H_{\Lambda}(X'_\Lambda  X_{\Lambda^c})-H_{\Lambda}(X_\Lambda X_{\Lambda^c})\ge 0,
    \end{equation}
where $H_{\Lambda}( X'_\Lambda  X_{\Lambda^c}) := -\sum_{x \in X'_{\Lambda}} \mu_x$ is the finite-volume Hamiltonian. Note that even though $X_{\Lambda^c}$ does not explicitly enter the expression for $H_{\Lambda}$, $X_\Lambda'\cup X_\Lambda^c$ is a maximal configuration, which requires accounting for the boundary condition.
In this section, we prove that the periodic ground states are exactly the perfect configurations under Assumptions~\ref{asm:optimization} and~\ref{asm:supercell}.

\begin{proposition}
    \label{prop:perfect iff periodic ground state}
    Under Assumptions~\ref{asm:optimization} and~\ref{asm:supercell}, a maximal configuration $\overline{X}$ is perfect if and only if it is a periodic ground state.
\end{proposition}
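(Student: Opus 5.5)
The key tool is the reformulation of the energy through the allocation. For any maximal $X$ and any subset $X'\subset X$, Assumption~\ref{asm:optimization}(a) gives the pointwise bound $-\mu_x\ge -\sigma p_\ast v(x\mid X)$, hence
\begin{equation*}
H(X')=-\sum_{x\in X'}\mu_x\ge -\sigma p_\ast\, v(X'\mid X).
\end{equation*}
Equality holds for every $x$ exactly when $X$ is perfect. The partition of unity~\eqref{eqn:partition of unity} says that $v(X'\mid X)$ is essentially the $\lambda$-measure of the region allocated to $X'$. The bounded support property makes this region equal to $\lambda(\Lambda)$ up to boundary terms of order $|\partial\Lambda|$.

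For \emph{perfect $\Rightarrow$ periodic ground state}, we first show periodicity. A perfect $\overline{X}$ is determined locally: by~\eqref{eqn:global optimization} each particle attains equality. The plan is to use the finiteness of the set of perfect configurations together with translation invariance~\eqref{eqn:translation invariance}. Every $\bL$-translate of a perfect configuration is again perfect, so the orbit $\{t+\overline{X}\}_{t\in\bL}$ is finite. The stabilizer is then a finite-index, hence full-rank, sublattice, and $\overline{X}$ is periodic.

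For the ground state property, fix a bounded $\Lambda$ and a maximal $X'$ agreeing with $\overline{X}$ off $\Lambda$. Enlarge $\Lambda$ to a union $\Lambda'\in\sF$ of cells such that every cell outside $\Lambda'$ is $\overline{X}$-correct for $X'$; a layer of thickness $1$ in cells suffices, plus $R$ to handle the allocation support. By Assumption~\ref{asm:supercell}, particles in cells outside $\Lambda'$ have identical allocations in $X'$ and $\overline{X}$. Integrating the partition of unity over $S$ restricted to a large region containing $B_R(\Lambda')$ and subtracting the common contributions gives
\begin{equation*}
v(X'_{\Lambda'}\mid X')=v(\overline{X}_{\Lambda'}\mid\overline{X}).
\end{equation*}
Here the same outside particles cover the same mass in both configurations, so the insiders cover the same residual mass. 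Combining with the inequality above,
\begin{equation*}
H(X'_{\Lambda'})\ge-\sigma p_\ast v(X'_{\Lambda'}\mid X')=-\sigma p_\ast v(\overline{X}_{\Lambda'}\mid\overline{X})=H(\overline{X}_{\Lambda'}),
\end{equation*}
which is~\eqref{eq: periodic ground state}.

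For the converse, \emph{periodic ground state $\Rightarrow$ perfect}, let $Y$ be a periodic ground state with period lattice $\bK$, and suppose some particle $y$ is strict in~\eqref{eqn:local optimization}. By periodicity the per-volume energy density $e(Y)=-\lim|\Lambda_n|^{-1}H(Y_{\Lambda_n})$ exceeds $-\sigma p_\ast\lambda$-density by a fixed positive amount per supercell. Now compare with a perfect $\overline{X}$, which exists by Assumption~\ref{asm:optimization}(b). Build $X'$ equal to $Y$ outside a large box $\Lambda_n$ and equal to $\overline{X}$ deep inside, then complete it to a maximal configuration in the annulus. The completion is done greedily, adding particles until maximality holds, which is possible by Remark~\ref{rem:maximality is local}. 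The energy cost of the annulus is $O(|\partial\Lambda_n|)$, since the number of particles per unit volume is bounded by the hard-core condition. The gain from the strict inequality is of order $|\Lambda_n|$, so $X'$ has strictly lower energy for large $n$, contradicting~\eqref{eq: periodic ground state}.

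The main obstacle is this interface step: ensuring the patched configuration is maximal and that its energy difference is controlled by the boundary. For the bulk comparison I would again use the allocation identity, which converts energies into volumes up to $O(|\partial\Lambda_n|)$ errors via bounded support. Greedy completion handles maximality, but the patching must also respect the hard-core exclusion, which the cells of size exceeding the interaction radius make possible.
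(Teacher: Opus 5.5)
Your proof is correct. The direction ``perfect $\Rightarrow$ periodic ground state'' is essentially the paper's. Periodicity comes from the finite $\bL$-orbit and its finite-index stabilizer. Enlarging $\Lambda$ to an $\sF$-measurable region whose outer cells are $\overline X$-correct, and then cancelling the outside allocations via Assumption~\ref{asm:supercell}, is exactly Lemma~\ref{lem:allocated volume screening}. Your extra margin $R$ is harmless but not needed, since the outside sums cancel pointwise and one can integrate over all of $S$.

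For the converse you follow the paper's overall strategy: put $\overline X$ inside, $Y$ outside, complete greedily in an annulus, and compare the bulk gain with the boundary cost. You differ in how energies are converted into volumes. The paper uses the \emph{exact} identity $v(Y_G\mid Y)=v(\overline X_G\mid\overline X)=\lambda(G\cap S)$ for a supercell $G$ common to both periodic configurations (Lemma~\ref{lem:common supercell equal volume}). It works with blocks $G^{(n)}\subset G^{(n+3)}$, so the only error term is the energy of the annulus. You instead use the approximate identity $v(X_{\Lambda_n}\mid X)=\lambda(\Lambda_n\cap S)+O(n^{d-1})$, which holds for \emph{every} maximal $X$ by partition of unity and bounded support: at points of $S$ farther than $R$ from $\partial\Lambda_n$, the sum $\sum_{x\in X_{\Lambda_n}}\phi_x(s\mid X)$ equals $1$ or $0$. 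Your route needs neither a common supercell nor periodicity of $\overline X$; periodicity of $Y$ is used only to get a positive density of $\delta$-strict particles, as you do. The cost is an extra, harmless $O(n^{d-1})$ error. The paper's exact lemma has the advantage of being reused later, in Lemma~\ref{lem:equivalence to lattice volume} for the Peierls argument.

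Three small points to fix.
\begin{enumerate}
\item The bound on the number of particles in the annulus comes from each site carrying at most one tile and each cell containing finitely many sites. It does not come from the hard-core condition, which gives no bound when a tile has empty interior.
\item Your $e(Y)$ carries an extra minus sign, so ``exceeds'' points the wrong way. What you actually need, per period cell $G$ of $Y$, is $-\sum_{y\in Y_G}\mu_y\ge-\sigma p_\ast v(Y_G\mid Y)+\delta$.
\item The claim that the greedy completion adds particles only in the annulus needs the one-line reason made explicit in Lemma~\ref{lem:maximal completion property}. Any site whose one-cell neighborhood lies entirely in the $\overline X$-region or the $Y$-region is already blocked by a particle of that maximal configuration. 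Hence only the finitely many annulus sites can ever be filled.
\end{enumerate}
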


First, we note that Properties~\eqref{eqn:translation invariance} and~\eqref{eqn:continuity} of the volume allocation extend to the allocated volume.

\begin{lemma}
    \label{lem:translation invariance and lower semicontinuity}
    The allocated volume is translation invariant and continuous, i.e., for all maximal configurations $X\in\Omega$, $x\in X$, and $t\in\bL$,
    \begin{equation*}
        v(t+x\mid t+X)=v(x\mid X),
    \end{equation*}
    and for all sequences of maximal configurations $(X_n)_{n\ge 1}$ converging to $X$ (in the product topology),
    \begin{equation*}
        \lim_{n\to\infty}v(x\mid X_n)=v(x\mid X).
    \end{equation*}
\end{lemma}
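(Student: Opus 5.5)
Both claims follow directly from the definition $v(x\mid X)=\int_S\lambda(\dd{s})\,\phi_x(s\mid X)$ and the corresponding properties in Definition~\ref{def:volume allocation}. For translation invariance, we fix $t\in\bL$ and substitute $s=t+s'$ in the integral defining $v(t+x\mid t+X)$. Since $S$ and $\lambda$ are $\bL$-invariant, this change of variables gives
\begin{equation*}
    v(t+x\mid t+X)=\int_S\lambda(\dd{s'})\,\phi_{t+x}(t+s'\mid t+X).
\end{equation*}
By~\eqref{eqn:translation invariance}, the integrand equals $\phi_x(s'\mid X)$ for $\lambda$-a.e.\ $s'$. Modifying an integrand on a null set does not change the integral, so the right-hand side equals $v(x\mid X)$. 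Here we use that $t+x\in t+X$ whenever $x\in X$, and that $t+X$ is again maximal, since maximality is defined through translation-invariant local conditions (Remark~\ref{rem:maximality is local}).

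For continuity, we take $X_n\to X$ in the product topology and $x\in X$. Because convergence means agreement on ever larger balls, $x\in X_n$ for all sufficiently large $n$, so $v(x\mid X_n)$ is defined eventually. We then apply dominated convergence to $\phi_x(\cdot\mid X_n)\to\phi_x(\cdot\mid X)$, which holds $\lambda$-a.e.\ by~\eqref{eqn:continuity}.

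The main step is producing an integrable dominating function, and this is where bounded support (Item~\ref{itm:bounded support}) enters. Since $0\le\phi_x(\cdot\mid X_n)\le 1$ and the support lies in $B_R(x)\cap S$ with $R$ uniform in the configuration, we have $\phi_x(\cdot\mid X_n)\le\indicator{B_R(x)\cap S}$ for all $n$. Local finiteness of $\lambda$ makes $\lambda(B_R(x)\cap S)<\infty$; if local finiteness is only assumed on compacts, we use the closed ball instead. Hence the indicator is integrable, and dominated convergence yields $v(x\mid X_n)\to v(x\mid X)$.

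A minor bookkeeping point is that the countably many null sets in~\eqref{eqn:continuity} cause no trouble, because~\eqref{eqn:continuity} is stated as a single a.e.\ limit statement.
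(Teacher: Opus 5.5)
Your proposal is correct and follows essentially the same route as the paper. Translation invariance comes from the change of variables $s=t+u$, using the $\bL$-invariance of $S$ and $\lambda$ together with \eqref{eqn:translation invariance}. Continuity comes from dominated convergence with the dominating function $\indicator{S\cap B_R(x)}$, which is justified by the values lying in $[0,1]$, the bounded support property, and local finiteness of $\lambda$, after noting that $x\in X_n$ for all large $n$.
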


\begin{proof}
    For translation invariance, use the $\bL$-invariance of $S$ and
    $\lambda$, followed by~\eqref{eqn:translation invariance}, to obtain
    \begin{equation*}
        \begin{aligned}
            v(t+x\mid t+X)
            &=\int_S\lambda(\dd{s})\,\phi_{t+x}(s\mid t+X)\\
            &=\int_S\lambda(\dd{u})\,\phi_{t+x}(t+u\mid t+X)\\
            &=\int_S\lambda(\dd{u})\,\phi_x(u\mid X)
            =v(x\mid X),
        \end{aligned}
    \end{equation*}
where $s = t+u$.
    To prove continuity, fix $x\in X$ and let $X_n\to X$ in the
product topology. Then $x\in X_n$ for all sufficiently large $n$;
we restrict attention to these indices. By~\eqref{eqn:continuity},
\[
    \phi_x(s\mid X_n)\longrightarrow\phi_x(s\mid X)
    \qquad\text{for $\lambda$-almost every }s\in S.
\]
Moreover, since the allocation takes values in $[0,1]$,
Item~\ref{itm:bounded support} of
Definition~\ref{def:volume allocation} gives
\[
    0\le\phi_x(s\mid X_n)
    \le\indicator{S\cap B_R(x)}(s).
\]
Since $\lambda$ is locally finite, we have
\[
    \int_S \indicator{S\cap B_R(x)}(s)\,\lambda(\dd{s})
    =\lambda(S\cap B_R(x))<\infty.
\]
Therefore, by the dominated convergence theorem,
\begin{equation*}
    \begin{aligned}[b]
        \lim_{n\to\infty}v(x\mid X_n)
        &=\lim_{n\to\infty}
          \int_S\phi_x(s\mid X_n)\,\lambda(\dd{s})\\
        &=\int_S\phi_x(s\mid X)\,\lambda(\dd{s})\\
        &=v(x\mid X).
    \end{aligned}
    \qedhere
\end{equation*}
\end{proof}

Next, we show that two periodic maximal configurations have the same allocated volume within a common supercell.

\begin{lemma}
    \label{lem:common supercell equal volume}
    If $X,Y$ are periodic maximal configurations and $G$ is a supercell common to both $X$ and $Y$, then
    \begin{equation*}
        v(X_G\mid X)=v(Y_G\mid Y)=\lambda(G\cap S).
    \end{equation*}
\end{lemma}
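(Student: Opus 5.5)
It suffices to prove $v(X_G\mid X)=\lambda(G\cap S)$ for a single periodic maximal configuration $X$ with supercell $G$; the same argument applied to $Y$ then gives the other equality. The idea is to integrate the partition of unity~\eqref{eqn:partition of unity} over $G$. Periodicity lets us replace the volume that particles in $G$ allocate \emph{outside} $G$ by the volume that particles outside $G$ allocate \emph{inside} $G$.

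Write $G=A\cdot[0,1)^d$, so that $\cL^G=A\cdot\Z^d$ is contained in the invariance lattice of $X$. Then $t+X=X$ for every $t\in\cL^G$, and the translates $\set{t+G}_{t\in\cL^G}$ partition $\R^d$. For $x\in X_G$, decompose
\begin{equation*}
    v(x\mid X)=\sum_{t\in\cL^G}\int_{S\cap(t+G)}\lambda(\dd{s})\,\phi_x(s\mid X).
\end{equation*}
By bounded support (Item~\ref{itm:bounded support}), only finitely many $t$ contribute, namely those with $t+G$ meeting $B_R(x)$. Since $X_G$ is finite (as $\bG$ is locally finite), all sums below are finite and may be interchanged freely.

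Next, use translation invariance~\eqref{eqn:translation invariance} together with $-t+X=X$ and the $\bL$-invariance of $S$ and $\lambda$, as in the proof of Lemma~\ref{lem:translation invariance and lower semicontinuity}. Substituting $s=t+u$ gives
\begin{equation*}
    \int_{S\cap(t+G)}\lambda(\dd{s})\,\phi_x(s\mid X)
    =\int_{S\cap G}\lambda(\dd{u})\,\phi_{x-t}(u\mid X).
\end{equation*}
Here $x-t$ denotes the translated particle, which lies in $X$. Summing over $x\in X_G$ and $t\in\cL^G$, the map $(x,t)\mapsto x-t$ is a bijection from $X_G\times\cL^G$ onto $X$, because each particle of $X$ lies in exactly one translate of $G$. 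Hence
\begin{equation*}
    v(X_G\mid X)=\int_{S\cap G}\lambda(\dd{u})\sum_{y\in X}\phi_y(u\mid X)=\lambda(G\cap S),
\end{equation*}
by the partition of unity. The interchange of sum and integral is justified by nonnegativity (Tonelli), or by the fact that only finitely many $y$ have support meeting the bounded set $G$. Note that $\lambda(G\cap S)<\infty$ by local finiteness, since $G$ is bounded.

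The main point to get right is the bookkeeping of the reindexing: the translate of a particle $x=(v,i)$ by $-t$ is $(v-t,i)$, and it belongs to $X$ precisely because $t$ lies in the invariance lattice of $X$. This is where the hypothesis that $G$ is a supercell of $X$, with $\cL^G$ contained in that lattice, is used. Once this is in place, the right-hand side $\lambda(G\cap S)$ does not depend on the configuration, which yields the equality $v(X_G\mid X)=v(Y_G\mid Y)$.
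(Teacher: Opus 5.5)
Your proof is correct and follows the paper's argument: you decompose $S$ into the translates $(t+G)\cap S$ for $t\in\cL^G$, pull each piece back to $G$ using translation invariance together with periodicity, and finish with the partition of unity. Your explicit bijection $(x,t)\mapsto x-t$ from $X_G\times\cL^G$ onto $X$ is exactly the reindexing that the final step needs. In fact your bookkeeping is the more careful version, since the paper's display writes $\phi_x(u\mid X)$ where $\phi_{x-t}(u\mid X)$ is meant.
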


\begin{proof}
    We first prove the identity for $X$.  By definition,
    \begin{equation*}
        v(X_G\mid X)
        =\sum_{x\in X_G}\int_S\lambda(\dd{s})\,\phi_x(s\mid X).
    \end{equation*}
    Since the collection of sets $\set{(t+G)\cap S\mid t\in\cL^G}$ partitions $S$, we can further write
    \begin{equation*}
        v(X_G\mid X)
        =\sum_{x\in X_G}\sum_{t\in\cL^G}
        \int_{(t+G)\cap S}\lambda(\dd{s})\,\phi_x(s\mid X).
    \end{equation*}
    By the $\bL$-invariaznce of $S$ and $\lambda$ and the translation invariance of the allocation~\eqref{eqn:translation invariance}, using the change of variables $s=t+u$, we obtain
    \begin{equation*}
        \int_{(t+G)\cap S}\lambda(\dd{s})\,\phi_x(s\mid X)
        =
        \int_{G\cap S}\lambda(\dd{u})\,
        \phi_{x-t}(u\mid -t+X) = 
        \int_{G \cap S} \lambda(\dd{u}) \phi_{x} (u \mid X),
    \end{equation*}
    where the last equality follows since $t \in \cL^G$ is a period of $X$.
    Now by the definition of allocated volume and~\eqref{eqn:partition of unity},
    \begin{equation*}
        v(X_G\mid X)
        =\int_{G\cap S}\lambda(\dd{u})
        \sum_{z\in X}\phi_z(u\mid X) 
        =\int_{G\cap S}\lambda(\dd{u})
        =\lambda(G\cap S).
    \end{equation*}
    The same argument gives $v(Y_G\mid Y)=\lambda(G\cap S)$, which proves the claim.
\end{proof}

\begin{remark}
To see a concrete example of Lemma~\ref{lem:common supercell equal volume}, consider the $\Z^2$ checkerboard configuration (see Figure~\ref{fig:Z2 checkerboard allocation}) and the discrete volume allocation, where we take $G=[0,2)\times[0,2)$.
It contains four lattice vertices, so the counting measure gives $\lambda(G\cap\mathbb Z^2)=4$.
There are two tiles in $G$, and each has allocated volume $2$. Hence, $v(X_G\mid X)=2+2=4$.
\end{remark}

We now establish an important screening property of the allocated volume, 
which says that local rearrangements redistribute volume among particles inside $\Lambda$, but the total volume inside $\Lambda$ is unchanged.

\begin{lemma}
    \label{lem:allocated volume screening}
    For every bounded, $\sF$-measurable region $\Lambda\subset\R^d$, perfect configuration $\overline{X}$ and maximal configuration $X$ coinciding with $\overline{X}$ on $\partial^{\mathrm{in}}\Lambda\cup\Lambda^c$,
    \begin{equation*}
        v(X_\Lambda\mid X)=v(\overline{X}_\Lambda\mid\overline{X}).
    \end{equation*}
\end{lemma}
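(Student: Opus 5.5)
Since $X$ and $\overline{X}$ agree outside the interior part $\Lambda\setminus\partial^{\mathrm{in}}\Lambda$, the plan is to show that particles in $\Lambda$ allocate volume only inside $\Lambda$, and that particles outside $\Lambda$ allocate exactly the same volume into $\Lambda$ under $X$ as under $\overline{X}$. The partition of unity~\eqref{eqn:partition of unity} then gives equal totals.

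\textbf{Step 1: screening on the boundary layer and outside.} Fix a cell $K\in\cT^F$ with $K\subset\partial^{\mathrm{in}}\Lambda\cup\Lambda^c$. Suppose in addition that $\onion[1]{K}\subset\partial^{\mathrm{in}}\Lambda\cup\Lambda^c$. Then $K$ is $\overline{X}$-correct with respect to $X$. By translation invariance and Assumption~\ref{asm:supercell}, applied to the translate of $F$, we get $\phi_x(\cdot\mid X)=\phi_x(\cdot\mid\overline{X})$ $\lambda$-a.e. for every $x\in X_K=\overline{X}_K$. Which cells satisfy this?
\begin{itemize}
\item Every cell of $\Lambda^c$ adjacent to $\Lambda$: its neighbours lie in $\Lambda^c$ or in $\partial^{\mathrm{in}}\Lambda$, by the definition of $\partial^{\mathrm{in}}\Lambda$.
\item A fortiori, every cell of $\Lambda^c$.
\end{itemize}
Cells in $\partial^{\mathrm{in}}\Lambda$ need not be correct, and we will not need them to be.

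\textbf{Step 2: no leakage across the boundary.} We need $R$ from Item~\ref{itm:bounded support} to be small compared with $F$. Then a particle centred in $\Lambda^c$ allocates nothing to $\Lambda\setminus\partial^{\mathrm{in}}\Lambda$, and a particle centred in $\Lambda$ allocates nothing to $\Lambda^c$ beyond the adjacent layer. To avoid relying on such a size condition, I will instead argue with the partition of unity:
\begin{equation*}
v(X_\Lambda\mid X)=\lambda(\Lambda\cap S)-\sum_{x\in X_{\Lambda^c}}\int_{\Lambda\cap S}\phi_x(s\mid X)\,\lambda(\dd{s})+\sum_{x\in X_\Lambda}\int_{\Lambda^c\cap S}\phi_x(s\mid X)\,\lambda(\dd{s}).
\end{equation*}
This identity comes from integrating $\sum_{x\in X}\phi_x=1$ over $\Lambda\cap S$ and adding and subtracting the outflow. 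All sums are finite by bounded support, local finiteness of $\lambda$, and boundedness of $\Lambda$.

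The second term involves only particles in $\Lambda^c$. By Step 1 it is identical for $X$ and $\overline{X}$. The third term is the outflow from $\Lambda$, and it is the real obstacle: particles in $\partial^{\mathrm{in}}\Lambda$ are not screened. I will handle it by symmetry. Apply the same identity to $\Lambda^c$, or more precisely to a large bounded $\sF$-measurable box $\Lambda'\supset\onion[1]{\Lambda}$. Then
\begin{equation*}
\text{outflow from }\Lambda=\int_{\Lambda^c\cap S}\sum_{x\in X_\Lambda}\phi_x\,\dd\lambda=\int_{\Lambda^c\cap S}\Bigl(1-\sum_{x\in X_{\Lambda^c}}\phi_x(s\mid X)\Bigr)\lambda(\dd{s}),
\end{equation*}
restricted to the bounded region $(B_R$-neighbourhood of $\Lambda)\setminus\Lambda$. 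The integrand involves only particles centred in $\Lambda^c$. By Step 1 these have the same allocation under $X$ and $\overline{X}$, so the outflow is identical as well.

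\textbf{Conclusion.} All three terms agree for $X$ and $\overline{X}$, which gives $v(X_\Lambda\mid X)=v(\overline{X}_\Lambda\mid\overline{X})$. The main point to check carefully is Step 1: that every cell of $\Lambda^c$ really has its full $\onion[1]{}$-neighbourhood inside $\partial^{\mathrm{in}}\Lambda\cup\Lambda^c$. This holds because any cell of $\Lambda$ touching a cell of $\Lambda^c$ lies in $\partial^{\mathrm{in}}\Lambda$ by definition.
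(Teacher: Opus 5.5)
Your proof is correct and uses the same two ingredients as the paper. First, every particle centred in $\Lambda^c$ lies in a correct cell, so (the translate of) Assumption~\ref{asm:supercell} leaves its allocation unchanged; second, the partition of unity then forces $\sum_{x\in X_\Lambda}\phi_x(\cdot\mid X)=1-\sum_{x\in X_{\Lambda^c}}\phi_x(\cdot\mid X)$ to agree a.e. with its $\overline{X}$ counterpart. The paper applies this pointwise identity on all of $S$ at once and integrates, so your split into inflow over $\Lambda$ and outflow over $\Lambda^c$ is a harmless detour; your outflow step is exactly that argument restricted to $\Lambda^c$.
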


\begin{proof}
    By~\eqref{eqn:partition of unity},
    \begin{equation}
        \label{eqn:lem:allocated volume screening - splitting}
        \sum_{x\in X_\Lambda}\phi_x(\cdot\mid X)
        +\sum_{x\in X_{\Lambda^c}}\phi_x(\cdot\mid X)
        =1
        =\sum_{x\in\overline{X}_\Lambda}\phi_x(\cdot\mid\overline{X})
        +\sum_{x\in\overline{X}_{\Lambda^c}}\phi_x(\cdot\mid\overline{X})
        \quad\lambda\text{-a.e.}
    \end{equation}
    By Assumption~\ref{asm:supercell}, the second sums on both sides of~\eqref{eqn:lem:allocated volume screening - splitting} are $\lambda$-a.e. equal.
    After cancellation, integrating both sides over $S$ yields the lemma.
\end{proof}

In the next lemma, $\rmd_\infty$ is the distance on $\cL^G$ measured in the units of the basis vectors, where the basis is obtained from the standard basis of $\Z^d$ upon the transformation $\cL^G=A\cdot \Z^d$.

\begin{lemma}
    \label{lem:maximal completion property}
For every $X,Y\in \Omega$ with a common supercell $G$ large enough so that any particle centered in $G$ can interact only with particles centered in $\onion[1]{G}$, and for every $\Lambda\subset \cL^G$, there is a $Z\in\Omega$
    such that 
    \begin{equation}\label{eq:richness}
            Z_\Lambda=X_\Lambda,
            \qquad
            Z_{\Lambda^c\setminus\partial^{(3)}\Lambda^c}=Y_{\Lambda^c\setminus\partial^{(3)}\Lambda^c},
        \end{equation}
        where $\partial^{(n)}\Lambda^c
    :=\set{K\in \Lambda^c \mid \rmd_\infty(K,\Lambda)\leq n}$.
\end{lemma}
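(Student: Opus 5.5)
Here $\Lambda$ is a set of cells of $\cT^G$ (indexed by $\cL^G$), and the plan is a patching argument in three moves: glue, delete, complete.

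First form the partial configuration
\[
W:=X_\Lambda\cup Y_{\Lambda^c\setminus\partial^{(3)}\Lambda^c}.
\]
The two pieces are supported on unions of cells whose closures are disjoint. Indeed, every cell of the second piece has $\rmd_\infty$-distance at least $4$ from $\Lambda$, so at least three full layers of cells separate it from $\Lambda$. By the choice of $G$ (a particle centered in a cell interacts only with particles centered in the adjacent layer $\onion[1]{\cdot}$), the hard-core constraint \eqref{eq:hard-core} between the two pieces is vacuous. Within each piece it holds because $X$ and $Y$ are hard-core configurations. Hence $W$ is a hard-core configuration, although generally not a maximal one.

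Next, extend $W$ to a maximal configuration $Z\supset W$ by adding particles only at sites of the buffer $\partial^{(3)}\Lambda^c$. Enumerate the countably many pairs $(v,i)$ with $v$ centered in the buffer cells, and greedily add $(v,i)$ whenever the result is still hard-core. Any finite stage is hard-core, and the hard-core property is local, so the limit $Z$ is hard-core. Each added $(v,i)$ either was added or was blocked at its turn, and blocking persists, so no pair in the buffer can be added to $Z$. Alternatively, use Zorn's lemma on hard-core configurations that contain $W$ and agree with $W$ off the buffer.

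It remains to show $Z$ is maximal at sites outside the buffer, and I expect this to be the main point. Take $(v,i)$ with $v$ centered in $\Lambda$. Since $X$ is maximal, there is an obstruction: either $\omega_X(v)\ne0$, or some $(u,j)\in X$ overlaps $v+T_i$. By the interaction range, $u$ lies in $\onion[1]{K}$, where $K\ni v$. If $u\in\Lambda$, the obstruction survives in $Z$. If $u$ lies in the first buffer layer, it does not survive automatically, since we discarded $X$ there. To handle this, I would \emph{also keep $X$ on the first layer}: set $W:=X_{\Lambda\cup\partial^{(1)}\Lambda^c}\cup Y_{\Lambda^c\setminus\partial^{(3)}\Lambda^c}$. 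The gap is now layers $2$ and $3$, which is still at least one full layer, so the separation argument of the first step still applies. Then every site in $\Lambda$ keeps its $X$-obstruction, because the relevant neighbors lie in $\onion[1]{K}\subset\Lambda\cup\partial^{(1)}\Lambda^c$. Symmetrically, for $v$ centered in a cell $K$ at distance at least $4$ from $\Lambda$, an obstruction in $Y$ comes from $\onion[1]{K}$. That block lies in cells at distance at least $3$, so either inside the kept region or in layer $3$. To cover the latter case, I would keep $Y$ on layer $3$ too, and add particles only in layer $2$.

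So the final construction is: keep $X$ on $\Lambda\cup\partial^{(1)}\Lambda^c$, keep $Y$ on cells at distance at least $3$, and greedily complete inside layer $2$. Layer $2$ separates the two kept regions, because cells at distance $1$ and $3$ are non-adjacent, so their closures are disjoint. The agreement conditions \eqref{eq:richness} hold, since additions occur only in layer $2\subset\partial^{(3)}\Lambda^c$. Maximality holds everywhere:
\begin{itemize}
\item in layer $2$, by the greedy completion;
\item in $\Lambda$, by maximality of $X$, with obstructions taken from $\Lambda\cup\partial^{(1)}\Lambda^c$;
\item at distance at least $4$, by maximality of $Y$, with obstructions taken from distance at least $3$;
\item at layers $1$ and $3$, no claim is needed, because \eqref{eq:richness} imposes nothing there; any addable pair there is simply added during the completion, by extending the greedy step to all three buffer layers while never removing kept particles.
\end{itemize}

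Finally, the completion only adds particles within the buffer $\partial^{(3)}\Lambda^c$, so $Z_\Lambda=X_\Lambda$ and $Z$ agrees with $Y$ beyond the buffer. This gives \eqref{eq:richness}.
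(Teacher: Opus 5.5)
Your final construction (keep $X$ on $\Lambda\cup\partial^{(1)}\Lambda^c$, keep $Y$ on cells at distance at least $3$, then greedily complete) is exactly the paper's configuration $X_{\Lambda\cup\partial^{(1)}\Lambda^c}\cup Y_{\Lambda^c\setminus\partial^{(2)}\Lambda^c}$ extended to a maximal one, so your proof is correct and follows the paper's route. The only difference is that you run the greedy step on the three buffer layers alone and then verify maximality in $\Lambda$ and beyond the buffer, using the fact that the blocking particles lie in $\onion[1]{K}$; this supplies the justification the paper leaves implicit when it asserts that all newly added particles lie in $\partial^{(3)}\Lambda^c$.
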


\begin{proof}
    We first observe that every hard-core configuration $V$ on a region
    $\cL^G\cong\Z^d$ can be extended to one that is maximal on $\cL^G$.  Since
    $\cL^G$ is countable and there are finitely many tile types, the possible
    particles $p=(v,i)$ centered in the cells indexed by $\cL^G$ form a countable set.  
    Enumerate those not in $V$ as $p_1,p_2,\ldots$.  
    Starting from $V_0=V$, define successively
    \begin{equation*}
        V_j:=
        \begin{cases}
            V_{j-1}\cup\set{p_j},&\text{if this union is a hard-core
            configuration},\\
            V_{j-1},&\text{otherwise},
        \end{cases}
        \qquad \text{ and }
        \widehat V:=\bigcup_jV_j.
    \end{equation*}
    The increasing union $\widehat V$ is a hard-core configuration.  
    Every particle $p_k$ missing from $\widehat V$ was rejected at a finite stage $k$ because
    adding it would violate the hard-core condition with a particle already
    chosen, which remains in $\widehat V$.  Hence $\widehat V$ is maximal on
    $\cL^G$.  

    Now we take $V$ to be the union $X_{\Lambda \cup \partial^{(1)} \Lambda^c} \cup Y_{\Lambda^c\setminus\partial^{(2)}\Lambda^c}$, which is a hard-core configuration that can be extended to a maximal configuration $\widehat{V}$ as described above.
    Note that $\widehat{V}_{\Lambda} = X_\Lambda$ and $\widehat{V}_{\Lambda^c\setminus\partial^{(3)}\Lambda^c} = Y_{\Lambda^c\setminus\partial^{(3)}\Lambda^c}$ since all newly added particles are contained in $\partial^{(3)} \Lambda^c$. 
    Therefore, taking $Z = \widehat{V}$ completes the proof.
\end{proof}

\begin{proof}[Proof of Proposition~\ref{prop:perfect iff periodic ground state}]
    By Lemma~\ref{lem:translation invariance and lower semicontinuity}, every translation of a perfect configuration is again a perfect configuration.
    Since there are only finitely many perfect configurations, each perfect configuration must be periodic.
    To see that a perfect configuration $\overline{X}$ is a ground state, let $\Lambda\subset\bG$ be bounded, and $X$ be a maximal configuration coinciding with $\overline{X}$ on $\Lambda^c$.
    Let $[\Lambda]$ be the smallest $\sF$-measurable set such that $\Lambda$ is contained in $[\Lambda]\setminus\partial^{\mathrm{in}}[\Lambda]$. 
    Then,
    \begin{equation*}
        \begin{multlined}
            H_{\Lambda}(X_\Lambda \overline{X}_{\Lambda^c})-H_{\Lambda}(\overline{X}_\Lambda \overline{X}_{\Lambda^c})
            =-\sum_{x\in X_{[\Lambda]}}\mu_x+\sum_{x\in\overline{X}_{[\Lambda]}}\mu_x
            \\
            \ge -\sigma p_\ast\sum_{x\in X_{[\Lambda]}}v(x\mid X)+\sigma p_\ast\sum_{x\in\overline{X}_{[\Lambda]}}v(x\mid\overline{X})
            =0,
        \end{multlined}
    \end{equation*}
    where we used Assumption~\ref{asm:optimization} in the inequality and Lemma~\ref{lem:allocated volume screening} in the last equality.

    For the other direction, let $Y$ be a periodic ground state which is not perfect. 
    For every $x \in Y$, define
    \begin{equation*}
        g_Y(x):=\sigma p_\ast v(x\mid Y)-\mu_x,
    \end{equation*}
    which is non-negative by Assumption~\ref{itm:minorization}.
    Moreover, there exists $y_0 \in Y$ such that
    \begin{equation*}
        \delta:=g_Y(y_0)>0.
    \end{equation*}

    Fix a perfect configuration $\overline{X}$, and choose a supercell $G$
    common to $Y$ and $\overline{X}$ and larger than the hard-core interaction
    radius such that $y_0\in Y_G$.  
    We have
    \begin{equation}
        \label{eqn:strict gap in a period cell}
        \sum_{y\in Y_G}\mu_y
        = \sigma p_\ast v(Y_G\mid Y) - \sum_{y \in Y_G} g_Y(y)
        \le
        \sigma p_\ast v(Y_G\mid Y)-\delta.
    \end{equation}

    For $n\ge1$, set
    \begin{equation*}
        B_n:=\onion[n]{G},
        \qquad
        C_n:=\onion[n+{3}]{G},
        \qquad
        A_n:=C_n\setminus B_n.
    \end{equation*}
    Thus, $A_n$ is the annulus between the inner block $B_n$
    and the outer block $C_n$.  
    Choose a maximal configuration $Z_n$ that agrees with $\overline{X}$ on $B_n$ and with $Y$ on $C_n^c$ (this is always possible by Lemma~\ref{lem:maximal completion property}). 
    Since $B_n$ contains $(2n+1)^d$ copies of $G$, the finite-volume Hamiltonian of
    $Z_n$ in $C_n$, with boundary condition $Y_{C_n^c}$, is
    \begin{equation}
        \label{eqn:energy of interpolating configuration}
        \begin{aligned}
            H_{C_n}((Z_n)_{C_n} Y_{C_n^c})
            &=
            -(2n+1)^d\sum_{x\in\overline{X}_G}\mu_x
            -\sum_{z\in Z_n\cap A_n}\mu_z\\
            &=
            -(2n+1)^d\,\sigma p_\ast
            v(\overline{X}_G\mid\overline{X})
            -\sum_{z\in Z_n\cap A_n}\mu_z.
        \end{aligned}
    \end{equation}
    The second equality follows since $\overline X$ is a perfect configuration.

    The block $C_n$ contains $(2n+7)^d$ copies of $G$.  Periodicity of $Y$
    and~\eqref{eqn:strict gap in a period cell} give
    \begin{equation}
        \label{eqn:energy of nonperfect periodic configuration}
        \begin{aligned}
            H_{C_n}(Y_{C_n} Y_{C_n^c})
            &=-(2n+7)^d\sum_{y\in Y_G}\mu_y\\
            &\ge
            -(2n+7)^d
            \left[\sigma p_\ast v(Y_G\mid Y)-\delta\right].
        \end{aligned}
    \end{equation}

    Let
    \begin{equation*}
        \Delta_n
        :=
        H_{C_n}((Z_n)_{C_n} Y_{C_n^c})
        -
        H_{C_n}(Y_{C_n} Y_{C_n^c}).
    \end{equation*}
    Subtracting~\eqref{eqn:energy of nonperfect periodic configuration}
    from~\eqref{eqn:energy of interpolating configuration} and dividing by
    $(2n+7)^d$ yields
    \begin{equation}
        \label{eqn:bulk boundary energy comparison}
        \begin{aligned}
            \frac{\Delta_n}{(2n+7)^d}
            \le{}&
            -\frac{1}{(2n+7)^d}
            \sum_{z\in Z_n\cap A_n}\mu_z\\
            &-\frac{(2n+1)^d}{(2n+7)^d}\,
            \sigma p_\ast v(\overline{X}_G\mid\overline{X})
            +\sigma p_\ast v(Y_G\mid Y)-\delta.
        \end{aligned}
    \end{equation}

    The boundary annulus $A_n$ contains $(2n+7)^d-(2n+1)^d=O(n^{d-1})$ cells.  
    Since every cell contains only finitely many possible particle centers and the chemical potentials are bounded, the first term on the right-hand side of~\eqref{eqn:bulk boundary energy comparison} tends to zero as $n \to \infty$.
    Therefore, by Lemma~\ref{lem:common supercell equal volume},
    \begin{equation*}
        \limsup_{n\to\infty}
        \frac{\Delta_n}{(2n+7)^d}
        \le-\delta<0.
    \end{equation*}
    Hence $\Delta_n<0$ for every sufficiently large $n$.  But $Z_n$ agrees
    with $Y$ outside the bounded region $C_n$, so this is a finite
    modification that lowers the energy of $Y$, contradicting the assumption
    that $Y$ is a ground state.  Therefore $Y$ is perfect.
\end{proof}

\subsection{The Peierls condition}

In this subsection, we verify the Peierls condition under Assumptions~\ref{asm:optimization} and~\ref{asm:supercell}.

Our contour formalism follows~\cite{zahradnik1984alternate}.
Recall that $F=A\cdot[0,1)^d$ is a fixed supercell, defined above in Assumption~\ref{asm:supercell}.
We index its translated cells $K_n$ by $n \in \Z^d$ such that $K_n:=A\cdot n+F$.
We call two
distinct cells $K_n$ and $K_{n'}$ \emph{adjacent} if $\norm{n-n'}_\infty=1$.
Connectivity of sets of cells is understood with respect to this adjacency relation.  

Write the perfect configurations as
$\overline{X}^{\,1},\ldots,\overline{X}^{\,r}$.
Refining the earlier definition, we call a cell $K$  \textbf{$q$-correct} with respect to a maximal configuration $X$ if
    $X_{\onion[1]{K}}
    =
    \overline{X}^{\,q}_{\onion[1]{K}}$.
The cell is called \textbf{correct} if it is $q$-correct for some $q$, and \textbf{incorrect} otherwise. 

A contour $\gamma$ is a restriction of $X$ to a bounded connected component of incorrect cells (called its \textbf{support}), denoted by $\gamma:=\bigl(\supp{\gamma},X_{\supp{\gamma}}\bigr)$.
Note that $\supp{\gamma}^c$ has one unbounded connected component and  finitely many bounded connected components.  The unbounded component is called the
\textbf{exterior} of $\gamma$, denoted by $\ext{\gamma}$.  The union of the bounded components is called the
\textbf{interior}, denoted by $\intr{\gamma}$.  
The cells in $\supp{\gamma}^c$ that are adjacent
to $\supp{\gamma}$ are  $q$-correct for some $q$, and we say $q$ is the \textbf{phase label} of that component.
The label of the unbounded component is called the \textbf{type} of the
contour.  
For each $q$, let $\interior{q}{\gamma}$ be the union of the 
components having phase label $q$.  Then
\begin{equation}\label{eq:interior}
    \intr{\gamma}
    =
    \bigcup_{q=1}^r\interior{q}{\gamma}.
\end{equation}
Further, let 
 \begin{equation} \label{eq: V gamma} 
    V(\gamma)
    :=
    \supp{\gamma}\cup\intr{\gamma}.
\end{equation}

    We say that the \textbf{Peierls condition} holds if there exists a constant $\tau>0$ such that for every contour $\gamma=(\supp{\gamma},X_{\supp{\gamma}})$, 
    \begin{equation}
        \label{eqn:Peierls condition}
        H_{\supp{\gamma}}(X_{\supp{\gamma}} X_{\supp{\gamma}^c})-H_{\supp{\gamma}}(\overline{X}_{\supp{\gamma}} \overline{X}_{\supp{\gamma}^c})
        \ge \tau\abs{\supp{\gamma}},
    \end{equation}
    where $|\supp{\gamma}|$ is the number of cells contained in $\gamma$.

\begin{proposition}
    \label{prop:Peierls condition}
    Under Assumptions~\ref{asm:optimization} and~\ref{asm:supercell}, the Peierls condition holds.
\end{proposition}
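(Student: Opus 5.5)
The plan is to express the energy difference in \eqref{eqn:Peierls condition} as a sum of nonnegative local ``excess'' terms via the volume allocation, and then show that every incorrect cell carries a uniformly positive excess. The comparison configuration $\overline{X}_{\supp{\gamma}}\overline{X}_{\supp{\gamma}^c}$ is interpreted in the usual Zahradník way: on each component of $\supp{\gamma}^c$ one uses the perfect configuration given by its phase label, and on $\supp{\gamma}$ itself the type $\overline{X}$ of $\gamma$.

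\textbf{Step 1: per-particle excess.} For $X\in\Omega$ and $x\in X$, set $g_X(x):=\sigma p_\ast v(x\mid X)-\mu_x\ge 0$ (Assumption~\ref{itm:minorization}). Then
\[
-\sum_{x\in X_{\supp{\gamma}}}\mu_x=\sum_{x\in X_{\supp{\gamma}}} g_X(x)-\sigma p_\ast v(X_{\supp{\gamma}}\mid X).
\]
For the reference configuration, perfectness gives $-\sum_{x}\mu_x=-\sigma p_\ast v(\overline{X}_{\supp{\gamma}}\mid\overline{X})$. Cells of $\supp{\gamma}^c$ adjacent to $\supp{\gamma}$ are $q$-correct, so Assumption~\ref{asm:supercell} shows that particles there are allocated as in the corresponding perfect configuration. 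Combining this with the partition of unity, as in the proof of Lemma~\ref{lem:allocated volume screening}, the volume terms should cancel: the total volume allocated to particles in $\supp{\gamma}$ equals $\lambda(\supp{\gamma}\cap S)$ up to boundary contributions that match in both configurations.

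Some care is needed here. The allocation support has radius $R$, which may exceed one cell, and the labels differ across interior components. I would handle this by applying Lemma~\ref{lem:common supercell equal volume} and screening to each $\overline{X}^q$ separately on the correct layers. If necessary, I would enlarge $\supp{\gamma}$ by its correct neighbouring layer, which changes $\abs{\supp{\gamma}}$ only by a constant factor. The outcome of this step is
\[
\text{LHS of \eqref{eqn:Peierls condition}}=\sum_{x\in X_{\supp{\gamma}}} g_X(x).
\]

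\textbf{Step 2: every incorrect cell forces a defect.} The main claim is that there exist $\epsilon>0$ and $m<\infty$ such that for every $X\in\Omega$ and every cell $K$ incorrect for $X$,
\[
\sum_{x\in X_{\onion[m]{K}}} g_X(x)\ge\epsilon.
\]
Given this, a covering argument finishes the proof. Each particle lies in at most $(2m+1)^d$ blocks $\onion[m]{K}$. Hence $\sum_{x\in X_{\supp{\gamma}}}g_X(x)\ge\epsilon(2m+1)^{-d}\abs{\supp{\gamma}}$, after arranging that these neighbourhoods lie within the enlarged support or handling the few exceptions as in Step~1. This gives $\tau=\epsilon(2m+1)^{-d}$.

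\textbf{Step 3: the compactness argument (main obstacle).} The claim in Step~2 is where the difficulty lies, and I would prove it by contradiction using compactness and continuity.
\begin{itemize}
\item Suppose there are $X_n$ with incorrect cell $K_0$ (translated to the origin using Lemma~\ref{lem:translation invariance and lower semicontinuity}) and $\sum_{x\in X_{n,\onion[n]{K_0}}} g_{X_n}(x)<1/n$.
\item Since $\Omega$ is compact in the product topology (maximality is a local closed condition by Remark~\ref{rem:maximality is local}), pass to a subsequential limit $X$.
\item Continuity of $v$ (Lemma~\ref{lem:translation invariance and lower semicontinuity}) gives $g_X\equiv0$, so every particle is optimal. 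Also $K_0$ remains incorrect for $X$, because incorrectness depends only on $X_{\onion[1]{K_0}}$.
\end{itemize}

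The remaining task, and the one I expect to be hardest, is to show that a maximal configuration with $g_X\equiv 0$ everywhere must be perfect, hence correct at $K_0$. The intended route is to show that $X$ is then a ground state: any finite modification $X'$ has excess at least $0$, while the volume terms cancel by the screening argument. If $X$ is periodic, Proposition~\ref{prop:perfect iff periodic ground state} then concludes.

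This step is delicate because $X$ need not be periodic and the assumptions do not obviously exclude non-periodic zero-excess configurations, for example domain walls with zero cost. If the direct argument fails, I would try to use the finiteness of perfect configurations in Assumption~\ref{itm:global optimization}, interpreting ``perfect'' as every configuration satisfying \eqref{eqn:global optimization}, so that any $X$ with $g_X\equiv0$ is by definition perfect. This reading seems to be the intended one and makes the contradiction immediate.
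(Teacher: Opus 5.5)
Your proposal is correct and follows the paper's proof essentially step for step. The paper likewise reduces the contour energy to the total excess $\sum_{x\in X_{\supp{\gamma}}}\bigl(\sigma p_\ast v(x\mid X)-\mu_x\bigr)$ via the volume identity of Lemma~\ref{lem:equivalence to lattice volume}, proves the defect statement (Lemma~\ref{lem:defect nearby}) by the same compactness-plus-continuity argument, and finishes with the same $(2n+1)^{-d}$ covering count.

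The reading of Assumption~\ref{itm:global optimization} you settle on (every maximal configuration satisfying \eqref{eqn:global optimization} is perfect, and there are finitely many) is exactly what the paper uses; it is also how the examples decide that the assumption fails. So the ground-state/periodicity detour is unnecessary. Likewise, the ``exceptions'' in your Step~2 vanish: in the single-contour configuration every cell off $\supp{\gamma}$ is correct, and by Assumption~\ref{asm:supercell} its particles have zero excess.
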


\begin{remark}
    In applications, it may be possible to determine a value for $\tau$ explicitly.
    Otherwise, its existence is proved non-constructively below.
\end{remark}

The main ingredient in the proof of Proposition~\ref{prop:Peierls condition} is the following lemma, which guarantees a uniformly positive gap in the allocated volume within a uniformly bounded neighborhood of every incorrect cell.
Similar ideas first appeared in~\cite{holsztynski1978peierls}.

\begin{lemma}
    \label{lem:defect nearby}
    There exist constants $\delta>0$ and $n\ge 1$ such that for every cell $K\in\cT^F$ and maximal configuration $X\in\Omega$ with respect to which $K$ is incorrect, there exists $x\in X_{\onion[n]{K}}$ such that $\sigma p_\ast\, v(x\mid X)\ge\mu_x+\delta$.
\end{lemma}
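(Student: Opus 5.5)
The proof is by contradiction, combining compactness of $\Omega$ with the continuity of the allocated volume. For $x\in X$ write $g_X(x):=\sigma p_\ast v(x\mid X)-\mu_x$, which is non-negative by Assumption~\ref{itm:minorization}. Suppose the lemma fails. Then for every $m\ge1$ (taking $\delta=1/m$ and $n=m$) there are a cell $K_m\in\cT^F$ and a maximal configuration $X_m$ with the following two properties:
\begin{itemize}
\item[(i)] $K_m$ is incorrect with respect to $X_m$;
\item[(ii)] $g_{X_m}(x)<1/m$ for every $x\in (X_m)_{\onion[m]{K_m}}$.
\end{itemize}

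The first step is to recenter. Write $K_m=t_m+F$ with $t_m\in\cL^F$. Since $F$ is a supercell of a sublattice of $\bL$, we have $\cL^F\subset\bL$. We therefore replace $X_m$ by $-t_m+X_m$; this keeps us in $\Omega$ because maximality and the hard-core condition are translation invariant.
\begin{itemize}
\item By Lemma~\ref{lem:translation invariance and lower semicontinuity}, $g$ is translation invariant, so (ii) now holds with $K_m$ replaced by $F$.
\item Property (i) is also preserved, now for the cell $F$. The reason is that the set of perfect configurations is invariant under $\bL$-translations, as noted in the proof of Proposition~\ref{prop:perfect iff periodic ground state}. Hence a translate of a $q$-correct cell is $q'$-correct for the translated perfect configuration $\overline{X}^{\,q'}=-t_m+\overline{X}^{\,q}$.
\end{itemize}
So we may assume $K_m=F$ for all $m$.

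The second step is compactness. Each cell contains finitely many possible particle centers and there are finitely many tile types. Hence the configuration space embeds in a product of finite sets, which is compact in the product topology. By Remark~\ref{rem:maximality is local}, maximality and the hard-core condition are local closed conditions, so $\Omega$ is a closed subset and therefore compact. Passing to a subsequence, $X_m\to X$ for some $X\in\Omega$.

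Now fix $x\in X$. For $m$ large, $x\in X_m$, because the configurations agree on a growing ball. Also $x$ lies in $\onion[m]{F}$ for $m$ large. By (ii) and the continuity part of Lemma~\ref{lem:translation invariance and lower semicontinuity},
\[
g_X(x)=\lim_{m\to\infty} g_{X_m}(x)=0 .
\]
Thus $\sigma p_\ast v(x\mid X)=\mu_x$ holds for every $x\in X$, which says that $X$ attains equality in~\eqref{eqn:global optimization}.

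The main point needing care comes next: concluding that $X$ is one of the perfect configurations $\overline{X}^{\,q}$. I read Assumption~\ref{itm:global optimization} as saying that the perfect configurations are exactly the maximal configurations satisfying~\eqref{eqn:global optimization}, and that there are finitely many of them. Under this reading, $X=\overline{X}^{\,q}$ for some $q$. If one only had a weaker reading, I would instead argue via Proposition~\ref{prop:perfect iff periodic ground state}, as follows.
\begin{itemize}
\item Equality everywhere makes $X$ a ground state: the computation in the first half of that proof works for $X$ in place of $\overline{X}$, except for the screening step (Lemma~\ref{lem:allocated volume screening}, which used Assumption~\ref{asm:supercell}).
\item One would then have to justify that screening step, or the periodicity of $X$, separately. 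This is where the gap would be.
\end{itemize}

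Granting $X=\overline{X}^{\,q}$, the conclusion is quick. The block $\onion[1]{F}$ is bounded and contains finitely many possible sites, and $X_m\to X$ in the product topology. Hence
\[
(X_m)_{\onion[1]{F}}=X_{\onion[1]{F}}=\overline{X}^{\,q}_{\onion[1]{F}}
\qquad\text{for all large } m .
\]
That is, $F$ is $q$-correct with respect to $X_m$, contradicting (i). This yields the constants $\delta>0$ and $n\ge1$, obtained non-constructively, consistent with the preceding remark.
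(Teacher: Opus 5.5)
Your proposal is correct and follows the paper's proof essentially step for step. The common steps are: contradiction with $\delta=1/m$ and $n=m$; recentering to $K=F$; extracting a subsequence converging to a maximal limit; using continuity of the allocated volume to force equality at every particle of the limit; and eventual agreement on $\onion[1]{F}$, which contradicts incorrectness. Your main reading of Assumption~\ref{itm:global optimization} (the perfect configurations are exactly the finitely many maximal configurations attaining equality) is the one the paper itself uses when it concludes ``Hence, $X$ is a perfect configuration,'' so your fallback discussion is not needed.
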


\begin{proof}
    Without loss of generality, we may assume that $K=F$ (cf. Assumption~\ref{asm:supercell}). 
 Arguing toward a contradiction, assume that there exists a sequence of maximal configurations $(X_n)_{n\ge 1}$ such that for every $n\ge 1$, (a) $F$ is incorrect with respect to $X_n$, and (b) for every $x\in X_n\cap \onion[n]{F}$, $\sigma p_\ast\, v(x\mid X_n)<\mu_x+\frac{1}{n}$. (Here the intersection
 $X_n\cap \onion[n]{F}$ means the restriction of $X_n$ to (the sites) in $\onion[n]F$.)
    Note that $X_n\cap \onion[1]{F}\ne\emptyset$ for all $n\ge 1$: Otherwise, we can add a tile to $X_n$ in $F$, contradicting the assumption that $X_n$ is maximal. 
    By passing to a subsequence if necessary, we may further assume that there exists a maximal configuration $X\in\Omega$ such that $X_n$ converges to $X$.
    By Remark~\ref{rem:maximality is local}, $X$ is again maximal.
    Let $x\in X$.
    Then, for every sufficiently large $n$, $x\in X_n\cap \onion[n]{F}$ and thus by Lemma~\ref{lem:translation invariance and lower semicontinuity},
    \begin{equation*}
        \mu_x
        \le \sigma p_\ast\, v(x\mid X)
        =\lim_{n\to\infty}\sigma p_\ast\, v(x\mid X_n)
        \le \liminf_{n\to\infty}\Big(\mu_x+\frac{1}{n}\Big)
        =\mu_x.
    \end{equation*}
    Hence, $X$ is a perfect configuration.
    However, for every sufficiently large $n$, $X_n$ coincides with $X$ on $\onion[1]{F}$, and so $F$ is correct with respect to $X_n$, which is a contradiction.
\end{proof}

We also need the following identity for the total allocated volume in the support of a contour.

\begin{lemma} \label{lem:equivalence to lattice volume} 
    Let $S\subset\R^d$ be an $\bL$-invariant space with an $\bL$-invariant, locally finite measure $\lambda$
    and let $F$ be the fixed supercell.
    For every contour $\gamma=(\supp{\gamma},X_{\supp{\gamma}})$, 
    \begin{equation*}
        v(X_{\supp{\gamma}}\mid X)=\lambda(F\cap S)\abs{\supp{\gamma}}.
    \end{equation*}
\end{lemma}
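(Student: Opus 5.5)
The idea is to express $v(X_{\supp{\gamma}}\mid X)$ as a difference of total allocated volumes over $\sF$-measurable regions on which the allocation is screened by perfect configurations. The volumes of those regions are then known exactly from Lemmas~\ref{lem:common supercell equal volume} and~\ref{lem:allocated volume screening}. Write $q_0$ for the type of $\gamma$. Every cell of $\supp{\gamma}^c$ adjacent to $\supp{\gamma}$ is correct, and within each connected component of $\supp{\gamma}^c$ the adjacent cells carry a single phase label. Hence every cell in the layer of a component $W$ next to $\supp{\gamma}$ is $q$-correct, where $q$ is the label of $W$.

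\textbf{Step 1 (reduction to a single contour).} First I would check that the left-hand side depends on $X$ only through $X_{\supp{\gamma}}$ and the correct layer around it. By Assumption~\ref{asm:supercell}, the allocations of particles in correct cells equal those in the corresponding perfect configuration. By bounded support (with $F$ larger than $R$), the allocations of particles in $\supp{\gamma}$ are affected only by particles within one layer of cells. I would then replace $X$ by the configuration $X^\gamma$ that equals $X$ on $\supp{\gamma}$ and its adjacent correct layer, equals $\overline{X}^{\,q}$ on the rest of $\interior{q}{\gamma}$ for each $q$, and equals $\overline{X}^{\,q_0}$ on the rest of $\ext{\gamma}$. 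This is a maximal configuration, because maximality is local (Remark~\ref{rem:maximality is local}) and the glued patterns agree on $q$-correct buffers. I expect this step to be the main obstacle. Proving that $\phi_x(\cdot\mid X)=\phi_x(\cdot\mid X^\gamma)$ for $x\in X_{\supp{\gamma}}$ is not given directly by Assumption~\ref{asm:supercell}. My first attempt would be to avoid it altogether: apply partition of unity pointwise, so that the allocation leaving $\supp{\gamma}$ is rewritten as the complement of the allocation of screened outside particles.

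\textbf{Step 2 (interior components).} For each bounded component $W$ of $\supp{\gamma}^c$ with label $q$, I would show $v(X_W\mid X)=\lambda(W\cap S)$. The cells of $\partial^{\mathrm{in}}W$ are $q$-correct, so Lemma~\ref{lem:allocated volume screening}, applied with the perfect configuration $\overline{X}^{\,q}$ (after the modification of Step~1), gives $v(X_W\mid X)=v(\overline{X}^{\,q}_W\mid\overline{X}^{\,q})$. The argument of Lemma~\ref{lem:common supercell equal volume} extends to any finite union of cells of $\cT^F$: $F$ is a common supercell of all perfect configurations, so by periodicity each cell contributes $\lambda(F\cap S)$, giving $v(\overline{X}^{\,q}_W\mid\overline{X}^{\,q})=\lambda(F\cap S)\abs{W}$.

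\textbf{Step 3 (exterior via a large box).} Take a large block $\Lambda_N=\onion[N]{F}$ containing $V(\gamma)$ together with two layers of cells around it, so that $\partial^{\mathrm{in}}\Lambda_N$ lies in $\ext{\gamma}$ and consists of $q_0$-correct cells. Lemma~\ref{lem:allocated volume screening} together with Step~2's periodicity argument gives $v(X_{\Lambda_N}\mid X)=\lambda(F\cap S)\abs{\Lambda_N}$. The same reasoning applies to the $\sF$-measurable set $\Lambda_N\setminus V(\gamma)$, seen from the inside boundary of $\ext{\gamma}$. Its inner boundary cells adjacent to $\supp{\gamma}$ are again $q_0$-correct, so it also receives exactly $\lambda(F\cap S)$ per cell. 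To make this precise I would run the partition-of-unity cancellation of Lemma~\ref{lem:allocated volume screening} directly: the particles outside the region sit in $q_0$-correct cells and have screened allocations. Subtracting the exterior and all interior components from $\Lambda_N$ then yields
\begin{equation*}
v(X_{\supp{\gamma}}\mid X)=\lambda(F\cap S)\Bigl(\abs{\Lambda_N}-\abs{\Lambda_N\setminus V(\gamma)}-\abs{\intr{\gamma}}\Bigr)=\lambda(F\cap S)\abs{\supp{\gamma}}.
\end{equation*}
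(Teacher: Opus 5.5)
Your concern in Step~1 is justified, but the fallback you propose does not work, and this is a genuine gap. Bounded support restricts \emph{where} $\phi_x(\cdot\mid X)$ is nonzero, not \emph{which part of} $X$ it depends on, and nothing assumes that $F$ is larger than $R$. Assumption~\ref{asm:supercell} screens only particles that lie in correct cells. Two contours need only have supports with disjoint closures, so another contour $\gamma'$ may sit one layer of correct cells away from $\supp{\gamma}$. At points of that layer, the volume left after removing the screened allocations can then be shared between particles of $\supp{\gamma}$ and particles of $\supp{\gamma'}$, in a way no assumption controls. A pointwise partition-of-unity argument therefore cannot determine $v(X_{\supp{\gamma}}\mid X)$ for a general $X$, and the transfer $\phi_x(\cdot\mid X)=\phi_x(\cdot\mid X^\gamma)$ is not available. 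The way out is not to transfer allocations but to work with $X^\gamma$ from the start. The lemma is used only in the Peierls estimate, whose energy difference depends on $X_{\supp{\gamma}}$ alone. The paper tacitly works in exactly this setting: its proof of Proposition~\ref{prop:Peierls condition} uses that every cell of $\supp{\gamma}^c$ is correct. Your construction of $X^\gamma$ and your maximality check are fine for this purpose.

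Even in $X^\gamma$, Steps~2 and~3 use the wrong screening mechanism. Lemma~\ref{lem:allocated volume screening} cancels the allocations of the particles \emph{outside} $\Lambda$, so it requires every cell of $\Lambda^c$ to be correct. For an interior component $W$, and for $\Lambda_N\setminus V(\gamma)$, the complement contains $\supp{\gamma}$, whose particles are not screened. In particular, your assertion that the particles outside $\Lambda_N\setminus V(\gamma)$ lie in $q_0$-correct cells is false. For a region all of whose cells are $q$-correct, you must instead screen the particles \emph{inside} it. Assumption~\ref{asm:supercell} gives $\phi_x(\cdot\mid X^\gamma)=\phi_x(\cdot\mid\overline{X}^{\,q})$ for each of them, and $\cL^F$-periodicity of $\overline{X}^{\,q}$ then gives $\lambda(F\cap S)$ per cell. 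With this fix the box $\Lambda_N$ is superfluous, and you recover the paper's two-line proof. Since $X^\gamma$ agrees with $\overline{X}^{\,q_0}$ on $\partial^{\mathrm{in}}V(\gamma)\cup\ext{\gamma}$, Lemma~\ref{lem:allocated volume screening} yields $v(X_{V(\gamma)}\mid X)=\lambda(F\cap S)\abs{V(\gamma)}$. Subtracting $v(X_{\intr{\gamma}}\mid X)=\lambda(F\cap S)\abs{\intr{\gamma}}$ gives the claim.
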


\begin{proof}
    Let $\overline{X}$ be the perfect configuration such that $X_{\partial^{\mathrm{in}} V(\gamma)}=\overline{X}_{\partial^{\mathrm{in}} V(\gamma)}$.
    We compute
    \begin{equation*}
        \begin{multlined}
            v(X_{\supp{\gamma}}\mid X)
            =v(X_{V(\gamma)}\mid X)-v(X_{\intr{\gamma}}\mid X)
            \\
            =v(\overline{X}_{V(\gamma)}\mid\overline{X})-v(X_{\intr{\gamma}}\mid X)
            =\lambda(F\cap S)\abs{\supp{\gamma}},
        \end{multlined}
    \end{equation*}
    where we used Lemmas~\ref{lem:allocated volume screening} and~\ref{lem:common supercell equal volume} in the second and third equalities, respectively.
\end{proof}

\begin{proof}[Proof of Proposition~\ref{prop:Peierls condition}]
    By Lemma~\ref{lem:defect nearby}, there exist constants $\delta>0$ and $n\ge 1$ such that for every cell $K\in\cT^F$ contained in $\supp{\gamma}$, there exists $x\in X_{\onion[n]{K}}$ with $\sigma p_\ast\, v(x\mid X)\ge\mu_x+\delta$; in fact, $x\in X_{\supp{\gamma}}\cap \onion[n]{K}$ since every cell $K\in\cT^F$ contained in $\supp{\gamma}^c$ is correct with respect to $X$.

    Now, by Assumption~\ref{asm:optimization},
    \begin{equation}
        \label{eqn:lower bound on contour energy}
        H_{\supp{\gamma}}(X_{\supp{\gamma}} X_{\supp{\gamma}^c})
        =-\sum_{x\in X_{\supp{\gamma}}}\mu_x
        \ge -\sigma p_\ast\sum_{x\in X_{\supp{\gamma}}}v(x\mid X)+\frac{\delta}{(2n+1)^d}\abs{\supp{\gamma}},
    \end{equation}
    where the factor $1/(2n+1)^d$ is loosely interpreted as the minimum density of bad cells in a contour, making use of the fact that $\onion[n]{K}$ consists of $(2n+1)^d$ cells for any given cell $K$.  and
    \begin{equation}
        \label{eqn:identity for ground state contour energy}
        H_{\supp{\gamma}}(\overline{X}_{\supp{\gamma}} \overline{X}_{\supp{\gamma}^c})
        =-\sum_{x\in \overline{X}_{\supp{\gamma}}}\mu_x
        =-\sigma p_\ast\sum_{x\in \overline{X}_{\supp{\gamma}}}v(x\mid \overline{X}).
    \end{equation}
    Combining~\eqref{eqn:lower bound on contour energy},~\eqref{eqn:identity for ground state contour energy}, and Lemma~\ref{lem:equivalence to lattice volume}, we have that
    \begin{equation*}
        H_{\supp{\gamma}}(X_{\supp{\gamma}} X_{\supp{\gamma}^c})-H_{\supp{\gamma}}(\overline{X}_{\supp{\gamma}} \overline{X}_{\supp{\gamma}^c})
        \ge\frac{\delta}{(2n+1)^d}\abs{\supp{\gamma}},
    \end{equation*}
    so~\eqref{eqn:Peierls condition} holds with $\tau=\delta/(2n+1)^d$.
\end{proof}

\subsection{Screening and locality}

    The allocation in Example~\ref{ex:Z2 counting allocation} has the
    following locality property: if two maximal configurations containing
    $x$ agree at every lattice vertex within graph distance $2$ of $x$, then
    they give the same function $\phi_x$.  
    Assumption~\ref{asm:supercell} can be verified by choosing $F$ large enough that every
    lattice vertex within graph distance $2$ of a site in $F$ belongs to
    $\onion[1]{F}$.  

We now upgrade this observation to a general statement.
We say that a volume allocation $\set{\phi_x(\cdot\mid X)}$ is \textbf{local} at $X\in\Omega$ if, for every $x=(v,i)\in X$, there exists a locality radius $r=r(X,x)<\infty$ such that, for every $Y\in\Omega$ such that $Y_{B_{r}(v)}=X_{B_{r}(v)}$, $\phi_x(\cdot\mid Y)=\phi_x(\cdot\mid X)$ $\lambda$-a.e.

\begin{proposition}
    \label{prop: locality implies supercell assumption}
    If Assumption~\ref{asm:optimization} holds with a volume allocation which is local at every perfect configuration, then Assumption~\ref{asm:supercell} holds.
\end{proposition}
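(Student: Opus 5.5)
Because there are only finitely many perfect configurations, and each is periodic by the argument in the first step of the proof of Proposition~\ref{prop:perfect iff periodic ground state}, the plan is to extract a single finite locality radius and then pick a common supercell large enough that $\onion[1]{F}$ covers the locality balls of all particles in $F$. The periodicity argument uses only Assumption~\ref{asm:optimization} and Lemma~\ref{lem:translation invariance and lower semicontinuity}: translates of perfect configurations are perfect, and there are finitely many of them.

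Step 1: a uniform locality radius. Fix a perfect configuration $\overline{X}$ with a period lattice $\bK_{\overline X}\subset\bL$ of full rank. For $x\in\overline X$ and $t\in\bK_{\overline X}$, translation invariance~\eqref{eqn:translation invariance} shows that the radius $r(\overline X,x)$ also works for $t+x$ in $t+\overline X=\overline X$. Indeed, if $Y_{B_r(t+v)}=\overline X_{B_r(t+v)}$, then $(-t+Y)_{B_r(v)}=\overline X_{B_r(v)}$. Hence $\phi_x(\cdot\mid -t+Y)=\phi_x(\cdot\mid\overline X)$, and translating back gives $\phi_{t+x}(\cdot\mid Y)=\phi_{t+x}(\cdot\mid\overline X)$ $\lambda$-a.e. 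A fundamental parallelepiped of $\bK_{\overline X}$ contains finitely many particles, because $\bG$ is a finite union of lattice translates and $k$ is finite. Therefore
\[
r_\ast:=\max_{q}\ \max\set{r(\overline X^{\,q},x)\mid x\in\overline X^{\,q}_{G_q}}
\]
is finite, where $G_q$ is a fundamental parallelepiped for $\overline X^{\,q}$ and $q$ ranges over the perfect configurations. By the preceding observation, every particle of every perfect configuration has locality radius at most $r_\ast$, since a larger radius trivially still works.

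Step 2: choice of $F$. Take the intersection $\bK:=\bigcap_q\bK_{\overline X^{\,q}}$, which is again a full-rank sublattice of $\bL$. Any supercell of $\bK$ of the form $F=A\cdot[0,1)^d$ with $A\Z^d\subset\bK$ is then a supercell common to all perfect configurations. The one point needing care is the definition of a supercell of a configuration in terms of its largest invariant lattice; it is satisfied because $A\Z^d\subset\bK\subset\bK_{\overline X^{\,q}}$. Replace $A$ by $mA$ for an integer $m$ so large that two conditions hold. First, $F$ exceeds the interaction radius. Second, every point within Euclidean distance $r_\ast$ of $F$ lies in $\onion[1]{F}$. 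Both conditions hold once the inradius of $F$ exceeds $\max(r_\ast,\text{interaction range})$, and scaling achieves this since the inradius grows linearly in $m$.

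Step 3: verification. Let $F$ be $\overline X$-correct with respect to $X$, so $X_{\onion[1]{F}}=\overline X_{\onion[1]{F}}$. Let $x=(v,i)\in X_F$. Then $x\in\overline X$, and $B_{r_\ast}(v)\subset\onion[1]{F}$, so $X_{B_{r_\ast}(v)}=\overline X_{B_{r_\ast}(v)}$. Locality at $\overline X$ with radius $r_\ast$ gives~\eqref{eqn:allocation screening}.

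The main obstacle is Step 1. Locality is only assumed pointwise, with a radius depending on $x$, so the finiteness of $r_\ast$ rests entirely on periodicity of the perfect configurations combined with translation invariance of the allocation; the rest is bookkeeping.
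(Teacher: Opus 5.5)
Your proposal is correct and follows the paper's proof essentially step for step. Finitely many perfect configurations give a common period lattice $\bK$, and you extract a uniform locality radius. A sufficiently large supercell $F$ of $\bK$ with $B_r(v)\subset\onion[1]{F}$ for all $v\in F$ then turns locality into~\eqref{eqn:allocation screening}. Your Step~1 also spells out the periodicity-plus-translation-invariance argument that the paper compresses into a single phrase when asserting $r_\phi<\infty$.
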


\begin{proof}
    By Assumption~\ref{asm:optimization}, the set $\mathcal P$ of perfect
    configurations is finite.  Fix $t\in\bL$ and
    $\overline X\in\mathcal P$.  Translation by $t$ preserves the hard-core
    constraint and maximality.  Moreover, $\mu_{t+x}=\mu_x$, and
    Lemma~\ref{lem:translation invariance and lower semicontinuity} gives 
     \begin{equation*}
        v(t+x\mid t+\overline X)=v(x\mid\overline X).
    \end{equation*}
    Hence $t+\overline X$ is also perfect and therefore belongs to $\mathcal P$.

    For each $\overline X\in\mathcal P$, define its period lattice by
    \begin{equation*}
        \operatorname{Per}(\overline X)
        :=\set{t\in\bL\mid t+\overline X=\overline X}.
    \end{equation*}
    All translates of $\overline X$ belong to the finite set $\mathcal P$, so
    $\operatorname{Per}(\overline X)$ has finite index in $\bL$.  Since
    $\mathcal P$ is finite,
    \begin{equation*}
        \bK:=\bigcap_{\overline X\in\mathcal P}
        \operatorname{Per}(\overline X)
    \end{equation*}
    is a finite-index sublattice of $\bL$ and a common period lattice for all
    perfect configurations.
    
 Let us introduce the parameter
    \begin{equation*}
        r_\phi:=\max_{\overline{X}\in\cP,x\in\overline{X}} r(X,x) <\infty,
    \end{equation*}
which is well defined by the locality assumption and translation invariance of the volume allocation.
Since there are finitely many different tiles and each tile is bounded, we can also define
    \begin{multline*}
        r_\hc:=\inf\set{r\ge 0\mid\text{if }(v,i),(u,j)\in\bG\times\set{1,\dots,k}\text{ and }\rmd(v,u)\ge r,\\ \text{ then }\intr{v+T_{i}}\cap\intr{u+T_{j}}=\emptyset}<\infty.
    \end{multline*}
   Set $r:=\max\set{r_\phi,r_\hc}$.
    
    Let $F$ be a (sufficiently large) supercell of $\bK$ such that 
    \begin{equation}
        \label{eqn: locality radius contained in one-enlargement}
        B_r(v)\subset\onion[1]{F}\text{ for all }v\in F.
    \end{equation}
    We verify Assumption~\ref{asm:supercell} for this choice of a supercell.
    Suppose that $F$ is $\overline{X}$-correct with respect to a maximal configuration $X$.
    Then, $X_{\onion[1]{F}}=\overline{X}_{\onion[1]{F}}$.
    After combining this with~\eqref{eqn: locality radius contained in one-enlargement}, locality yields~\eqref{eqn:allocation screening}.
\end{proof}
This proposition has a direct implication for nearest-neighbor models with discrete volume allocation.

\begin{corollary}
    \label{cor: nn model assumption implication}
    For nearest-neighbor models with the discrete volume allocation,
    Assumption~\ref{asm:optimization} implies
    Assumption~\ref{asm:supercell}.
\end{corollary}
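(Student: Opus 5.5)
The plan is to deduce the corollary from Proposition~\ref{prop: locality implies supercell assumption}. Since Assumption~\ref{asm:optimization} is given, it suffices to show that the discrete volume allocation of Definition~\ref{def: discrete allocation} is local at every maximal configuration, and in particular at every perfect one. We will also exhibit a locality radius that is uniform in $X$ and $x$. This makes the maximum $r_\phi$ in that proof trivially finite.

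The key step is the following observation. Fix $X\in\Omega$ and $x\in X$, and recall that $\phi_x(\cdot\mid X)$ is supported on $\{x\}\cup\{y\in\bG\mid y\sim x\}$. On this set the values depend only on two pieces of data:
\begin{itemize}
\item which neighbors $y\sim x$ lie in $X$;
\item for each vacant neighbor $y$, the number $k_y(X)$ of occupied neighbors of $y$.
\end{itemize}
Every vertex involved lies within graph distance $2$ of $x$, hence within Euclidean distance $2\rho_1$ of $x$. So if $Y\in\Omega$ satisfies $Y_{B_r(x)}=X_{B_r(x)}$ with, say, $r=2\rho_1+\varepsilon$ for any $\varepsilon>0$, then:
\begin{itemize}
\item $x\in Y$;
\item the occupancy of every neighbor of $x$ agrees in $X$ and $Y$;
\item $k_y(Y)=k_y(X)$ for every vacant neighbor $y$.
\end{itemize}
Hence $\phi_x(\cdot\mid Y)=\phi_x(\cdot\mid X)$ pointwise on $\bG$, which is the meaning of $\lambda$-a.e.\ for counting measure. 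Here the particle $x$ is identified with its site $v$, as in Definition~\ref{def:nearest-neighbor model}. The open ball is what forces the $\varepsilon$: a closed ball of radius $2\rho_1$ would suffice, and $B_{2\rho_1+\varepsilon}$ contains it. This gives locality with $r(X,x)\le 2\rho_1+\varepsilon$, and Proposition~\ref{prop: locality implies supercell assumption} then yields Assumption~\ref{asm:supercell}.

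The only point needing care is that Proposition~\ref{prop: locality implies supercell assumption} also uses the standing requirements of Definition~\ref{def:volume allocation}. These are translation invariance, continuity, partition of unity and bounded support, and the paper noted right after Definition~\ref{def: discrete allocation} that the discrete allocation satisfies all of them. Continuity in fact follows from locality, since convergence in the product topology means eventual agreement on $B_r(x)$. No step is a genuine obstacle; the main thing to check is that the Euclidean ball $B_r(v)$ in the definition of locality contains every vertex at graph distance at most $2$. This holds because each edge has length $\rho_1$.
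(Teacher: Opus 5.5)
Your proposal is correct and follows the paper's own proof. The paper likewise observes that $\phi_x(\cdot\mid X)$ is determined by $X$ within graph distance $2$ of $x$, so the discrete allocation is local at every maximal configuration, and then invokes Proposition~\ref{prop: locality implies supercell assumption} to get Assumption~\ref{asm:supercell}. Your explicit conversion to a uniform Euclidean radius $2\rho_1+\varepsilon$ (which handles the open ball) only fills in a detail the paper leaves implicit.
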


\begin{proof}
    For a nearest-neighbor maximal model, by~\eqref{eqn:Z2 counting allocation}, the discrete volume allocation $\phi_x(\cdot\mid X)$ is completely determined by the restriction of a maximal configuration $X$ to within a graph distance of $2$ from $x\in X$.
    Hence, the discrete volume allocation is local at \textit{every} maximal configuration of the model, including the perfect ones, so Proposition~\ref{prop: locality implies supercell assumption} applies.
\end{proof}

The following example shows that the assumption of locality in Proposition~\ref{prop: locality implies supercell assumption} cannot be removed in general.

\begin{example}
    We give an example of a volume allocation that satisfies Assumption~\ref{asm:optimization} but does not satisfy Assumption~\ref{asm:supercell}.
    Consider the $\Z^2$ setting of Example~\ref{ex:Z2 counting allocation}, let $\lambda$ again be the counting measure, and assign to each occupied vertex the weight
    \begin{equation*}
        a_x(X):=1+\varepsilon\sum_{z\in X}2^{-\rmd_{\mathrm{gr}}(x,z)},
        \qquad \varepsilon>0.
    \end{equation*}
    Define the volume allocation for every $y \in \Z^2, x \in X$ by
    \begin{equation*}
        \widetilde{\phi}_x(y\mid X)
        := \begin{cases}
            1, & y=x, \\[1ex]
            \dfrac{a_x(X)}{\displaystyle\sum_{\substack{x'\in X\\x'\sim y}} a_{x'}(X)},
                & y\notin X,\; y\sim x, \\[2ex]
            0, & \text{otherwise}.
        \end{cases}
    \end{equation*}
    The resulting allocation satisfies Assumption~\ref{asm:optimization}.  
    Nevertheless, changing $X$ arbitrarily far from $F$ (without changing $X_{\onion[1]{F}}$) can change the weights of particles in $F$; thus, Assumption~\ref{asm:supercell} fails.
\end{example}

\section{Parameter estimates in Pirogov--Sinai theory} \label{sec:PS-theory}
In this section, our goal is to formulate an explicit list of sufficient conditions on the parameters, such as the inverse temperature $\beta$ and the Peierls constant $\tau$, for Pirogov--Sinai theory to apply to the maximal model.
Although Pirogov--Sinai arguments can be used to derive
inequalities of this kind~\cite{he2024high,he2026rigorous} (see
also~\cite{dereudre2025liquid}), the existing formulations do not provide the
resulting numerical bounds in a form directly applicable in our setting.
Obtaining such bounds requires carrying the constants through the full contour
and cluster-expansion calculation.

For this reason, we give a detailed treatment of the version of
Pirogov--Sinai theory~\cite{pirogov1975phase,pirogov1976phase} proposed by
Zahradn\'ik~\cite{zahradnik1984alternate}. 
Our exposition takes inspiration from~\cite[Chapter~7]{friedli2017statistical} and organizes the results in~\cite{zahradnik1984alternate} in a slightly different order to enhance the logical flow, though the proof of the main technical result, Theorem~\ref{thm:applyPS}, still mostly follows Zahradn\'ik's original.

Throughout, we work with the maximal model in the setting of Section~\ref{sec:the model}.
Via the supercell construction stated before Assumption~\ref{asm:supercell}, we may equivalently view the maximal model as a spin model on $\Z^d$ with a finite single-spin space and a translation-invariant, possibly infinite-valued interaction of range one (with respect to the sup-norm).
By Proposition~\ref{prop:perfect iff periodic ground state}, the periodic ground states of the model are the finitely many perfect configurations $\overline X^{\,1},\dots,\overline X^{\,r}$.
In addition, Proposition~\ref{prop:Peierls condition} supplies a constant $\tau>0$ such that every contour satisfies the Peierls condition~\eqref{eqn:Peierls condition}.

\subsection{Contour representation}

First, we recall the contour representation in the form adapted to the maximal model.

Let $X\in \Omega$ be a maximal configuration.
For a periodic ground state $\overline X^q$ and a bounded, $\sF$-measurable region $\Lambda$, let $\cC^q(\Lambda)$ be the set of contours of type $q$ whose support is contained in $\Lambda\setminus\partial^{\mathrm{in}}\Lambda$ (we say that these contours are \textbf{properly contained} in $\Lambda$).
Let $\cC^q:=\bigcup_{\Lambda\Subset \bG}\cC^q(\Lambda)$. 
Two contours $\gamma_1,\gamma_2\in\cC^q$ are compatible if their supports have disjoint closures.
Given a finite set $\Gamma\subset\cC^q$ of mutually compatible contours, a contour $\gamma\in\Gamma$ is external with respect to $\Gamma$ if its support is not contained in the interior of any other contour in $\Gamma$.

Henceforth, suppose that $\Lambda$ is a bounded, $\sF$-measurable region whose complement $\R^d\setminus\Lambda$ is connected. Define 
   \begin{equation}\label{eq: Zq}
\bZ^q(\Lambda):=\sum_{X\in \Omega_\Lambda^q} e^{-\beta H_{\Lambda}(X_{\Lambda}  \overline{X}^q_{\Lambda^c})},
  \end{equation}
where $H_{\Lambda}(X_{\Lambda}  \overline{X}^q_{\Lambda^c}) = -\sum_{x \in X_\Lambda} \mu_x$; cf.~\eqref{eq: periodic ground state}.
Let $\ext{\Gamma} = \cap_{\gamma \in \Gamma} \ext{\gamma}$.
Denote by $\Omega_{\Lambda}^q$ the set of configurations $X\in\Omega$ such that $X_{\Lambda^c}=\overline{X}^{\,q}_{\Lambda^c}$, and for each contour $\gamma$ of $X$, $V(\gamma) \subset \Lambda\setminus\partial^{\mathrm{in}} \Lambda$, where $V(\gamma)$ is defined in \eqref{eq: V gamma}. Let $X \in \Omega_{\Lambda}^q$ and let $\Gamma$ be the collection of (mutually compatible) external contours defined by $X$. We can rewrite
the Hamiltonian $H_{\Lambda}(X_{\Lambda}  X_{\Lambda^c})$ as follows: 
\begin{equation} \label{eq:rewrite-H}
    \begin{split}
        H_{\Lambda}(X_{\Lambda}  X_{\Lambda^c}) & = H_{\ext{\Gamma} \cap \Lambda}(X_{\ext{\Gamma} \cap \Lambda}  X_{\ext{\Gamma}^c \cup \Lambda^c})\\
        & \hspace*{1in}+ \sum_{\gamma \in \Gamma} \brackets[\Bigg]{H_{\supp{\gamma}}(X_{\supp{\gamma}}  X_{\supp{\gamma}^c}) + \sum_{p=1}^r H_{\interior{p}{\gamma}}(X_{\interior{p}{\gamma}}  X_{\interior{p}{\gamma}^c} ) } \\
        & = H_{\ext{\Gamma} \cap \Lambda}(\overline{X}^q_{\ext{\Gamma} \cap \Lambda}  \overline{X}^q_{\ext{\Gamma}^c \cup \Lambda^c})\\
        & \hspace*{1in}
        + \sum_{\gamma \in \Gamma} \brackets[\Bigg]{ H_{\supp{\gamma}}(X_{\supp{\gamma}}  X_{\supp{\gamma}^c}) + \sum_{p=1}^r H_{\interior{p}{\gamma}}(X_{\interior{p}{\gamma}}  X_{\interior{p}{\gamma}^c} ) },
    \end{split}
\end{equation}
where in the last equality we used the definition of $X$. 
In the sum on $X$ in \eqref{eq: Zq}, $X_{\interior{p}{\gamma}}$ ranges over $\Omega_{\interior{p}{\gamma}}^{p}$ as $X$ ranges over $\Omega_{\Lambda}^q$. 
Consequently,~\eqref{eq:rewrite-H} gives the following expression for the partition function:
\begin{equation*}
    \bZ^q(\Lambda) = \sum_{\substack{\Gamma\subset\cC^q(\Lambda)\\\text{compatible}\\\text{external}}} e^{-\beta H_{\ext{\Gamma} \cap \Lambda}(\overline{X}^q_{\ext{\Gamma} \cap \Lambda}  \overline{X}^q_{\ext{\Gamma}^c \cup \Lambda^c})} \prod_{\gamma \in \Gamma} \brackets[\Bigg]{ e^{-\beta H_{\supp{\gamma}}(X_{\supp{\gamma}}  X_{\supp{\gamma}^c})} \prod_{p=1}^r \bZ^p(\interior{p}{\gamma})}.
\end{equation*}
Now define
\begin{equation*}
    \begin{multlined}
        \Xi^q(\Lambda) := \frac{\bZ^q(\Lambda)}{e^{-\beta H_{\Lambda}(\overline{X}^q_\Lambda\overline{X}^q_{\Lambda^c})}}
        \\
        =\sum_{\substack{\Gamma\subset\cC^q(\Lambda)\\\text{compatible}\\\text{external}}}\prod_{\gamma\in\Gamma}
    \brackets[\Bigg]{
        \frac{e^{-\beta H_{\supp{\gamma}}(X_{\supp{\gamma}} X_{\supp{\gamma}^c})}}{e^{-\beta H_{\supp{\gamma}}(\overline{X}^q_{\supp{\gamma}} \overline{X}^q_{\supp{\gamma}^c})}}
        \prod_{p=1}^{r}\frac{\bZ^{p}(\interior{p}{\gamma})}{e^{-\beta H_{\interior{p}{\gamma}}(\overline{X}^q_{\interior{p}{\gamma}}\overline{X}^q_{\interior{p}{\gamma}^c})}}
    }.
    \end{multlined}
\end{equation*}
Introducing the weight
\begin{equation*}
    w^q(\gamma):=\frac{e^{-\beta H_{\supp{\gamma}}(X_{\supp{\gamma}} X_{\supp{\gamma}^c})}}{e^{-\beta H_{\supp{\gamma}}(\overline{X}^q_{\supp{\gamma}} \overline{X}^q_{\supp{\gamma}^c})}}
    \prod_{p}\frac{\bZ^{p}(\interior{p}{\gamma})}{\bZ^{q}(\interior{p}{\gamma})},
\end{equation*}
we rewrite $\Xi^q(\Lambda)$ as
\begin{equation}
    \label{eqn: external contour recursion}
    \Xi^q(\Lambda) =\sum_{\substack{\Gamma\subset\cC^q(\Lambda)\\\text{compatible}\\\text{external}}}\prod_{\gamma\in\Gamma}
    \brackets[\Bigg]{w^q(\gamma)\prod_{p}\frac{\bZ^q(\interior{p}{\gamma})}{e^{-\beta H_{\interior{p}{\gamma}}(\overline{X}^q_{\interior{p}{\gamma}}\overline{X}^q_{\interior{p}{\gamma}^c})}}}.
\end{equation}
We now apply~\eqref{eqn: external contour recursion} recursively to each
normalized interior partition function.
The iteration terminates when $\interior{p}{\gamma}$ is too small to \textit{properly} contain any other contour, in which case $\bZ^q(\interior{p}{\gamma})=e^{-\beta H_{\interior{p}{\gamma}}(\overline{X}^q_{\interior{p}{\gamma}}\overline{X}^q_{\interior{p}{\gamma}^c})}$.
Hence, we obtain the \textbf{contour representation} of the maximal model:
\begin{equation}
    \label{eqn: contour representation}
    \Xi^q(\Lambda) 
    =\sum_{\substack{\Gamma\subset\cC^q(\Lambda)\\\text{compatible}}}\prod_{\gamma\in\Gamma}w^q(\gamma).
\end{equation}

\subsection{Truncated contour model}
\label{sec: truncated contour model}

For a periodic ground state $\overline X^q$ and a contour $\gamma\in\cC^q$, define the truncated weight of $\gamma$ by
\begin{equation*}\label{eq:w-hat}
    \hat{w}^q(\gamma) = \min \set*{w^q(\gamma), e^{-\beta\tau\abs{\supp{\gamma}}/3}};
\end{equation*}
a contour $\gamma$ is called \textbf{stable} if $w^q(\gamma) \le e^{-\beta\tau\abs{\supp{\gamma}}/3}$, and \textbf{unstable} otherwise.
Define the truncated contour partition function
\begin{equation}
    \label{eqn: truncated contour model}
    \hat{\Xi}^q(\Lambda) := \sum_{\substack{\Gamma\subset\cC^q(\Lambda)\\\text{compatible}}} \prod_{\gamma \in \Gamma} \hat{w}^{q}(\gamma)
\end{equation}
and the corresponding pressure
\begin{equation}\label{eq:free_energy}
    s_{q} := \lim_{\Lambda \Uparrow \Z^d} \frac{1}{\beta\abs{\Lambda}} \ln \hat{\Xi}^q(\Lambda).
\end{equation}
Define the energy density of the periodic ground state $\overline X^q$ by
\begin{equation*}
    e_{q} = \lim_{\Lambda \Uparrow \Z^d} \frac{1}{\abs{\Lambda}} H_{\Lambda}(\overline{X}^{q}_\Lambda \overline{X}^{q}_{\Lambda^c}).
\end{equation*}
Finally, set 
\begin{equation}
    \label{eqn: metastable free energies}
    h_q := e_q - s_q,\quad
    h_0 := \min_{q} h_q,\quad\text{and}\quad
    a_q := h_q - h_0.
\end{equation}
We say that a periodic ground state $\overline X^q$ is \textbf{stable} if $a_q = 0$.
We stress that stability is a $\beta$-dependent property: \textit{a priori}, a periodic ground state stable for some value of $\beta$ may become unstable if this value changes.

Our overall goal is to establish crystallization, which will be shown in the next section, Theorem~\ref{thm:volume allocation crystallization}. This result is implied by the next theorem, which is
a slightly modified version of Zahradn\'ik's result in \cite[Theorem~1.7]{zahradnik1984alternate}.

\begin{theorem}[]\label{thm:applyPS}
There exists $\beta_0>0$ such that for all $\beta\ge\beta_0$ and all periodic ground states $\overline X^q$, the following statements hold. 
    \begin{enumerate}
        \item If $\gamma\in\cC^q$ is unstable, then
        \begin{equation}\label{eq:unstable-bound1}
              a_q \abs{\interior{}{\gamma}} \ge \frac{\tau}{3} \abs{\supp{\gamma}}.
        \end{equation}
        In particular, if the periodic ground state $\overline X^q$ is stable, then every contour of type $q$ is stable.
        \item For all bounded, $\sF$-measurable regions $\Lambda \subset \R^d$ with connected complement,
        \begin{align}
            \mathbf{Z}^{q}(\Lambda) {}&\ge e^{-\beta h_q\abs{\Lambda} - \abs{\exb \Lambda}},\label{eq:pslow} \\
            \mathbf{Z}^{q}(\Lambda) {}&\le e^{-\beta h_0 \abs{\Lambda} + 2 \abs{\exb \Lambda}} \label{eq:psup},
        \end{align}
    \end{enumerate}
where $\exb \Lambda:=\partial^{\mathrm{in}}\Lambda^c$.
\end{theorem}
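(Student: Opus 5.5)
We argue by induction on the size of regions, following Zahradn\'ik. The inductive hypothesis at level $N$ is that statement~(1) holds for all contours $\gamma$ with $\abs{V(\gamma)}\le N$, and that~\eqref{eq:pslow} and~\eqref{eq:psup} hold for all admissible $\Lambda$ with $\abs{\Lambda}\le N$.

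The first ingredient is a convergent cluster expansion for the truncated model~\eqref{eqn: truncated contour model}. Since $\hat w^q(\gamma)\le e^{-\beta\tau\abs{\supp{\gamma}}/3}$, and the number of connected supports of size $m$ containing a given cell is at most $c_d^m$ (with $c_d$ depending only on $d$, because adjacency is $\norm{\cdot}_\infty$-adjacency), the Koteck\'y--Preiss criterion holds once $\beta\tau/3$ exceeds an explicit multiple of $\ln c_d$ plus the logarithm of the number of local patterns per cell. This fixes $\beta_0$. It yields
\begin{equation*}
\ln\hat\Xi^q(\Lambda)=\beta s_q\abs{\Lambda}+\Delta^q(\Lambda),\qquad \abs{\Delta^q(\Lambda)}\le \abs{\exb\Lambda},
\end{equation*}
together with $\abs{s_q}\le e^{-\beta\tau/4}$ (say). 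Since $e_q$ is exactly the energy per cell of $\overline X^q$, this gives $\hat\Xi^q(\Lambda)e^{-\beta H_\Lambda(\overline X^q)}=e^{-\beta h_q\abs{\Lambda}+\Delta^q}$.

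For~\eqref{eq:pslow} at level $N$, every contour in $\cC^q(\Lambda)$ has $\abs{V(\gamma)}<N$. By Theorem~\ref{thm:applyPS}(1) at lower levels together with~\eqref{eq:pslow} and~\eqref{eq:psup} for the interiors, we would like $w^q(\gamma)\ge\hat w^q(\gamma)$; but actually $\hat w\le w$ holds by definition. Hence $\Xi^q(\Lambda)\ge\hat\Xi^q(\Lambda)$ term by term, since all weights are positive. Combining with the expansion gives~\eqref{eq:pslow} directly.

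For~\eqref{eq:psup}, split the external contours in~\eqref{eqn: external contour recursion} into stable and unstable ones. For the interiors, use the inductive upper bound $\bZ^p(\interior{p}{\gamma})\le e^{-\beta h_0\abs{\interior{p}{\gamma}}+2\abs{\exb\interior{p}{\gamma}}}$. For the support, use the Peierls condition (Proposition~\ref{prop:Peierls condition}) to get $e^{-\beta H_{\supp\gamma}}\le e^{-\beta e_q\abs{\supp\gamma}-\beta\tau\abs{\supp\gamma}}$. Since $e_q\ge h_0-\abs{s_q}$ and boundary terms satisfy $\abs{\exb\interior{}{\gamma}}\le C\abs{\supp\gamma}$, the loss $\beta\tau$ absorbs $\beta\abs{s_q}+2C$ per support cell. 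Each external contour therefore contributes at most $e^{-\beta h_0\abs{V(\gamma)}-\beta\tau\abs{\supp\gamma}/2}$. The outer region $\ext\Gamma\cap\Lambda$ is then bounded through a stable-only truncated partition function, with $e_q\ge h_0$ up to boundary terms. Resumming the resulting polymer gas, whose activities are again small, produces the factor $e^{2\abs{\exb\Lambda}}$. The constant $2$ must be tracked carefully: one $\abs{\exb\Lambda}$ comes from the cluster expansion and the other from the gluing.

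For statement~(1), suppose $\gamma$ has type $q$ and $\abs{V(\gamma)}=N+1$. Then
\begin{equation*}
w^q(\gamma)\le e^{-\beta\tau\abs{\supp\gamma}}\prod_p \frac{\bZ^p(\interior{p}{\gamma})}{\bZ^q(\interior{p}{\gamma})}.
\end{equation*}
Bound the numerator by~\eqref{eq:psup} and the denominator by~\eqref{eq:pslow}, both available at level $N$. This gives
\begin{equation*}
w^q(\gamma)\le \exp\bigl(-\beta\tau\abs{\supp\gamma}+\beta a_q\abs{\intr\gamma}+3\abs{\exb\intr\gamma}\bigr).
\end{equation*}
Since $3\abs{\exb\intr\gamma}\le 3C\abs{\supp\gamma}\le \beta\tau\abs{\supp\gamma}/3$ for $\beta\ge\beta_0$, instability forces $\beta a_q\abs{\intr\gamma}\ge\beta\tau\abs{\supp\gamma}/3$, which is~\eqref{eq:unstable-bound1}.

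The subtle point is that $s_q$ and $a_q$ are defined through the truncated model at all scales, so there is no circularity: the induction only uses bounds on strictly smaller regions. The main obstacle will be the upper bound~\eqref{eq:psup}, specifically controlling the unstable external contours with a uniform constant $2$. There, I would follow Zahradn\'ik's trick of comparing with $e^{-\beta h_0\abs\Lambda}$ rather than $e^{-\beta h_q\abs\Lambda}$, so that the instability of a contour of type $q$ is compensated by $a_q\ge0$.
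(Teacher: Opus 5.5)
Your argument for \eqref{eq:pslow} (via $\hat w^q\le w^q$ and the expansion of $\hat\Xi^q$) and for statement~(1) (Peierls plus the two interior bounds, absorbing $3\abs{\exb\intr{\gamma}}\le 3\abs{\supp{\gamma}}$ into $\beta\tau$) matches the paper. Inducting on $\abs{V(\gamma)}$ and $\abs{\Lambda}$ instead of on nesting level is harmless.

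\textbf{The gap is in \eqref{eq:psup}.} You bound the exterior by a truncated partition function and then use only $h_q\ge h_0$. That leaves
\[
\bZ^q(\Lambda)\le e^{-\beta h_0\abs{\Lambda}+\abs{\exb\Lambda}}\sum_{\Gamma}\prod_{\gamma\in\Gamma}e^{-(\beta\tau-3)\abs{\supp{\gamma}}},
\]
with the sum over compatible external unstable contours in $\Lambda$. This is the paper's \eqref{eq:Z_bd} with the factor $e^{-\beta a_q\abs{\ext{\Gamma}\cap\Lambda}}$ thrown away, and without that factor the last step fails. The sum is the partition function of a polymer gas with positive activities. Its logarithm is a positive pressure times $\abs{\Lambda}$ plus a boundary term, so it is not bounded by $e^{\abs{\exb\Lambda}}$. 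For instance, configurations made of many far-apart translates of a single unstable contour already make it grow exponentially in $\abs{\Lambda}$. Small activities make the pressure small but not zero. When $a_q=0$ the sum is $1$, since (1) at lower levels rules out unstable contours of type $q$. But when $a_q>0$ your resummation step has no justification. Your closing remark that ``instability is compensated by $a_q\ge0$'' does not supply the mechanism, and your $\beta_0$, fixed by the cluster-expansion criterion alone, is not enough for it.

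\textbf{What is missing is the following.} Keep $e^{-\beta a_q\abs{\ext{\Gamma}\cap\Lambda}}$ in the exterior; it comes from $h_{q,\mathrm{small}}-h_0\ge a_q$. Use it to cancel the bulk entropy of the large contours, in three steps.
\begin{enumerate}
\item Show every large contour is geometrically large, $\abs{\supp{\gamma}}\ge\max\{3^d,(\tau/3a_q)^{d-1}\}$. This follows from \eqref{eq:unstable-bound1} at lower levels together with the isoperimetric inequality.
\item Introduce the auxiliary model with weights $e^{-(\beta\tau-5)\abs{\supp{\gamma}}}$ on such contours. Its pressure satisfies $s_{q,\star}\le\beta^{-1}e^{-\beta\tau\max\{3^d,(\tau/3a_q)^{d-1}\}/6}\le a_q$. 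The second inequality is an extra condition on $\beta_0$, uniform in $a_q>0$; this is \eqref{eqn: parameters 3}.
\item Replace $e^{-\beta a_q\abs{\ext{\Gamma}\cap\Lambda}}$ by $e^{-\beta s_{q,\star}\abs{\ext{\Gamma}\cap\Lambda}}$. Absorb $e^{\beta s_{q,\star}\abs{\intr{\gamma}}}$ into $\Xi^q_\star(\intr{\gamma})$ using the finite-volume lower bound. Then resum to $e^{-\beta s_{q,\star}\abs{\Lambda}}\Xi^q_\star(\Lambda)\le e^{\abs{\exb\Lambda}}$; this is where the second $\abs{\exb\Lambda}$ comes from.
\end{enumerate}

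\textbf{A related point.} Splitting external contours into stable and unstable does not leave a stable-only model outside $\Gamma$. A stable external contour can enclose unstable ones, whose interior factors $\bZ^q(\interior{p}{\gamma})$ the truncated expansion does not control. The paper instead isolates the \emph{large} external contours, meaning those that are unstable or enclose an unstable contour. The exterior is then exactly $\bZ^q_{\mathrm{small}}$, and the geometric-largeness claim in step 1 covers both kinds of large contour.
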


To claim crystallization for the maximal hard-core model with chemical potentials $\vec\mu$, we would like
to find an explicit bound for $\beta_0$ in this theorem. The proofs in the literature do not pursue this
goal explicitly, and such bounds are not readily available. For this reason, we include a proof that the claims of Theorem~\ref{thm:applyPS} hold if $\beta_0$ satisfies the inequalities in Assumption~\ref{asm: parameters}, which we state next.
\begin{assumption}
    \label{asm: parameters}
    Let $\kappa$ denote the number of distinct restrictions of a maximal configuration to a fixed supercell, and recall the Peierls constant $\tau>0$.
    We assume that a constant $\beta_0>0$ has been chosen such that the following conditions are verified:
    \begin{equation}
        \label{eqn: parameters 1}
        \beta_0\tau\ge 9,
    \end{equation}
    \begin{equation}
        \label{eqn: parameters 2}
        \sum_{k=3^d}^\infty e^{-(\frac{1}{6}\beta_0\tau-1)k} \kappa^k (3^d-1)^{2k} \le 3^{-d},
    \end{equation}
    \begin{equation}
        \label{eqn: parameters 3}
        \frac{1}{\beta_0}\exp{-\frac{1}{6}\beta_0\tau \max \left \{ 3^d, \left(\frac{\tau}{3a}\right)^{d-1} \right \}} \le a \quad \text{for all } a>0.
    \end{equation}
\end{assumption}

\begin{remark}
    Condition~\eqref{eqn: parameters 1} will be used twice, in~\eqref{eqn: parameter condition usage 1} and~\eqref{eqn: parameter condition usage 2};
    Condition~\eqref{eqn: parameters 2} used once, in~\eqref{eqn: cluster expansion closure}, and Condition~\eqref{eqn: parameters 3} used once, in~\eqref{eqn: convoluted assumption}.
\end{remark}

The rest of this section is devoted to the proof of Theorem~\ref{thm:applyPS}.

\subsection{Cluster expansion}
Cluster expansions and their convergence are presented in many sources; for instance, see \cite[Chapter~5]{friedli2017statistical}.
Here, we will rely on the criterion proved by Ueltschi in \cite{ueltschi2004cluster}. We first cite it and then state its implications for the contour models used below.

\subsubsection{Ueltschi's criterion}

Let $\cC$ be a finite or countable set whose elements are called \textbf{polymers}.
The weight of a polymer is specified by a function $w:\cC\to\R$.
Let $\sim$ be an \textit{irreflexive} and \textit{symmetric} relation on $\cC$ called \textbf{compatibility}.
The (formal) \textbf{polymer partition function} is defined as
\begin{equation*}
    \Xi:=\sum_{\Gamma\subset\cC}\brackets[\Bigg]{\prod_{\gamma\in \Gamma}w(\gamma)}\brackets[\Bigg]{\prod_{\set*{\gamma,\gamma'}\subset \Gamma}\indicator{\gamma\sim \gamma'}}.
\end{equation*}

A finite multi-subset $M$ of $\cC$ is a \textbf{cluster} if it cannot be partitioned into two nonempty sub-(multi)sets $M_1,M_2$ such that every polymer in $M_1$ is compatible with every polymer in $M_2$.
Denote by $\cM$ the collection of all clusters.
For $n\ge 1$, let $G_n$ denote the complete graph on $n$ vertices, and write $G\subset G_n$ if $G$ is a spanning subgraph of $G_n$.
Define the function $\varphi:\cM\to\R$ by
\begin{equation*}
    \varphi(M):=\prod_{\gamma\in\cC}\frac{1}{n_\gamma(M)!}\sum_{\substack{G\subset G_n\\\text{connected}}}\prod_{\set*{i,j}\in E(G)}(-\indicator{\gamma_i\not\sim \gamma_j})
\end{equation*}
for $M=\mset*{\gamma_1,\dots,\gamma_n}$, where $n_\gamma(M)$ denotes the multiplicity of $\gamma$ in $M$.

\begin{theorem}[{Cluster expansion,~\cite{ueltschi2004cluster}}]
    \label{thm: Ueltschi criterion}
    Suppose that there exist functions $a,b:\cC\to[0,\infty)$ such that
    \begin{equation}
        \label{eqn: Ueltschi criterion}
        \sum_{\gamma\not\sim \gamma_\ast}e^{a(\gamma)+b(\gamma)}\abs{w(\gamma)}\le a(\gamma_\ast)\quad\text{for every }\gamma_\ast\in\cC.
    \end{equation}
    Then, for all $\gamma_1\in\cC$,
    \begin{equation}
        \label{eqn: cluster bound with one anchor}
        \sum_{M\ni \gamma_1}\abs{\varphi(M)}n_{\gamma_1}(M)\prod_{\gamma\in M\setminus\mset*{\gamma_1}}\left[e^{b(\gamma)}\abs{w(\gamma)}\right]\le e^{a(\gamma_1)}.
    \end{equation}
    If, in addition to~\eqref{eqn: Ueltschi criterion}, it holds that
    \begin{equation*}
        \sum_{\gamma\in\cC}e^{a(\gamma)}\abs{w(\gamma)}<\infty,
    \end{equation*}
    then $\Xi$ is well-defined, $\Xi>0$, and its logarithm admits the absolutely convergent series expansion
    \begin{equation*}
        \ln\Xi=\sum_{M\in\cM}\varphi(M)\prod_{\gamma\in M}w(\gamma).
    \end{equation*}
\end{theorem}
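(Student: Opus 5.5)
The plan follows the tree-graph route of Fernández--Procacci and Ueltschi rather than an induction on partition functions. First I would reduce to a combinatorial estimate on $\varphi$. For a cluster $M=\mset*{\gamma_1,\dots,\gamma_n}$, let $H_M$ be the incompatibility graph on $\set{1,\dots,n}$, with $\set{i,j}$ an edge iff $\gamma_i\not\sim\gamma_j$. Then $n!\prod_\gamma n_\gamma(M)!^{-1}\cdot\varphi(M)$ is, up to the labelling factor, the Ursell coefficient $\sum_{G\subset H_M\text{ connected}}(-1)^{|E(G)|}$. The Penrose (or Rota) tree-graph inequality gives
\begin{equation*}
\Bigl|\sum_{\substack{G\subset H_M\\ \text{connected}}}(-1)^{\abs{E(G)}}\Bigr|\le \#\set{\text{spanning trees of } H_M}.
\end{equation*}
Passing from multisets to ordered $n$-tuples, the left side of~\eqref{eqn: cluster bound with one anchor} is then bounded by
\begin{equation*}
\sum_{n\ge1}\frac{1}{(n-1)!}\sum_{\gamma_2,\dots,\gamma_n\in\cC}\ \sum_{T}\ \prod_{i=2}^n e^{b(\gamma_i)}\abs{w(\gamma_i)}.
\end{equation*}
Here $T$ ranges over trees on $\set{1,\dots,n}$ in which every edge joins incompatible polymers, and the factor $n_{\gamma_1}(M)$ exactly compensates for the choice of which copy of $\gamma_1$ is placed at label $1$.

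Second, I would bound this tree sum by induction on the depth of the tree rooted at vertex $1$. Let $\Theta_N(\gamma_1)$ denote the same sum restricted to trees of depth at most $N$. The claim to prove is $\Theta_N(\gamma_1)\le e^{a(\gamma_1)}$ for all $N$ and all $\gamma_1$; the case $N=0$ is the single-vertex tree, with value $1\le e^{a}$. For the step, decompose a rooted tree into the $k$ children of the root, each carrying a subtree of depth at most $N-1$. After the relabelling combinatorics, where the multinomial coefficients cancel the factorials, this gives
\begin{equation*}
\Theta_N(\gamma_1)\le\sum_{k\ge0}\frac{1}{k!}\Bigl(\sum_{\gamma\not\sim\gamma_1}e^{b(\gamma)}\abs{w(\gamma)}\,\Theta_{N-1}(\gamma)\Bigr)^k
\le\exp\Bigl(\sum_{\gamma\not\sim\gamma_1}e^{a(\gamma)+b(\gamma)}\abs{w(\gamma)}\Bigr)\le e^{a(\gamma_1)},
\end{equation*}
using the induction hypothesis and then~\eqref{eqn: Ueltschi criterion}. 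Monotone convergence in $N$ then yields~\eqref{eqn: cluster bound with one anchor}. I expect the main obstacle to be the bookkeeping in this step: matching the labelled-tree counting with the symmetry factors $1/n_\gamma(M)!$ in $\varphi$, and handling repeated polymers. Irreflexivity means $\gamma\not\sim\gamma$, so self-incompatibility edges are allowed and this is consistent. I would do this step carefully in the ordered-tuple formulation.

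Finally, for the second part, multiply~\eqref{eqn: cluster bound with one anchor} with $b\equiv0$ by $\abs{w(\gamma_1)}$ and sum over $\gamma_1$. Since $\sum_{\gamma_1}n_{\gamma_1}(M)\ge1$ for every cluster, this gives
\begin{equation*}
\sum_{M}\abs{\varphi(M)}\prod_{\gamma\in M}\abs{w(\gamma)}\le\sum_\gamma e^{a(\gamma)}\abs{w(\gamma)}<\infty,
\end{equation*}
so the cluster series converges absolutely. $\Xi$ is well defined because $\sum_\Gamma\prod_{\gamma\in\Gamma}\abs{w(\gamma)}\le\exp\bigl(\sum_\gamma\abs{w(\gamma)}\bigr)<\infty$. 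For finite $\cC$, the identity $\ln\Xi=\sum_M\varphi(M)\prod_{\gamma\in M} w(\gamma)$ is the standard Mayer/exponential-formula identity for formal power series in the weights. It holds as an identity of analytic functions once the series converges, which follows by scaling $w\mapsto zw$ with $\abs z\le1$ and analytic continuation from small $z$, where $\Xi(z)\neq0$ for all such $z$ because it equals the exponential of a convergent series. Exhausting a countable $\cC$ by finite subsets and applying dominated convergence, justified by the absolute bound above, transfers the identity to the limit. Positivity $\Xi>0$ follows since $\Xi$ is the exponential of a real convergent series.
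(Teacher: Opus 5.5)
Your proposal is correct and is essentially Ueltschi's own argument, which the paper cites rather than reproves: the tree-graph inequality, then induction on the depth of the rooted tree giving $\Theta_N(\gamma_1)\le e^{a(\gamma_1)}$, then absolute summability and the exponential formula extended to countable $\cC$ by finite exhaustion. Your observation that $n_{\gamma_1}(M)$ absorbs the choice of which copy of $\gamma_1$ sits at the root is exactly the multiset/ordered-tuple equivalence the paper alludes to. One cosmetic slip: with the paper's normalization, the identity is $\varphi(M)\prod_{\gamma}n_\gamma(M)!=\sum_{G\subset H_M\text{ connected}}(-1)^{\abs{E(G)}}$, with no extra factor $n!$; your subsequent ordered-tuple bound with the prefactor $1/(n-1)!$ is already the correct one.
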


We note that Ueltschi~\cite{ueltschi2004cluster} originally defined clusters as ordered tuples rather than multisets, but this formulation of his criterion, taken from~\cite{he2026rigorous} and inspired by~\cite[Section~5.4]{friedli2017statistical}, is equivalent to the original and slightly lightens the notation.

\subsubsection{Application to contour models}

Below, contours of a fixed type $q$ play the role of polymers, and we take $\cC=\cC^q$. The collection of all clusters will be denoted $\cM^q$ below to emphasize that the contours forming the clusters are of type $q$.
\begin{theorem} \label{thm:FV7.29}
    Under Assumption~\ref{asm: parameters}, let $\beta\ge\beta_0$, let $\overline X^q$ be a periodic ground state, and let $\widetilde{w}^q:\cC^q\to\R$ be a translation-invariant function such that
    \begin{equation}\label{eq:stable_w}
        \abs{\widetilde{w}^q(\gamma)}
        \le e^{-\beta\tau\abs{\supp{\gamma}}/3},\quad\text{for all }\gamma\in\cC^q.
    \end{equation}
    Let $\widetilde{\Xi}^q(\Lambda)$ be the contour partition function obtained from~\eqref{eqn: contour representation} by replacing $w^q$ with $\widetilde{w}^q$.
    Then, the condition~\eqref{eqn: Ueltschi criterion} is verified with $\cC=\cC^q$ and
    \begin{equation*}
        a(\gamma):=\abs{\supp{\gamma}},\quad
        b(\gamma):=\beta\tau\abs{\supp{\gamma}}/6,\quad
        \text{for }\gamma\in\cC^q.
    \end{equation*}
    In particular, the pressure
    \begin{equation}
        \widetilde{s}_{q} := \lim_{\Lambda \Uparrow \Z^d} \frac{1}{\beta\abs{\Lambda}} \ln \widetilde{\Xi}^q(\Lambda)
    \end{equation}
    exists and is given by the absolutely convergent series 
    \begin{equation*}\label{eq:s_series}
        \widetilde{s}_q = \frac{1}{\beta}\sum_{\substack{M\in\cM^q \\\supp{M}\ni 0}} \frac{\varphi(M)}{\abs{\supp{M}}}\prod_{\gamma\in M}\widetilde{w}^q(\gamma),
    \end{equation*}
 where $\cM^q$ is the collection of all clusters of $\cC^q$, and $\supp{M} := \bigcup_{\gamma\in M} \supp{\gamma}$. 
    In addition, for all bounded, $\sF$-measurable regions $\Lambda$,
    \begin{equation}
        \label{eqn: bound on the finite-volume correction to the pressure}
        \abs{\frac{1}{\beta\abs{\Lambda}}\ln\widetilde{\Xi}^q(\Lambda)-\widetilde{s}_q}
        \le\frac{\abs{\partial^{\mathrm{in}} \Lambda}}{\beta 3^d \abs{\Lambda}}.
    \end{equation}
\end{theorem}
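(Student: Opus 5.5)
Because the polymers are contours of type $q$ with weights bounded by $e^{-\beta\tau\abs{\supp{\gamma}}/3}$, the plan is to verify Ueltschi's criterion~\eqref{eqn: Ueltschi criterion} with the stated $a,b$ through a counting argument, and then read off the pressure and the finite-volume bound from Theorem~\ref{thm: Ueltschi criterion}.

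\emph{Step 1: counting contours.} Fix $\gamma_\ast$. A contour $\gamma\not\sim\gamma_\ast$ has a support whose closure meets the closure of $\supp{\gamma_\ast}$, so some cell of $\supp{\gamma}$ lies in the $1$-neighbourhood of $\supp{\gamma_\ast}$. This neighbourhood has at most $3^d\abs{\supp{\gamma_\ast}}$ cells. I would therefore bound
\begin{equation*}
\sum_{\gamma\not\sim\gamma_\ast}e^{a(\gamma)+b(\gamma)}\abs{\widetilde w^q(\gamma)}
\le 3^d\abs{\supp{\gamma_\ast}}\sup_{K}\sum_{\gamma:\,\supp{\gamma}\ni K} e^{-(\frac16\beta\tau-1)\abs{\supp{\gamma}}},
\end{equation*}
using $a+b-\beta\tau\abs{\supp\gamma}/3=-(\beta\tau/6-1)\abs{\supp\gamma}$. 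Next I would count the contours with $\abs{\supp{\gamma}}=k$ whose support contains a given cell $K$. The support is a connected set of cells under $*$-adjacency, with $3^d-1$ neighbours per cell, so a spanning-tree walk bounds the number of such supports by $(3^d-1)^{2k}$. Each cell then carries one of at most $\kappa$ restrictions of $X$, which gives the factor $\kappa^k$. Finally, every contour has at least $3^d$ cells: an incorrect cell makes its whole $\onion[1]{\cdot}$ neighbourhood non-correct in a compatible way. The only step I expect to need care is this last claim; I would argue that a cell adjacent to an incorrect cell cannot itself be correct unless the configuration agrees with a ground state on overlapping blocks, and if that fails I would adjust the lower summation limit. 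With these counts, \eqref{eqn: parameters 2} shows the sum is at most $3^{-d}\cdot 3^d\abs{\supp{\gamma_\ast}}=a(\gamma_\ast)$, which is~\eqref{eqn: Ueltschi criterion}.

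\emph{Step 2: the pressure series.} The criterion holds, and in any finite $\Lambda$ the sum $\sum e^{a}\abs{\widetilde w^q}$ is finite, so Theorem~\ref{thm: Ueltschi criterion} gives
\begin{equation*}
\ln\widetilde\Xi^q(\Lambda)=\sum_{M\in\cM^q,\ \supp{M}\subset\Lambda}\varphi(M)\prod_{\gamma\in M}\widetilde w^q(\gamma).
\end{equation*}
I would distribute the weight of each cluster equally over the cells of $\supp{M}$ and use translation invariance of $\widetilde w^q$. This writes $\ln\widetilde\Xi^q(\Lambda)$ as $\sum_{K\subset\Lambda}$ of the anchored series at $K$, with the clusters not properly inside $\Lambda$ removed. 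The anchored series at $0$ converges absolutely by~\eqref{eqn: cluster bound with one anchor}. Dividing by $\beta\abs\Lambda$ and letting $\Lambda\Uparrow\Z^d$ yields $\widetilde s_q$ in the stated form.

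\emph{Step 3: the finite-volume bound.} The difference $\beta\abs\Lambda(\frac{1}{\beta\abs\Lambda}\ln\widetilde\Xi^q-\widetilde s_q)$ consists of the anchored clusters, with anchor in $\Lambda$, whose support is not properly contained in $\Lambda$. Each such cluster contains a polymer touching $\partial^{\mathrm{in}}\Lambda$ or lying outside $\Lambda$. I would bound this sum by anchoring at a cell of $\partial^{\mathrm{in}}\Lambda$ and using~\eqref{eqn: cluster bound with one anchor} with the extra decay factor $e^{b}$, together with the same combinatorial sum as in Step~1, to get a contribution of at most $3^{-d}$ per boundary cell. This gives~\eqref{eqn: bound on the finite-volume correction to the pressure}. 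I expect the main obstacle to be getting exactly the constant $3^{-d}$ here, which requires careful use of the $e^{b}$ factor against the cluster size.
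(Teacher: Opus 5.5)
Your proposal follows the paper's proof step for step. It uses the same $3^d\abs{\supp{\gamma_\ast}}$ neighbourhood count closed by the $\kappa^k(3^d-1)^{2k}$ enumeration and \eqref{eqn: parameters 2}, and the same equal redistribution of each cluster over the cells of $\supp{M}$. It also controls the boundary error by anchoring at $\partial^{\mathrm{in}}\Lambda$, which is legitimate because $\supp{M}$ is connected. There the constant $3^{-d}$ comes directly from $\abs{\widetilde w^q(\gamma)}e^{a(\gamma)}\le e^{-(\frac16\beta\tau-1)\abs{\supp{\gamma}}}$ and \eqref{eqn: cluster expansion closure}, so you do not need the factor $e^{b}$.

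One caveat concerns $\abs{\supp{\gamma}}\ge 3^d$, which the paper simply asserts. The reason you suggest is false as stated: a cell adjacent to an incorrect cell can be correct, for example along a wall between two phases. A sound justification starts from the fact that $X$ must differ from $\overline X^{q}$ on some cell $K_0\subset V(\gamma)$, which forces $\onion[1]{K_0}\subset V(\gamma)$. When $\intr{\gamma}\ne\emptyset$, supplement this by a count of the outer boundary of an interior component.
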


\begin{proof}
    We first verify Ueltschi's criterion.
    Let $\gamma_\ast\in\cC^q$.
    Then,
    \begin{equation*}
        \sum_{\substack{\gamma\in\cC^q \\ \gamma\not\sim\gamma_\ast}}e^{a(\gamma)+b(\gamma)}\abs{\widetilde{w}^q(\gamma)}
        \le 3^d\abs{\supp{\gamma_\ast}}\sum_{\substack{\gamma\in\cC^q\\\supp{\gamma}\ni 0}} e^{-(\frac{1}{6}\beta\tau-1)\abs{\supp{\gamma}}}.
    \end{equation*}
    We note that the minimum number of supercells in a contour is $3^d$ and, by~\cite[(3.39)]{friedli2017statistical}, the number of contours consisting of $k$ supercells and containing $0$ in their supports is bounded by $(3^d-1)^{2k}$.
    Hence,
    \begin{equation}
        \label{eqn: cluster expansion closure}
        \sum_{\substack{\gamma\in\cC^q\\\supp{\gamma}\ni 0}} e^{-(\frac{1}{6}\beta\tau-1)\abs{\supp{\gamma}}}
        \le \sum_{k=3^d}^\infty e^{-(\frac{1}{6}\beta\tau-1)k} \kappa^k (3^d-1)^{2k}
        \le 3^{-d},
    \end{equation}
    where the last inequality follows from~\eqref{eqn: parameters 2} since $\beta\ge\beta_0$, so the condition~\eqref{eqn: Ueltschi criterion} is verified. 

    Let $\Lambda$ be a bounded, $\sF$-measurable region.
    Then, using Theorem~\ref{thm: Ueltschi criterion} and the identity
    \begin{equation*}
        \frac{1}{|\supp{M}|} \sum_{i \in \Lambda} \indicator{i \in \supp{M}} = 1,
        \quad
        \text{for }\supp{M} \subset \Lambda,
    \end{equation*}
    we have
    \begin{equation*}
        \begin{multlined}
            \ln \widetilde{\Xi}^q(\Lambda) 
            = \sum_{v\in\Lambda\cap\cL^F}\sum_{\substack{M\in\cM^q(\Lambda)\\\supp{M}\ni v}} \frac{\varphi(M)}{\abs{\supp{M}}}\prod_{\gamma\in M}\widetilde{w}^q(\gamma)
            \\
            = \sum_{v\in\Lambda\cap\cL^F} \sum_{\substack{M\in\cM^q\\\supp{M}\ni v}} \frac{\varphi(M)}{\abs{\supp{M}}}\prod_{\gamma\in M}\widetilde{w}^q(\gamma)
            -\sum_{v\in\Lambda\cap\cL^F}\sum_{\substack{M\in\cM^q\setminus\cM^q(\Lambda)\\\supp{M}\ni v}} \frac{\varphi(M)}{\abs{\supp{M}}}\prod_{\gamma\in M}\widetilde{w}^q(\gamma).
        \end{multlined}
    \end{equation*}
    The second term is a boundary error: observe that
    \begin{equation*}
        \sum_{v\in\Lambda\cap\cL^F}\sum_{\substack{M\in\cM^q\setminus\cM^q(\Lambda)\\\supp{M}\ni v}} \frac{\abs{\varphi(M)}}{\abs{\supp{M}}}\prod_{\gamma\in M}\abs{\widetilde{w}^q(\gamma)}
        \le \sum_{v\in\partial^{\mathrm{in}}\Lambda\cap\cL^F}\sum_{\substack{M\in\cM^q\\\supp{M}\ni v}} \abs{\varphi(M)}\prod_{\gamma\in M}\abs{\widetilde{w}^q(\gamma)},
    \end{equation*}
    where, by~\eqref{eqn: cluster bound with one anchor} and~\eqref{eqn: cluster expansion closure}, the RHS is further bounded by
    \begin{equation*}
        \begin{multlined}
            \sum_{v\in\partial^{\mathrm{in}}\Lambda\cap\cL^F}\sum_{\substack{\gamma\in\cC^q\\\supp{\gamma}\ni v}} \abs{\widetilde{w}^q(\gamma)}
            \sum_{\substack{M\in\cM^q\\M\ni\gamma}} \abs{\varphi(M)} n_\gamma(M) \prod_{\gamma'\in M\setminus\mset{\gamma}}\abs{\widetilde{w}^q(\gamma')}
            \\
            \le \sum_{v\in\partial^{\mathrm{in}}\Lambda\cap\cL^F}\sum_{\substack{\gamma\in\cC^q\\\supp{\gamma}\ni v}} \abs{\widetilde{w}^q(\gamma)} e^{a(\gamma)}
            \le 3^{-d}\abs{\partial^{\mathrm{in}}\Lambda}.
        \end{multlined}
    \end{equation*}
    On the other hand, the quantity
    \begin{equation*}
        \sum_{\substack{M\in\cM^q\\\supp{M}\ni v}} \frac{\varphi(M)}{\abs{\supp{M}}}\prod_{\gamma\in M}\widetilde{w}^q(\gamma)
    \end{equation*}
    is independent of $v\in\bL$ and $\Lambda$ and yields an expression for $\widetilde{s}_q$ up to the constant factor $\beta^{-1}$.
    The theorem follows.
\end{proof}

\begin{corollary}[{\cite[Lemma~7.31]{friedli2017statistical}}]\label{thm:FV7.31}
    In the setting of Theorem~\ref{thm:FV7.29}, for all $L\ge 3^d$, it holds that
    \begin{equation*}
        \sum_{\substack{M\in\cM^q\\\supp{M}\ni 0\\\abs{\supp{M}}\ge L}}\abs{\varphi(M)}\prod_{\gamma\in M}\abs{\widetilde{w}^q(\gamma)}
        \le e^{-\beta\tau L/6}.
    \end{equation*}
\end{corollary}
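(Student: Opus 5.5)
The plan is to split off an exponential factor from the large clusters and bound the remainder with the one-anchor estimate~\eqref{eqn: cluster bound with one anchor}. Theorem~\ref{thm:FV7.29} verifies Ueltschi's criterion with $a(\gamma)=\abs{\supp{\gamma}}$ and $b(\gamma)=\beta\tau\abs{\supp{\gamma}}/6$. Since $\abs{\widetilde w^q(\gamma)}\le e^{-\beta\tau\abs{\supp{\gamma}}/3}$, each weight factors as
\[
\abs{\widetilde w^q(\gamma)}\le e^{-\beta\tau\abs{\supp{\gamma}}/6}\cdot e^{b(\gamma)}\cdot e^{-\beta\tau\abs{\supp{\gamma}}/3}.
\]
Equivalently, $\abs{\widetilde w^q(\gamma)}=e^{-\beta\tau\abs{\supp\gamma}/6}\,\abs{\widetilde w^q(\gamma)}e^{\beta\tau\abs{\supp\gamma}/6}$, and the second factor is at most $e^{-\beta\tau\abs{\supp\gamma}/6}$. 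Applying this to every $\gamma\in M$ and using $\sum_{\gamma\in M}\abs{\supp{\gamma}}\ge\abs{\supp{M}}\ge L$, the product over $M$ acquires a factor $e^{-\beta\tau L/6}$. What remains is $\prod_{\gamma\in M}e^{b(\gamma)}\abs{\widetilde w^q(\gamma)}$, up to a bookkeeping issue: the product in~\eqref{eqn: cluster bound with one anchor} omits the anchor $\gamma_1$. To handle this I keep the anchor's weight untouched and extract the decay only from the other polymers; the anchor itself supplies $e^{-\beta\tau\abs{\supp{\gamma_1}}/6}$ via its extra factor.

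Next I anchor the sum. Every cluster $M$ with $\supp{M}\ni0$ contains some $\gamma_1$ with $\supp{\gamma_1}\ni0$, and $n_{\gamma_1}(M)\ge1$. Hence
\[
\sum_{\substack{M\ni0\\ \abs{\supp M}\ge L}}\abs{\varphi(M)}\prod_{\gamma\in M}\abs{\widetilde w^q(\gamma)}
\le e^{-\beta\tau L/6}\sum_{\gamma_1:\supp{\gamma_1}\ni0}\abs{\widetilde w^q(\gamma_1)}e^{\beta\tau\abs{\supp{\gamma_1}}/6}\sum_{M\ni\gamma_1}\abs{\varphi(M)}n_{\gamma_1}(M)\prod_{\gamma\in M\setminus\mset{\gamma_1}}e^{b(\gamma)}\abs{\widetilde w^q(\gamma)}.
\]
Because I only need the inequality $\abs{\widetilde w^q(\gamma)}\le e^{-\beta\tau(\abs{\supp\gamma})/6}e^{b(\gamma)}\abs{\widetilde w^q(\gamma)}$ for the non-anchors and the analogous inequality for the anchor, the exponent sums to at least $\abs{\supp M}\beta\tau/6\ge\beta\tau L/6$.

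By~\eqref{eqn: cluster bound with one anchor}, the inner sum is at most $e^{a(\gamma_1)}$. The remaining anchor sum is then at most
\[
\sum_{\supp{\gamma_1}\ni0}e^{-(\beta\tau/6-1)\abs{\supp{\gamma_1}}}\le 3^{-d}\le1
\]
by~\eqref{eqn: cluster expansion closure}. This gives the claimed bound, in fact with a spare factor $3^{-d}$.

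The main obstacle is the accounting of multiplicities and the overlap of supports. The inequality $\sum_{\gamma\in M}\abs{\supp\gamma}\ge\abs{\supp M}$ holds trivially since $\supp M$ is a union, with repeated polymers only helping. The remaining concern is that the anchor choice overcounts clusters, but overcounting only increases the right-hand side, so it is harmless for an upper bound. The hypothesis $L\ge3^d$ is needed only so that the statement is nonvacuous, since every contour has at least $3^d$ cells.
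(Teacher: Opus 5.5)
Your proof is correct and follows essentially the same route as the paper's. In both, you anchor at a contour $\gamma_\ast$ with $\supp{\gamma_\ast}\ni 0$ and multiply every factor of $M$ by $e^{b(\gamma)}e^{-b(\gamma)}=1$; the paper phrases this as the bound $\indicator{\abs{\supp{M}}\ge L}\le e^{-\beta\tau L/6}\prod_{\gamma\in M}e^{b(\gamma)}$. You then apply \eqref{eqn: cluster bound with one anchor} and close with \eqref{eqn: cluster expansion closure}. Your weight ``factorization'' is just this identity, so the detour through $e^{-\beta\tau\abs{\supp{\gamma}}/3}$ in your first paragraph is unnecessary, and the hypothesis $L\ge 3^d$ plays no role in the argument.
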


\begin{proof}
    Write
    \begin{equation*}
        \sum_{\substack{M\in\cM^q\\\supp{M}\ni 0\\\abs{\supp{M}}\ge L}}\abs{\varphi(M)}\prod_{\gamma\in M}\abs{\widetilde{w}^q(\gamma)}
        \le \sum_{\substack{\gamma_\ast\in\cC^q\\\supp{\gamma_\ast}\ni 0}} \abs{\widetilde{w}^q(\gamma_\ast)}
        \sum_{\substack{M\in\cM^q\\M\ni\gamma_\ast}} \indicator{\abs{\supp{M}}\ge L}\abs{\varphi(M)}\prod_{\gamma\in M\setminus\mset{\gamma_\ast}}\abs{\widetilde{w}^q(\gamma)}.
    \end{equation*}
    Inserting
    \begin{equation*}
        \indicator{\abs{\supp{M}}\ge L} 
        \le e^{-\beta\tau L/6}e^{b(\gamma_\ast)} \prod_{\gamma\in M\setminus\mset{\gamma_\ast}} e^{b(\gamma)}
    \end{equation*}
    into the above and using~\eqref{eqn: cluster bound with one anchor}, we bound the LHS by
    \begin{equation*}
        \begin{multlined}
            e^{-\beta\tau L/6} \sum_{\substack{\gamma_\ast\in\cC^q\\\supp{\gamma_\ast}\ni 0}} \abs{\widetilde{w}^q(\gamma_\ast)} e^{b(\gamma_\ast)}
            \sum_{\substack{M\in\cM^q\\M\ni\gamma_\ast}} \abs{\varphi(M)}\prod_{\gamma\in M\setminus\mset{\gamma_\ast}} e^{b(\gamma)}\abs{\widetilde{w}^q(\gamma)}
            \\
            \le e^{-\beta\tau L/6} \sum_{\substack{\gamma_\ast\in\cC^q\\\supp{\gamma_\ast}\ni 0}} e^{a(\gamma_\ast)+b(\gamma_\ast)}\abs{\widetilde{w}^q(\gamma_\ast)}
            \le e^{-\beta\tau L/6},
        \end{multlined}
    \end{equation*}
    as required.
\end{proof}

\subsection{Proof of Theorem~\ref{thm:applyPS}}

    For~\eqref{eq:pslow}, we have that
    \begin{align}
        \mathbf{Z}^q(\Lambda)
        {}&= e^{-\beta e_q\abs{\Lambda}} \Xi^q(\Lambda)
        {}&&
        \nonumber
        \\
        {}&\ge e^{-\beta e_{q}\abs{\Lambda}} \hat{\Xi}^{q}(\Lambda)
        {}&&
        \nonumber
        \\
        {}&\ge e^{-\beta e_{q}\abs{\Lambda}+\beta s_q\abs{\Lambda}-3^{-d}\abs{\partial^{\mathrm{in}}\Lambda}}
        {}&&\text{by Theorem~\ref{thm:FV7.29} with }\widetilde{w}^q\equiv\hat{w}^q
        \nonumber
        \\
        {}&\ge e^{-\beta h_{q}\abs{\Lambda}-\abs{\exb\Lambda}}
        {}&&\text{since $\abs{\partial^{\mathrm{in}}\Lambda}\le 3^d\abs{\exb\Lambda}$}.
    \end{align} 

    Define the \textbf{level} of a contour $\gamma$ as the maximum integer $n\ge 1$ such that there exists a sequence of contours $(\gamma_1=\gamma, \gamma_2, \dots, \gamma_n)$ such that $\supp{\gamma_{i+1}} \subset \interior{}{\gamma_i}$ for all $1\le i\le n-1$.
    Define the \textbf{level} of a bounded, $\sF$-measurable region $\Lambda$ as the maximum level of a contour $\gamma$ such that $\supp{\gamma} \cup \interior{}{\gamma} \subset \Lambda$; if no such contour exists, the level of $\Lambda$ is set to zero.
    The proofs of~\eqref{eq:unstable-bound1} and~\eqref{eq:psup} are by induction on the level of $\gamma$ and $\Lambda$. 

    For the base case, we check only~\eqref{eq:psup} for regions $\Lambda$ of level $0$.
    Since $\cC^q(\Lambda)=\emptyset$, the truncated contour model in $\Lambda$ coincides with the original one, so
    \begin{align}
        \bZ^q(\Lambda){}&=e^{-\beta e_q\abs{\Lambda}} \hat{\Xi}^q(\Lambda)
        {}&&
        \nonumber
        \\
        {}&\le e^{-\beta e_{q}\abs{\Lambda}+\beta s_q\abs{\Lambda}+3^{-d}\abs{\partial^{\mathrm{in}}\Lambda}}
        {}&&\text{by Theorem~\ref{thm:FV7.29} with }\widetilde{w}^q\equiv\hat{w}^q
        \nonumber
        \\
        {}&\le e^{-\beta h_{q}\abs{\Lambda}+\abs{\exb\Lambda}}
        {}&&\text{since $\abs{\partial^{\mathrm{in}}\Lambda}\le 3^d\abs{\exb\Lambda}$}.
        \label{eqn: proof of upper bound in the base case}
    \end{align}
    
    Suppose hereafter that~\eqref{eq:unstable-bound1} and~\eqref{eq:psup} hold for all contours and regions of level at most $n\ge 0$.

    Let $\gamma$ be an unstable contour of level $n+1$.
    We have that
    \begin{align}
        {}& e^{-\beta\tau\abs{\supp{\gamma}}/3}
        {}&&
        \nonumber
        \\
        \le {}& w^q(\gamma) = \frac{e^{-\beta H_{\supp{\gamma}}(X_{\supp{\gamma}} X_{\supp{\gamma}^c})}}{e^{-\beta H_{\supp{\gamma}}(\overline{X}^q_{\supp{\gamma}} \overline{X}^q_{\supp{\gamma}^c})}}\prod_{p} \frac{\bZ^{p}(\interior{p}{\gamma})}{\bZ^{q}(\interior{p}{\gamma})}  
        {}&&\text{since $\gamma$ is unstable}
        \nonumber
        \\
        \le {}& e^{-\beta\tau\abs{\supp{\gamma}}} \prod_{p} \frac{\exp{-\beta h_0 \abs{\interior{p}{\gamma}} + 2\abs{\exb \interior{p}{\gamma}}}}{\exp{-\beta h_q \abs{\interior{p}{\gamma}} - \abs{\exb \interior{p}{\gamma}}}}
        {}&&
            \end{align}
where the last line follows by the Peierls condition~\eqref{eqn:Peierls condition},~\eqref{eq:pslow},
 and the induction hypothesis for~\eqref{eq:psup}.
        
    Thus,
    \begin{equation*}
        \begin{multlined}
            -\frac{\beta\tau}{3} \abs{\supp{\gamma}} 
            \le -\beta\tau\abs{\supp{\gamma}} + \sum_{p} \left( \beta a_q |\interior{p}{\gamma}| + 3|\exb\interior{p}{\gamma}| \right) 
            \\
            \le -\beta\tau \abs{\supp{\gamma}} + \beta a_q |\interior{}{\gamma}| + 3 \abs{\supp{\gamma}},
        \end{multlined}
    \end{equation*}
    which completes the inductive step for~\eqref{eq:unstable-bound1} provided that
    \begin{equation}
        \label{eqn: parameter condition usage 1}
        \beta\tau\ge 9.
    \end{equation} 

    Before proceeding to the inductive step for~\eqref{eq:psup}, we need some terminology.
    We say that a contour $\gamma\in\cC^q$ is \textbf{small} if it is itself stable and so is every $\gamma'\in\cC^q$ such that $\supp{\gamma'} \subset \interior{}{\gamma}$, and \textbf{large} otherwise.
    In the following, we use the subscript ${}_\text{small}$ to denote analogues of previously defined quantities where only small contours contribute, e.g., $\Xi^q_{\text{small}}(\cdot)$ and $\bZ^q_{\text{small}}(\cdot)$ (via~\eqref{eqn: contour representation}).

    Now, let $\Lambda$ be a region of level $n+1$.
    By isolating the \textit{large} external contours, we get
    \begin{equation}
        \label{eq:ZLambda}
        \mathbf{Z}^{q}(\Lambda) 
        = \sum_{\substack{\Gamma\subset\cC^q(\Lambda) \\ \text{C.E.L.}}} \mathbf{Z}_{\text{small}}^{q}(\ext{\Gamma}\cap\Lambda) 
        \prod_{\gamma \in \Gamma} 
        \brackets[\Bigg]{
            e^{-\beta H_{\supp{\gamma}}(X_{\supp{\gamma}} X_{\supp{\gamma}^c})} \prod_{p} \mathbf{Z}^{p}(\interior{p}{\gamma})
        },
    \end{equation}
    where $\ext{\Gamma} := \bigcap_{\gamma \in \Gamma} \ext{\gamma}$ and C.E.L. stands for ``compatible, external, and large".
    By the definition of smallness, each set of compatible contours that contributes to $\Xi^q_{\text{small}}(\cdot)$ consists of \textit{stable} contours.
    Thus, arguing as in~\eqref{eqn: proof of upper bound in the base case} using Theorem~\ref{thm:FV7.29} with $\widetilde{w}^q\equiv w^q\indicator{\text{small}}$, we get that
    \begin{equation}\label{eq:Zstable}
        \mathbf{Z}_{\text{small}}^{q}(\ext{\Gamma}\cap\Lambda) 
        = e^{-\beta e_{q}\abs{\ext{\Gamma}\cap\Lambda}}
        \Xi_{\text{small}}^{q}(\ext{\Gamma}\cap\Lambda)  
        \le e^{-\beta h_{q,\text{small}}\abs{\ext{\Gamma}\cap\Lambda}+\abs{\exb (\ext{\Gamma}\cap\Lambda)}}.
    \end{equation}
    On the other hand, applying the induction hypothesis for~\eqref{eq:psup} to $\mathbf{Z}^{p}(\interior{p}{\gamma})$, we have that 
    \begin{equation}\label{eq:Zprod}
        \prod_{p} \mathbf{Z}^{p}(\interior{p}{\gamma}) 
        \le \prod_{p} e^{-\beta h_0 |\interior{p}{\gamma}| + 2 |\exb \interior{p}{\gamma}|} 
        \le e^{-\beta h_0 \abs{\interior{}{\gamma}}+2\abs{\supp{\gamma}}},
    \end{equation}
    since $\exb \interior{p}{\gamma}\cap\exb \interior{p'}{\gamma}=\emptyset$ for $p\ne p'$.
    Inserting~\eqref{eq:Zstable} and~\eqref{eq:Zprod} into~\eqref{eq:ZLambda} yields
    \begin{equation*}
        \begin{multlined}
            \mathbf{Z}^{q}(\Lambda) 
            \le \sum_{\substack{\Gamma\subset\cC^q(\Lambda) \\ \text{C.E.L.}}} 
            e^{-\beta h_{q,\text{small}}\abs{\ext{\Gamma}\cap\Lambda}+\abs{\exb (\ext{\Gamma}\cap\Lambda)}} 
            \\
            \times
            \prod_{\gamma \in \Gamma} 
            \brackets[\bigg]{
                e^{-\beta H_{\supp{\gamma}}(X_{\supp{\gamma}} X_{\supp{\gamma}^c})-\beta h_0 \abs{\interior{}{\gamma}}+2\abs{\supp{\gamma}}}
            }.
        \end{multlined}
    \end{equation*}
    We bound
    \begin{equation*}
        \abs{\exb (\ext{\Gamma}\cap\Lambda)}+2\sum_{\gamma\in\Gamma}\abs{\supp{\gamma}}
        \le \abs{\exb\Lambda}+3\sum_{\gamma\in\Gamma}\abs{\supp{\gamma}},
    \end{equation*}
    and, using the Peierls condition~\eqref{eqn:Peierls condition},
    \begin{equation*}
        \begin{multlined}
            -h_{q,\text{small}}\abs{\ext{\Gamma}\cap\Lambda}-\sum_{\gamma\in\Gamma}\brackets[\Big]{H_{\supp{\gamma}}(X_{\supp{\gamma}} X_{\supp{\gamma}^c})+h_0 \abs{\interior{}{\gamma}}}
            \\
            \le -h_0\abs{\Lambda}-(h_{q,\text{small}}-h_0)\abs{\ext{\Gamma}\cap\Lambda}-(\tau+e_q-h_0)\abs{\supp{\Gamma}},
        \end{multlined}
    \end{equation*}
    at which point we note that
    \begin{equation*}
        h_{q,\text{small}}-h_0
        =(h_{q,\text{small}}-h_q)+(h_q-h_0)
        =(s_q-s_{q,\text{small}})+a_q
        \ge a_q,
    \end{equation*}
    and 
    \begin{equation*}
        e_q-h_0=(e_q-h_q)+(h_q-h_0)=s_q+a_q\ge 0.
    \end{equation*}
    Combining the above estimates, we conclude that
    \begin{equation}\label{eq:Z_bd}
        \bZ^q(\Lambda)
        \le e^{-\beta h_0\abs{\Lambda}+\abs{\exb\Lambda}} \sum_{\substack{\Gamma\subset\cC^q(\Lambda)\\ \text{C.E.L.} }} 
        e^{-\beta a_{q}\abs{\ext{\Gamma}\cap\Lambda}} 
        \prod_{\gamma \in \Gamma} e^{-(\beta\tau-3)\abs{\supp{\gamma}}}.
    \end{equation}
    It remains to prove that the sum in~\eqref{eq:Z_bd} is bounded above by $e^{\abs{\exb\Lambda}}$.
    We rely on the following claim.

    \begin{claim}
        The following statements hold.
        \begin{enumerate}[label=(\alph*),ref=\theclaim(\alph*)]
            \item If $a_q=0$, then no contour $\gamma\in\cC^q$ of level at most $n+1$ is large.
            \item If $a_q>0$, then every large contour $\gamma\in\cC^q$ of level at most $n+1$ is geometrically large:
            \begin{equation}
                \label{eqn: large implies geometrically large}
                \abs{\supp{\gamma}} \ge \max\set[\bigg]{3^d,\bigg(\frac{\tau}{3a_{q}}\bigg)^{d-1}}.
            \end{equation}
            \label{itm: large implies geometrically large}
        \end{enumerate}
    \end{claim}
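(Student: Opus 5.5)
The plan is to use the induction hypothesis for~\eqref{eq:unstable-bound1}, together with the inductive step for unstable contours of level $n+1$ already carried out above. Both give~\eqref{eq:unstable-bound1} for every unstable contour of level at most $n+1$. A large contour $\gamma$ of level at most $n+1$ is, by definition, either itself unstable or has some unstable contour $\gamma'\in\cC^q$ with $\supp{\gamma'}\subset\interior{}{\gamma}$. Since $\gamma'$ sits inside $\gamma$, its level is strictly smaller than that of $\gamma$: any nested chain starting at $\gamma'$ extends to one starting at $\gamma$. So in either case there is an unstable contour $\gamma''\in\set{\gamma,\gamma'}$ of level at most $n+1$, and~\eqref{eq:unstable-bound1} applies to it:
\[
a_q\abs{\interior{}{\gamma''}}\ge\frac{\tau}{3}\abs{\supp{\gamma''}}.
\]

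For part (a), if $a_q=0$ the left side vanishes, while the right side is strictly positive because $\supp{\gamma''}$ is nonempty (indeed it has at least $3^d$ cells). This is a contradiction, so no large contour of level at most $n+1$ exists.

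For part (b), assume $a_q>0$. The bound $\abs{\supp{\gamma}}\ge 3^d$ is automatic: an incorrect cell forces its whole $\onion[1]{\cdot}$-neighbourhood to be incorrect, which is the fact already used in the proof of Theorem~\ref{thm:FV7.29}. For the second bound, I would combine~\eqref{eq:unstable-bound1} for $\gamma''$ with a discrete isoperimetric inequality in $\Z^d$. Every cell of $\exb\interior{}{\gamma''}$ lies in $\supp{\gamma''}$, so
\[
\abs{\interior{}{\gamma''}}\le\abs{\supp{\gamma''}}^{d/(d-1)}.
\]
Substituting gives $a_q\abs{\supp{\gamma''}}^{d/(d-1)}\ge\frac{\tau}{3}\abs{\supp{\gamma''}}$, that is,
\[
\abs{\supp{\gamma''}}^{1/(d-1)}\ge\frac{\tau}{3a_q},
\quad\text{so}\quad
\abs{\supp{\gamma''}}\ge\Bigl(\frac{\tau}{3a_q}\Bigr)^{d-1}.
\]

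The remaining task is to transfer this bound from $\gamma''$ to $\gamma$. If $\gamma''=\gamma$ there is nothing to do. If $\gamma''=\gamma'$ lies inside $\gamma$, I would use monotonicity: $\interior{}{\gamma'}\subset\interior{}{\gamma}$. That containment alone does not bound $\abs{\supp{\gamma}}$ by $\abs{\supp{\gamma'}}$, however. The fix is to rerun the isoperimetric step on $V(\gamma')\subset\interior{}{\gamma}$, whose boundary is surrounded by $\supp{\gamma}$. This gives $\abs{\supp{\gamma}}^{d/(d-1)}\ge\abs{\interior{}{\gamma}}\ge\abs{V(\gamma')}$ and $\abs{V(\gamma')}\ge\abs{\supp{\gamma'}}$. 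Chaining these with the bound for $\gamma'$ yields a lower bound on $\abs{\supp{\gamma}}$ of the same type; the constants are not obviously those of~\eqref{eqn: large implies geometrically large} and would need checking.

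I expect this last step to be the main obstacle. The difficulty is either getting exactly the exponent $d-1$ for the enclosing contour, or reorganising the argument (for instance by choosing $\gamma'$ to be an outermost unstable contour) so that the geometric bound lands on $\gamma$ directly with the stated constant. The isoperimetric constant, taken as $1$ in the estimate above, also needs verifying.
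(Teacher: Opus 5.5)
Your reduction to an unstable contour of level at most $n+1$, your proof of (a), and your isoperimetric computation when $\gamma$ itself is unstable all coincide with the paper's argument. The gap is in the remaining sub-case, which you leave open. There $\gamma$ is stable but some unstable $\gamma'\in\cC^q$ has $\supp{\gamma'}\subset\intr{\gamma}$, and your chain
\[
\abs{\supp{\gamma}}^{d/(d-1)}\ge\abs{\intr{\gamma}}\ge\abs{V(\gamma')}\ge\abs{\supp{\gamma'}}\ge\Bigl(\frac{\tau}{3a_q}\Bigr)^{d-1}
\]
only gives $\abs{\supp{\gamma}}\ge(\tau/(3a_q))^{(d-1)^2/d}$. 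Since $(d-1)^2/d<d-1$, this falls short of the claim whenever $(\tau/(3a_q))^{d-1}>3^d$, which is exactly when the bound has content. The loss occurs at $\abs{V(\gamma')}\ge\abs{\supp{\gamma'}}$. What must be carried from $\gamma'$ to $\gamma$ is the \emph{volume} enclosed by $\gamma'$, not the size of its support.

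The missing step is one line, and it is what the paper does. Feed your bound for $\gamma'$ back into~\eqref{eq:unstable-bound1}:
\[
\abs{\intr{\gamma'}}\ge\frac{\tau}{3a_q}\abs{\supp{\gamma'}}\ge\Bigl(\frac{\tau}{3a_q}\Bigr)^{d}.
\]
Equivalently, $a_q\abs{\intr{\gamma'}}\ge\frac{\tau}{3}\abs{\supp{\gamma'}}\ge\frac{\tau}{3}\abs{\intr{\gamma'}}^{(d-1)/d}$. Then the inclusions $\intr{\gamma'}\subset V(\gamma')\subset\intr{\gamma}$ and the isoperimetric inequality for $\gamma$ give
\[
\abs{\supp{\gamma}}\ge\abs{\intr{\gamma}}^{(d-1)/d}\ge\abs{\intr{\gamma'}}^{(d-1)/d}\ge\Bigl(\frac{\tau}{3a_q}\Bigr)^{d-1}.
\]
You do not need to choose an outermost unstable contour. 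The isoperimetric constant $1$ in $\abs{\exb A}\ge\abs{A}^{(d-1)/d}$ is standard, and the paper uses it without comment.

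A smaller point: your stated reason for $\abs{\supp{\gamma}}\ge 3^d$ is false. An incorrect cell does not force its $\onion[1]{\cdot}$-block to be incorrect. Every cell of a support that is adjacent to $\supp{\gamma}^c$ is incorrect but has correct neighbours. The paper simply uses the minimum contour size $3^d$ asserted in the proof of Theorem~\ref{thm:FV7.29}. Cite that fact instead of this justification.
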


    \begin{proof}
        At this point in the induction, we know that~\eqref{eq:unstable-bound1} holds whenever $\gamma\in\cC^q$ is unstable and of level at most $n+1$.
        If $a_q=0$, then~\eqref{eq:unstable-bound1} implies that no contour $\gamma\in\cC^q$ of level at most $n+1$ is unstable, which implies the first statement by the definition of largeness.

        Suppose now that $a_q>0$.
        Moreover, suppose that $\gamma\in\cC^q$ is a large contour of level at most $n+1$.
        By definition, either $\gamma$ is unstable, or there exists an unstable contour $\gamma'\in\cC^q$, of level at most $n$, such that $\supp{\gamma'}\subset\interior{}{\gamma}$.
        In the first sub-case, we bound
        \begin{align}
            \abs{\supp{\gamma}}
            {}&\ge \abs{\exb \interior{}{\gamma}} 
            \nonumber
            \\
            {}&\ge \abs{\interior{}{\gamma}}^{\frac{d-1}{d}} 
            {}&&\text{by the isoperimetric inequality}
            \nonumber
            \\
            {}&\ge \left(\frac{\tau}{3 a_{q}}\abs{\supp{\gamma}}\right)^{\frac{d-1}{d}}
            {}&&\text{by~\eqref{eq:unstable-bound1}},
            \label{eqn: geometrically large first sub-case}
        \end{align}
        which implies~\eqref{eqn: large implies geometrically large}.
        In the second sub-case, combining the estimate
        \begin{align}
            a_q \abs{\interior{}{\gamma'}} 
            {}&\ge \frac{\tau}{3} \abs{\supp{\gamma'}}
            {}&&\text{by~\eqref{eq:unstable-bound1}}
            \nonumber
            \\
            {}&\ge \frac{\tau}{3} \abs{\interior{}{\gamma'}}^{\frac{d-1}{d}}
            {}&&\text{by applying~\eqref{eqn: geometrically large first sub-case} to $\gamma'$}
        \end{align}
        with the geometric fact $\abs{\interior{}{\gamma}}\ge\abs{\interior{}{\gamma'}}$, and using the isoperimetric inequality again, we get
        \begin{equation*}
            \abs{\supp{\gamma}}
            \ge \abs{\interior{}{\gamma}}^{\frac{d-1}{d}} 
            \ge \abs{\interior{}{\gamma'}}^{\frac{d-1}{d}} 
            \ge \bigg(\frac{\tau}{3a_{q}}\bigg)^{d-1},
        \end{equation*}
        as required.
    \end{proof}

    By the claim, if $a_q=0$, then only the empty collection $\Gamma=\emptyset$ contributes to the sum in~\eqref{eq:Z_bd}, the sum reduces to one, and the proof is complete.
    Otherwise, $a_q>0$.
    We introduce an auxiliary model on the same set of contours $\cC^q$ but with weight
    \begin{equation*}
        w^q_\star(\gamma):=
        \begin{cases}
            e^{-(\beta\tau-5)\abs{\supp{\gamma}}} & \text{if }\abs{\supp{\gamma}}\ge \max\set[\big]{3^d,\big(\frac{\tau}{3a_{q}}\big)^{d-1}}\\
            0 & \text{otherwise}
        \end{cases}
        ,
    \end{equation*}
    and use the subscript ${}_\star$ to denote associated quantities such as the contour partition function $\Xi^q_\star(\cdot)$ and the pressure $s_{q,\star}$.
    Provided that
    \begin{equation}
        \label{eqn: parameter condition usage 2}
        \beta\tau\ge 15/2,
    \end{equation}
    we have that $\beta\tau-5\ge\beta\tau/3$, so Theorem~\ref{thm:FV7.29} and Corollary~\ref{thm:FV7.31} apply, with $\widetilde{w}^q\equiv w^q_\star$, to the auxiliary model.
    In particular, by Corollary~\ref{thm:FV7.31},
    \begin{equation}
        \label{eqn: bound on the auxiliary pressure}
        \abs{s_{q,\star}}
        \le \frac{1}{\beta}\sum_{\substack{M\in\cM^q \\\supp{M}\ni 0}} \frac{\abs{\varphi(M)}}{\abs{\supp{M}}}\prod_{\gamma\in M}\abs{w_{\star}^q(\gamma)}
        \le \frac{1}{\beta}\exp{-\frac{1}{6}\beta\tau\max\set[\bigg]{3^d,\bigg(\frac{\tau}{3a_{q}}\bigg)^{d-1}}},
    \end{equation}
    and the estimate~\eqref{eqn: bound on the finite-volume correction to the pressure} continues to hold with the auxiliary versions of the involved quantities.
    Hence, provided that
    \begin{equation}
        \label{eqn: convoluted assumption}
        \frac{1}{\beta}\exp{-\frac{1}{6}\beta\tau\max\set[\bigg]{3^d,\bigg(\frac{\tau}{3a_{q}}\bigg)^{d-1}}} \le a_q,
    \end{equation}
    we have that, for all sets $\Gamma\subset\cC^q(\Lambda)$ of compatible, external, large contours,
    \begin{align}
        {}& e^{-\beta a_{q}\abs{\ext{\Gamma}\cap\Lambda}} 
        \prod_{\gamma \in \Gamma} e^{-(\beta\tau-3)\abs{\supp{\gamma}}}
        {}&&\nonumber
        \\
        \le {}& e^{-\beta s_{q,\star}\abs{\ext{\Gamma}\cap\Lambda}} 
        \prod_{\gamma \in \Gamma} e^{-(\beta\tau-3)\abs{\supp{\gamma}}}
        {}&& \text{since $s_{q,\star}\le a_q$, by~\eqref{eqn: bound on the auxiliary pressure} and~\eqref{eqn: convoluted assumption}}
        \nonumber
        \\
        ={}& e^{-\beta s_{q,\star}\abs{\Lambda}} 
        \prod_{\gamma \in \Gamma} e^{-(\beta\tau-\beta s_{q,\star}-3)\abs{\supp{\gamma}}+\beta s_{q,\star}\abs{\interior{}{\gamma}}}
        {}&&
        \nonumber
        \\
        \le{}& e^{-\beta s_{q,\star}\abs{\Lambda}} 
        \prod_{\gamma \in \Gamma} e^{-(\beta\tau-4)\abs{\supp{\gamma}}+\beta s_{q,\star}\abs{\interior{}{\gamma}}}
        {}&& \text{since $\beta s_{q,\star}\le 1$, by~\eqref{eqn: bound on the auxiliary pressure}}
        \nonumber
        \\
        \le{}& e^{-\beta s_{q,\star}\abs{\Lambda}} 
        \prod_{\gamma \in \Gamma} e^{-(\beta\tau-5)\abs{\supp{\gamma}}} \Xi^q_\star(\interior{}{\gamma})
        {}&&\text{by~\eqref{eqn: bound on the finite-volume correction to the pressure}}
        \nonumber
        \\
        ={}& e^{-\beta s_{q,\star}\abs{\Lambda}} 
        \prod_{\gamma \in \Gamma} w^q_\star(\gamma) \Xi^q_\star(\interior{}{\gamma})
        {}&& \text{by Claim~\ref{itm: large implies geometrically large}}.
        \label{eqn: Pirogov-Sinai first long derivation}
    \end{align}
    Thus, 
    \begin{align}
        {}& \sum_{\substack{\Gamma\subset\cC^q(\Lambda) \\ \text{C.E.L.} }}
        e^{-\beta a_{q}\abs{\ext{\Gamma}\cap\Lambda}} 
        \prod_{\gamma \in \Gamma} e^{-(\beta\tau-3)\abs{\supp{\gamma}}}
        {}&&
        \nonumber
        \\
        \le {}& e^{-\beta s_{q,\star}\abs{\Lambda}} 
        \sum_{\substack{\Gamma\subset\cC^q(\Lambda) \\ \text{C.E.L.} }} 
        \prod_{\gamma \in \Gamma} w^q_\star(\gamma) \Xi^q_\star(\interior{}{\gamma})
        {}&&\text{by~\eqref{eqn: Pirogov-Sinai first long derivation}}
        \nonumber
        \\
        \le {}& e^{-\beta s_{q,\star}\abs{\Lambda}} 
        \Xi^q_\star(\Lambda)
        {}&& 
               \nonumber
        \\
        \le {}& e^{\abs{\exb\Lambda}}
        {}&& \text{by~\eqref{eqn: bound on the finite-volume correction to the pressure}},
    \end{align}
where the penultimate inequality follows since large contours form a subset of
geometrically large contours by Claim~\ref{itm: large implies geometrically large}.
This completes the proof.

\section{Crystallization}
\label{sec:crystallization}

In this section, we determine the low-temperature phase structure of every maximal hard-core model satisfying Assumptions~\ref{asm:optimization} and~\ref{asm:supercell}.
Informally, we will show that a typical maximal configuration at sufficiently low temperatures is a small perturbation of a perfect configuration.
The main ingredient is Theorem~\ref{thm:applyPS}, our formulation of the main technical result of Zahradn\'ik~\cite[Theorem~1.7]{zahradnik1984alternate}. 
Recall the reduction of the maximal model to an equivalent spin model via the common-supercell construction as described at the beginning of Section~\ref{sec:PS-theory}.
Recall also the notion of a \textit{stable} periodic ground state from Section~\ref{sec: truncated contour model}.

For a periodic ground state $\overline X^q$, an $\sF$-measurable region $\Lambda \Subset \Z^d$, and a maximal hard-core configuration $X$, define the finite-volume Gibbs measure with boundary condition $q$ as follows:
\begin{equation*}
    \P^q_{\Lambda,\beta} (X) 
    =\frac{e^{-\beta H_{\Lambda}(X_\Lambda  X_{\Lambda^c})}}{\bZ^q(\Lambda)}\indicator{\Omega_\Lambda^q}(X).
\end{equation*}

\begin{proposition}[{\cite[Lemma 7.26]{friedli2017statistical}}]\label{prop:FV7.12}
    Let $\gamma \in \mathcal{C}^q$ be a contour of type $q$ whose support is contained in $\Lambda \setminus \partial^{\mathrm{in}}\Lambda$.
Then
    \[
    \P^q_{\Lambda,\beta}(\gamma \text{ is external}) \le w^q(\gamma).
    \]
\end{proposition}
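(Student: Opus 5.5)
The plan is the standard argument behind \cite[Lemma 7.26]{friedli2017statistical}: write the probability as a ratio of partition functions, factor out everything that $\gamma$ determines, and compare the remaining sum with a partition function in which $\gamma$ is absent. First I would write
\[
\P^q_{\Lambda,\beta}(\gamma\text{ is external})=\frac{1}{\bZ^q(\Lambda)}\sum_{X\in\Omega^q_\Lambda}e^{-\beta H_\Lambda(X_\Lambda X_{\Lambda^c})}\indicator{\{\gamma\text{ is an external contour of }X\}}.
\]
Next I would normalize both numerator and denominator by $e^{-\beta H_\Lambda(\overline X^q_\Lambda\overline X^q_{\Lambda^c})}$. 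The denominator then becomes $\Xi^q(\Lambda)$. In the numerator I would apply the decomposition \eqref{eq:rewrite-H} and the recursion leading to \eqref{eqn: contour representation}, exactly as in the derivation of the contour representation.

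The outcome of this step is that the numerator equals
\[
\sum_{\substack{\Gamma\subset\cC^q(\Lambda)\text{ compatible}\\ \gamma\in\Gamma,\ \gamma\text{ external in }\Gamma}}\prod_{\gamma'\in\Gamma}w^q(\gamma').
\]
Contours lying inside $\intr{\gamma}$ have already been absorbed into $w^q(\gamma)$ through the ratios $\bZ^p(\interior{p}{\gamma})/\bZ^q(\interior{p}{\gamma})$. So $\Gamma$ consists of $\gamma$ together with a compatible family $\Gamma'$ in which every member is compatible with $\gamma$ and lies outside $V(\gamma)$; members inside $\intr{\gamma}$ are excluded.

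Factoring out $w^q(\gamma)$, the remaining sum over $\Gamma'$ is the contour partition function $\Xi^q$ restricted to contours in $\cC^q(\Lambda)$ that are compatible with $\gamma$ and not contained in $\intr{\gamma}$. Since all weights $w^q(\gamma')$ are nonnegative (they are ratios of positive Boltzmann factors and partition functions), this restricted sum is a sub-sum of \eqref{eqn: contour representation}. It is therefore at most $\Xi^q(\Lambda)$, which gives the bound.

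The main obstacle is the bijection step in the numerator. I would need to check three things:
\begin{itemize}
\item every configuration $X\in\Omega^q_\Lambda$ with $\gamma$ external corresponds uniquely to a compatible family containing $\gamma$ as an external contour;
\item conversely, every such family is realized by a maximal configuration;
\item the interior configurations range precisely over $\Omega^p_{\interior{p}{\gamma}}$.
\end{itemize}
The last point is already asserted in the derivation of \eqref{eq:rewrite-H}. For the realizability (gluing) direction, I would rely on the fact that correct cells surround each contour support with a layer of $\overline X^p$. Because $F$ exceeds the interaction radius, the hard-core and maximality constraints are local, so they are satisfied across these layers and gluing is legitimate. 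Once this is granted, the inequality is immediate from positivity, as in \cite{friedli2017statistical}.
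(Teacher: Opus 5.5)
Your overall route is the paper's: write the probability as a ratio of polymer sums, factor out $w^q(\gamma)$, and bound the residual sum by $\Xi^q(\Lambda)$ using nonnegativity of the weights. One step, however, is wrong as stated. You claim that contours inside $\intr{\gamma}$ are ``absorbed into $w^q(\gamma)$,'' so that the residual family $\Gamma'$ contains only contours outside $V(\gamma)$. But $w^q(\gamma)$ carries only the ratios $\bZ^p(\interior{p}{\gamma})/\bZ^q(\interior{p}{\gamma})$. Summing over the interior configurations produces $\prod_p\bZ^p(\interior{p}{\gamma})$, so after normalization the factor attached to $\gamma$ is $w^q(\gamma)\prod_p\Xi^q(\interior{p}{\gamma})$, exactly as in \eqref{eqn: external contour recursion}. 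The recursion then expands each $\Xi^q(\interior{p}{\gamma})$ into type-$q$ contours properly contained in $\interior{p}{\gamma}$. These are precisely the members of $\Gamma$ lying in $\intr{\gamma}$ in your displayed sum: the display is correct, but your description of it is not. All weights are strictly positive, so $\Xi^q(\interior{p}{\gamma})>1$ as soon as some type-$q$ contour fits properly inside $\interior{p}{\gamma}$. Hence the identity ``numerator $=w^q(\gamma)\times{}$(sum over families outside $V(\gamma)$)'' undercounts the numerator, and the bound you derive is for the wrong quantity.

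The repair is immediate. $\Gamma'=\Gamma\setminus\{\gamma\}$ ranges over compatible families in which every member is compatible with $\gamma$ and no member contains $\supp{\gamma}$ in its interior; inner contours are allowed. This is the paper's $\cA_2(\Lambda)$. Inner and outer members are automatically compatible with each other, because $\supp{\gamma}$ separates them. So every such $\Gamma'$ is a compatible subfamily of $\cC^q(\Lambda)$, and positivity again gives $\sum_{\Gamma'}\prod_{\gamma'\in\Gamma'}w^q(\gamma')\le\Xi^q(\Lambda)$. The gluing and realizability issues you flag are the same ones underlying \eqref{eqn: contour representation}, which the paper takes as given, so nothing beyond that derivation is needed.
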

\begin{proof}
    Let $\cA_0(\Lambda)$ denote the collection of all subsets of $\cC^q(\Lambda)$ consisting of pairwise compatible contours, and let
    $\cA_1(\Lambda)\subset \cA_0(\Lambda)$ denote the subcollection of those $\Gamma\in\cA_0(\Lambda)$ with respect to which $\gamma$ is external.
    As in the derivation of~\eqref{eqn: contour representation}, we have
    \begin{equation*}
        \begin{split}
        \P^q_{\Lambda,\beta}(\gamma\text{ is external})
        &= \frac{\sum_{\Gamma\in\cA_1(\Lambda)}\prod_{\gamma\in\Gamma}w^q(\gamma)}
                {\sum_{\Gamma\in\cA_0(\Lambda)}\prod_{\gamma\in\Gamma}w^q(\gamma)}
        = w^q(\gamma)\,
          \frac{\sum_{\Gamma\in\cA_2(\Lambda)}\prod_{\gamma\in\Gamma}w^q(\gamma)}
               {\sum_{\Gamma\in\cA_0(\Lambda)}\prod_{\gamma\in\Gamma}w^q(\gamma)}
        \le w^q(\gamma),
    \end{split}
    \end{equation*}
    where $\cA_2(\Lambda)$ is the subcollection of all $\Gamma \in \cA_0(\Lambda)$ such that each $\gamma' \in \Gamma$ is compatible with $\gamma$ and there does not exist $\gamma' \in \Gamma$ such that $\supp{\gamma} \subset \interior{}{\gamma'}$; in other words, $\gamma$ can be added as an external contour to $\Gamma$, for each $\Gamma \in \cA_2(\Lambda)$.
\end{proof}

\begin{theorem}[Crystallization]
    \label{thm:volume allocation crystallization}
    Suppose that a maximal hard-core model satisfies
    Assumptions~\ref{asm:optimization} and~\ref{asm:supercell} and, in addition, a constant $\beta_0>0$ is chosen to satisfy Assumption~\ref{asm: parameters}.
    There exists a decreasing function $\epsilon:[\beta_0,\infty)\to[0,\infty)$ with $\lim_{\beta\to\infty}\epsilon(\beta)=0$ such that the following holds.
    For all $\beta \ge \beta_0$:
    \begin{enumerate}[label=(\roman*)]
        \item The set of stable periodic ground states is nonempty.
        \label{itm: stable PGS exists}
        \item The maximal hard-core model crystallizes around each stable periodic ground state $\overline X^q$; in other words,
        for every fixed cell $K$,
        the limit 
        \[
            \lim_{\Lambda \Uparrow \Z^d} \P^q_{\Lambda,\beta} \bigl(K\text{ is not $q$-correct}\bigr)
        \]
        exists and is bounded above by $\epsilon(\beta)$.
        \label{itm: q-correct likely in q-measure}
        \item For every stable periodic ground state
        $\overline X^{\,q}$, the finite-volume Gibbs measures with boundary condition $q$ converge to an $\cL^F$-invariant extremal Gibbs measure $\P_\beta^q$, and different stable periodic ground states give rise to different Gibbs measures.
        \label{itm: different stable PGS different measure}
  \end{enumerate}
\end{theorem}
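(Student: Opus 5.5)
The plan is to derive all three items from Theorem~\ref{thm:applyPS}, the cluster expansion of Theorem~\ref{thm:FV7.29}, and Proposition~\ref{prop:FV7.12}, in the standard way of \cite[Chapter~7]{friedli2017statistical}. For item~\ref{itm: stable PGS exists} there is nothing to prove beyond finiteness. Since there are finitely many periodic ground states, $h_0=\min_q h_q$ is attained at some $q$, so $a_q=0$ for that $q$. Here $h_q=e_q-s_q$ is well defined because Theorem~\ref{thm:FV7.29} applies with $\widetilde w^q\equiv\hat w^q$.

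For item~\ref{itm: q-correct likely in q-measure}, fix a stable $\overline X^q$ and a cell $K$. By Theorem~\ref{thm:applyPS}(1), every $\gamma\in\cC^q$ is stable, so $w^q=\hat w^q\le e^{-\beta\tau\abs{\supp{\gamma}}/3}$.
\begin{itemize}
\item[] \emph{Reduction to external contours.} If $K$ is not $q$-correct under $\P^q_{\Lambda,\beta}$, then either $K$ lies in the support of an external contour $\gamma$, or $K$ lies in $\intr{\gamma}$ of some external $\gamma$. In the second case $K$ belongs to $V(\gamma)$ of an external contour $\gamma$ with $K\subset V(\gamma)$. The point is that outside all external contours every cell is $q$-correct.
\item[] \emph{Union bound.} Proposition~\ref{prop:FV7.12} then gives
\[
\P^q_{\Lambda,\beta}(K\text{ not $q$-correct})\le\sum_{\gamma\in\cC^q:\,V(\gamma)\ni K}e^{-\beta\tau\abs{\supp{\gamma}}/3}.
\]
\item[] \emph{Counting.} The isoperimetric inequality gives $\abs{V(\gamma)}\le\abs{\supp{\gamma}}^{d/(d-1)}$. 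So the number of $\gamma$ with $\abs{\supp{\gamma}}=k$ and $V(\gamma)\ni K$ is at most $k^{d/(d-1)}\kappa^k(3^d-1)^{2k}$.
\item[] \emph{Summation.} By \eqref{eqn: parameters 2} the resulting series converges uniformly in $\Lambda$. It is bounded by some $\epsilon(\beta)$ that is decreasing in $\beta$ and tends to $0$ by dominated convergence, since $k\ge3^d$.
\end{itemize}

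Existence of the limit as $\Lambda\Uparrow\Z^d$ is where I would do the real work. I would express the probability through the cluster expansion. The event $\{K\text{ not $q$-correct}\}$ has complement expressible by ratios of partition functions: restricting to configurations in which no contour has $V(\gamma)\ni K$ gives $\Xi^q$ of a model with the weights of those contours set to $0$. Both numerator and denominator then have convergent cluster expansions by Theorem~\ref{thm:FV7.29}. The logarithm of the ratio is a sum over clusters touching a neighborhood of $K$, and this sum converges absolutely and uniformly in $\Lambda$ by Corollary~\ref{thm:FV7.31}. Hence the limit exists. The same argument applies to any local event, since local events are determined by finitely many cells and the contours around them.

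Item~\ref{itm: different stable PGS different measure} follows by the same mechanism. Convergence of expectations of all local functions gives the weak limit $\P^q_\beta$. The DLR property passes to the limit by the usual quasilocality of the specification; the maximal constraint is local by Remark~\ref{rem:maximality is local}. $\cL^F$-invariance follows because the cluster series is translation invariant in the limit. For distinctness, for $q\ne q'$ both stable, item~\ref{itm: q-correct likely in q-measure} gives $\P^q_\beta(K\text{ is $q$-correct})\ge1-\epsilon(\beta)$ and $\P^{q'}_\beta(K\text{ is $q$-correct})\le\epsilon(\beta)$, since $q$- and $q'$-correctness are exclusive. This separates the measures once $\epsilon(\beta)<1/2$. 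If needed, I would enlarge $\beta_0$ or redefine $\epsilon$ so that this holds; alternatively I would use ergodicity (below) and the spatial average of the indicator of $q$-correctness.

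Extremality is the main obstacle. I would first try to show that correlations of local functions under $\P^q_\beta$ factorize at large distance, again via the cluster expansion. This gives tail triviality, or at least mixing under $\cL^F$-translations, and hence extremality among periodic measures, following \cite{zahradnik1984alternate}.
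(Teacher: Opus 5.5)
Your item (i), the union bound through Proposition~\ref{prop:FV7.12}, and the distinctness argument via exclusivity of $p$- and $q$-correctness coincide with the paper's proof. Your isoperimetric count of contours with $K\subset V(\gamma)$ adds something: it shows that the paper's $\epsilon(\beta)$ is finite, which the paper does not check. Together with \eqref{eqn: parameters 1}--\eqref{eqn: parameters 2} it even forces $\epsilon(\beta_0)<1/2$, so you do not need to enlarge $\beta_0$.

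\emph{The genuine gap is your proof that the limit in item (ii) exists.} You identify the complement of $\{K\text{ not $q$-correct}\}$ with the event that no contour $\gamma$ has $K\subset V(\gamma)$. You then compute it as a ratio of contour partition functions with those weights set to $0$. That identification is false: $K$ can be $q$-correct while lying in $\interior{q}{\gamma}$ of an external type-$q$ contour, or in a $q$-labelled region nested inside a contour of another phase. Your ratio therefore computes the probability that $K$ lies outside every external contour. This is only a lower bound for $\P^q_{\Lambda,\beta}(K\text{ is $q$-correct})$. The union bound survives because it uses only the inclusion, but existence of the limit does not.

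Relatedly, in \eqref{eqn: contour representation} only the external contours are contours of the actual configuration; the nested ones are fictitious type-$q$ contours produced by the recursion. Events inside interiors are therefore invisible in that polymer model, and ``the same argument applies to any local event'' fails for the same reason. A correct direct route is to decompose according to the unique external contour $\gamma$ with $K\subset V(\gamma)$:
\[
\P^q_{\Lambda,\beta}(K\text{ not $q$-correct})=\sum_{\gamma\in\cC^q:\ K\subset V(\gamma)}\P^q_{\Lambda,\beta}(\gamma\text{ is external})\,c_K(\gamma).
\]
Here is why each factor behaves.
\begin{itemize}
\item By the separation behind \eqref{eq:rewrite-H}, conditionally on $\gamma$ being external the interiors $\interior{p}{\gamma}$ are independent with laws $\P^p_{\interior{p}{\gamma},\beta}$. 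Hence $c_K(\gamma)\in[0,1]$ does not depend on $\Lambda$.
\item $\P^q_{\Lambda,\beta}(\gamma\text{ is external})$ equals $w^q(\gamma)\prod_p\Xi^q(\interior{p}{\gamma})$ times a ratio to $\Xi^q(\Lambda)$. The numerator of that ratio is the polymer partition function with every $\gamma'$ removed that is incompatible with $\gamma$ or has $V(\gamma')\cap V(\gamma)\neq\emptyset$. This ratio converges as $\Lambda\Uparrow\Z^d$ by Theorem~\ref{thm:FV7.29}.
\end{itemize}
Dominated convergence with your summable bound $w^q(\gamma)\le e^{-\beta\tau\abs{\supp{\gamma}}/3}$ then gives the limit.

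\emph{Second, extremality is left as a plan, and your fallback does not give the statement.} Decay of correlations between pairs of local functions gives mixing, hence extremality only among $\cL^F$-invariant Gibbs measures. Extremality in the set of all Gibbs measures is tail triviality, which needs decorrelation of a local event from the $\sigma$-algebra outside large boxes, uniformly over such events.

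The paper does not prove item (iii) by hand. Once Theorem~\ref{thm:applyPS} makes every type-$q$ contour stable, \eqref{eqn: contour representation} coincides with \eqref{eqn: truncated contour model}, and Theorem~\ref{thm:FV7.29} gives a convergent expansion. The paper then invokes \cite[Corollary on p.~6, Parts~(iii), (iv)]{mazel2024pirogov} for convergence of the $q$-boundary-condition measures to an extremal, $\cL^F$-invariant Gibbs measure with exponential decay of correlations. Existence of the limit in item (ii) is then automatic, because $\{K\text{ not $q$-correct}\}$ is a cylinder event. You should either cite that result in the same way, or extend the decomposition above to arbitrary cylinder events and add a genuine tail-triviality argument.
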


\begin{proof}
    The construction preceding Assumption~\ref{asm:supercell} allows us to view the maximal hard-core model as an equivalent spin model on $\Z^d$ with a finite single-spin space and a translation-invariant, possibly infinite-valued interaction of range one (with respect to the sup-norm). 
    Proposition~\ref{prop:perfect iff periodic ground state} identifies the perfect configurations of the model with its periodic ground states, and by Assumption~\ref{itm:global optimization}, there are finitely many of them. 
  For this reason, the minimum defining $h_0$ in~\eqref{eqn: metastable free energies} is attained, so at least one $\overline X^q$ is stable, which proves \ref{itm: stable PGS exists}.
   Fix such a $q$.
 
 By Assumption~\ref{asm: parameters}, Theorems~\ref{thm:applyPS} and~\ref{thm:FV7.29} apply (the Peierls constant $\tau>0$
is given by Proposition~\ref{prop:Peierls condition}).
By Theorem~\ref{thm:applyPS}, every contour of type $q$ is stable, the contour representation~\eqref{eqn: contour representation} coincides with the truncated contour model~\eqref{eqn: truncated contour model}, and Theorem~\ref{thm:FV7.29} yields a convergent cluster expansion for the partition function with $q$-boundary condition.
    The cluster expansion yields
    convergence of the finite-volume Gibbs measures with $q$-boundary condition
    to an extremal, $\cL^F$-invariant Gibbs measure with exponential decay of
    correlations~\cite[Corollary on p.6, Parts~(iii), (iv)]{mazel2024pirogov}.
    We next show that there is a function
    $\epsilon(\beta)\to 0$
    such that the bound
    \begin{equation*} 
        \P_{\Lambda,\beta}^q(K\text{ is not }q\text{-correct})\le\epsilon(\beta).
    \end{equation*}
holds uniformly over every bounded, $\sF$-measurable region $\Lambda$, every cell $K\subset\Lambda$, and every stable periodic ground state $\overline X^q$.
     To see this, observe that there exists an external contour $\gamma \in \mathcal{C}^q$
    that separates $K$ from $\Lambda^c$ whenever $K$ is not $q$-correct with respect to a configuration in $\Omega^q_\Lambda$.
    Proposition~\ref{prop:FV7.12} and Theorem~\ref{thm:FV7.29}, together with
    the fact that $\gamma$ is stable, give
    \[  
        \begin{split}
            \P_{\Lambda,\beta}^q (K\text{ is not }q\text{-correct}) & \le \sum_{K \subset V(\gamma)} \P_{\Lambda,\beta}^q (\gamma \text{ is an external contour}) \le  \sum_{K \subset V(\gamma)} w^q(\gamma)\\
            &  \le \sum_{K \subset V(\gamma)} e^{-\beta\tau\abs{\supp{\gamma}}/3} =: \epsilon(\beta) \to 0
        \end{split}
    \]
    as $\beta \to \infty$, which proves~\ref{itm: q-correct likely in q-measure}. Let us stress that we
    are not taking the limit $\beta\to\infty$ on the left-hand side because stability of the
    periodic ground states, a priori, is not guaranteed for all $\beta$ in the limiting process.
    
    Finally, as soon as $\epsilon(\beta_0)<1/2$, distinct stable periodic ground states give rise to distinct Gibbs measures since, by the common-supercell construction, a cell cannot be both $p$-correct and $q$-correct for different periodic ground states $\overline X^p$, $\overline X^q$, which proves~\ref{itm: different stable PGS different measure}.
\end{proof}

A natural question that Theorem~\ref{thm:volume allocation crystallization} does not address is which of the periodic ground states are stable at each $\beta\ge\beta_0$.
Define a \textbf{symmetry} of a maximal hard-core model as a Euclidean isometry of $\R^d$ that preserves $\bG$ and induces an automorphism on the set of maximal configurations that preserves the chemical potential associated with each tile type.
It turns out that if a periodic ground state is stable at some $\beta$, then its images under symmetries of the model are also stable periodic ground states at $\beta$~\cite[Section~1.2]{mazel2025high}.
Hence, if the periodic ground states form a single orbit under symmetries of the model, then all are stable for all $\beta\ge\beta_0$, where $\beta_0$ is the same threshold as in the statement of Theorem~\ref{thm:volume allocation crystallization}.
Otherwise, one considers the distinct symmetry orbits of the periodic ground states and uses either the method implicit in Zahradn\'ik~\cite{zahradnik1984alternate} or that of Slawny~\cite{slawny1979low,slawny1987low} to distinguish the stable ones.
We do not belabor the details here and refer the reader to~\cite[Section~6]{mazel2025high} for a specific implementation of the former method for a hard-disk model (see Section~\ref{sec: hard disks}).

\section{Examples}\label{sec: Examples}

In this section, we collect a few examples of maximal hard-core models that satisfy Assumptions \ref{asm:optimization} and \ref{asm:supercell} and are therefore covered by our theory. 
Some of the results are collected in Table~\ref{tab:assumption one models} below.
We note that our formalism does not address the classification of periodic ground states, which is an NP-hard problem in general~\cite[p.~511]{friedli2017statistical}.
Instead, the classification must be manually accomplished, via Proposition~\ref{prop:perfect iff periodic ground state}, by producing a suitable volume allocation that distinguishes exactly finitely many configurations as perfect in the sense of Assumption~\ref{asm:optimization}, which, as the examples in this section will show, can be a formidable endeavor in itself.

\subsection{Nearest-neighbor maximal models}
\label{sec:nearest-neighbor-examples}

First, we study the nearest-neighbor maximal models on $\bA_2$, $\Z^d$ ($d\ge2$), and $\bH_2$ in both low- and high-activity regimes.

\subsubsection{The discrete volume allocation}\label{subsec:examples common allocation}

Let $\Delta$ denote the degree of the lattice (as a graph) and let $\alpha$ be the number of pairwise nonadjacent neighbors of a vertex.
Let $X$ be a maximal hard-core configuration and let $y \notin X$. 
Since the occupied neighbors of a vacant vertex form an independent set, one has
$1\le k_y(X)\le\alpha$.
Therefore,
\begin{equation}
    \label{eqn:examples allocated volume bound}
    1+\frac{\Delta}{\alpha}\le v(x\mid X) =1+\sum_{y\sim x}\frac{1}{k_y(X)} \le 1+\Delta.
\end{equation}
The upper bound is attained if and only if every vacant vertex has exactly one occupied neighbor, and the lower bound is attained if and only if every vacant vertex has $\alpha$ occupied neighbors.
We show that for all lattices considered in Sec.~\ref{sec:nearest-neighbor-examples}, both bounds are attainable.
The lower bound in~\eqref{eqn:examples allocated volume bound}
implies Assumption~\ref{itm:minorization} with
$p_\star=\alpha/(\alpha+\Delta)$ when $\sigma=\mu_x=1$,
while the upper bound implies it with
$p_\star=1/(1+\Delta)$ when $\sigma=\mu_x=-1$.

In the remainder of this section, we determine whether configurations attaining~\eqref{eqn:examples allocated volume bound} form a finite family. The results are summarized in Table~\ref{tab:assumption one models}.
To check whether Assumption~\ref{asm:supercell} holds we relied on Proposition~\ref{prop: locality implies supercell assumption}.

\begin{table}[tbp]
    \centering
    \small
    \renewcommand{\arraystretch}{1.25}
    \begin{tabular}{cccccccc}
        \toprule
        Model & $\sigma$ & $\Delta$ & $\alpha$ & $p_\ast$
        & \shortstack{number of perfect\\configurations} &\shortstack{chosen common\\ supercell}
        & Assumptions~\ref{asm:optimization}, \ref{asm:supercell} \\
        \midrule
        $\bA_2$ & $1$  & $6$ & $3$ & $1/3$ & 3    & $F_{\bA_2}(3)$   & hold \\
        $\bA_2$ & $-1$ & $6$ & $3$ & $1/7$ & 14   &$F_{\bA_2}(7)$ & hold \\
        $\Z^d$, $d\ge2$ & $1$ & $2d$ & $2d$ & $1/2$ & 2 & $[0,2)^d$  & hold \\
        $\Z^2$ & $-1$ & $4$ & $4$ & $1/5$ & 10 &  $[0,5)^2$   & hold \\
        $\Z^3$ & $-1$ & $6$ & $6$ & $1/7$ & 56 & $[0,7)^3$  & hold \\
        $\Z^5$ & $-1$ & $10$ & $10$ & $1/11$ & 4224 &$[0,11)^5$    & hold \\
        \shortstack{$\Z^d$, $d\ge4$\\$2d+1$ composite}
        & $-1$ & $2d$ & $2d$ & $1/(2d+1)$ & $\infty$ & none   & fail \\
        \shortstack{$\Z^d$, $d\ge6$\\$2d+1$ prime}
        & $-1$ & $2d$ & $2d$ & $1/(2d+1)$
        & $\le\binom{(2d+1)^d}{(2d+1)^{d-1}}$& $[0,2d+1)^d$  & hold \\
        $\bH_2$ & $1$  & $3$ & $3$ & $1/2$ & 2 & $F_{\bH_2}(3)$  & hold \\
        $\bH_2$ & $-1$ & $3$ & $3$ & $1/4$ & $\infty$ & none  & fail \\
        \bottomrule
    \end{tabular}
    \caption{Assumptions~\ref{asm:optimization}, ~\ref{asm:supercell} for the
    nearest-neighbor models.\\ Here $ F_{\bA_2}(r):=\set{s a_1+t a_2\mid 0\le s,t<r},
    F_{\bH_2}(r)
    :=\set{s b_1+t b_2\mid 0\le s,t<r}.$}
    \label{tab:assumption one models}
\end{table}

\subsubsection{Triangular lattice $\bA_2$}

{As shown in~\cite{wang2026recoverable}, there are $3$ perfect configurations for $\sigma = 1$ and $14$ for $\sigma = -1$. These attain the lower and upper bounds in~\eqref{eqn:examples allocated volume bound}, respectively.

There is another way
of defining a volume allocation for the same model, which is derived from
the analysis of the low-activity case in \cite[Section~3]{wang2026recoverable}. Let $X\in \sX(\bA_2)$ be a maximal independent set and let $\cT_X$ be a Delaunay triangulation
of the occupied sites embedded in $\R^2$. Denote by $\theta_x(T)$ the angle at the vertex $x$ of $T\in \cT_X$ and let $\cV(T)$ denote the vertices of $T$. 
Following Definition~\ref{def:volume allocation}, take $S=\R^2$ with $\lambda$ the Lebesgue measure and set
   \begin{equation}\label{eq:phi-ang}
  \phi_x(s\mid X):=\!\!\sum_{\substack{T\in\cT_X\\ \cV(T)\ni x}}
  \frac{\theta_x(T)}{\pi}\,\indicator{T}(s).
\end{equation}
It is easy to check that $\phi_x$ satisfies the conditions of a volume allocation. Translation invariance, continuity, and bounded support are straightforward. To check partition of unity,
   write
   \begin{equation}\label{eq: sum of angles}
  \sum_{x\in X}\phi_x(s\mid X)
  =\sum_{T\ni s}\frac{\sum_{x\in \cV(T)}\theta_x(T)}{\pi}=1
  \quad\lambda\text{-a.e.},
\end{equation}
which uses the fact that the sum of the angles is $\pi$ (this argument was used in 
\cite[Lemma~3.4]{wang2026recoverable}). For the allocated volume we obtain the expression
\begin{equation*}
    v(x\mid X)=\sum_{T\ni x} \frac{\theta_x(T)}{\pi}\abs{T}.
\end{equation*}
Using \eqref{eqn:local optimization} with $\sigma=-1$ and the fact that $\abs{T}\le \frac {7\sqrt 3}4$ for any $T\in \cT_X$ \cite[Section~3]{wang2026recoverable}, we obtain the inequality $v(x\mid X)\le \frac{7\sqrt 3}{2}$ with equality for the perfect configurations (this verifies Assumption 1).
This argument also implies that $p_\ast=\frac 2{7\sqrt 3}$, which can also be derived from the discrete volume allocation.

\subsubsection{The bipartite lattices at high activity.}
The lattices $\Z^d$, $d\ge2$, and $\bH_2$ are connected and bipartite,
with $\alpha=\Delta$.  Equality in the lower bound in \eqref{eqn:examples allocated volume bound} requires
$k_y(X)=\Delta$ for every vacant vertex $y$, so all its neighbors are occupied
and $\bG\setminus X$ is independent.  Since $X$ is also independent,
$(X,\bG\setminus X)$ must be the bipartition of the graph.  Hence
Assumption~\ref{asm:optimization} holds with $p_\ast=1/2$ and exactly two
perfect configurations.
Note that this argument uses only the fact that the lattice is bipartite and that $\alpha=\Delta$.

\subsubsection{The hypercubic lattices $\Z^d$ at low activity}\label{sec: perfect}
Here $\sigma=-1$ and $\alpha=\Delta=2d$, so $p_\ast=1/(2d+1)$ and~\eqref{eqn:examples allocated volume bound} reads $v(x\mid X)\le 2d+1$.  All $2d$ neighbors of an occupied vertex $x$
are vacant, so equality at every $x\in X$ means that every vacant vertex has exactly one
occupied neighbor.  In other words, the graph balls (also called \emph{crosses})
\begin{equation*}
    B^{\mathrm{gr}}_1(x)=\set{x,x\pm \vec{e}_1,\ldots,x\pm \vec{e}_d},
    \qquad x\in X,
\end{equation*}
partition $\Z^d$, where $\vec{e}_1,\ldots,\vec{e}_d$ are the standard basis vectors.  As the graph
distance on $\Z^d$ is the $\ell^1$, or Lee, distance, the perfect
configurations are precisely the radius-one perfect Lee codes, that is, the
sets of centers of a tiling of $\Z^d$ by the $(2d+1)$-site cross
$B^{\mathrm{gr}}_1(0)$~\cite{golombWelchPerfectCodes1970}, \cite[Section~9.2]{haynes2023domination}.
Such codes exist in every dimension.

If $2d+1$ is not prime, then there are $2^{\aleph_0}$ non-congruent non-periodic perfect configurations
\cite[Theorem~1(i)]{horakNonperiodicTilingsCrosses2012}, so
Assumption~\ref{asm:optimization} fails.

On the other hand, let $p := 2d+1$ be prime. Then all perfect configurations are periodic with period $p$ in every direction~\cite[Corollary~35]{horakKimFiftyYears2018}. Denote by $N_d$ the number of perfect configurations. Since each cell of $p^d$ sites is fully packed by $p^{d-1}$ crosses of size $p$, we have
\begin{equation*}
    N_d \le \binom{p^d}{p^{d-1}} < \infty
\end{equation*}
and Assumption~\ref{asm:optimization} holds.

To justify the count of perfect configurations in $\Z^d$ listed in Table~\ref{tab:assumption one models}, we briefly recall a classical algebraic construction of  ~\cite[Theorem~3]{golombWelchPerfectCodes1970}. Call $\vec a = (a_1,\dots, a_d)\in\bigl(\Z/p\Z\bigr)^d$ \emph{admissible} if 
\begin{equation}\label{eqn:admissible}
    \set{0,\pm a_1,\ldots,\pm a_d} = \Z / p \Z.
\end{equation}
For admissible $\vec a$ and $c\in\Z/p\Z$, put
\begin{equation}\label{eqn:perfect-Zd-tiles}
    \overline X^{(\vec a,c)}
    :=\set{\vec x\in\Z^d\mid\vec a\cdot\vec x\equiv c\pmod{p}}.
\end{equation}
To see that $\overline X^{(\vec a,c)}$ is perfect, fix $\vec z\in\Z^d$.
By~\eqref{eqn:admissible}, there is a unique
$\vec{e}\in B^{\mathrm{gr}}_1(0)=\set{0,\pm \vec{e}_1,\ldots,\pm \vec{e}_d}$ such that
\begin{equation*}\label{eqn:unique-e}
    \vec a\cdot \vec{e}\equiv \vec a\cdot\vec z-c\pmod{p}.
\end{equation*}
Then $\vec x=\vec z-\vec{e}$ belongs to $\overline X^{(\vec a,c)}$, and the uniqueness
of $\vec{e}$ shows that $\vec x$ is the unique code point within distance one of
$\vec z$.  Thus $\overline X^{(\vec a,c)}$ is a perfect configuration.

There are $2^d d!$ choices for $\vec a$. Indeed, the $2d$ nonzero residues
modulo $2d+1$ form $d$ opposite pairs $\set{r,-r}$.  These pairs can be
assigned to the ordered coordinates of $\vec a$ in $d!$ ways, after which one
of the two representatives in each pair can be chosen in $2^d$ ways.

Since $p$ is prime, $\overline X^{(\vec b,c')}
    =\overline X^{(\vec a,c)}$ if
and only if $(\vec b,c')=(t\vec a,tc)$
for some $t\in\bigl(\Z/p\Z\bigr)^\times$.
Thus each sublattice
arising from this construction is represented by exactly $2d$ admissible
vectors, and the number of distinct sublattices is
\begin{equation}\label{eqn:perfect-count}
    \frac{2^d d!}{2d}=2^{d-1}(d-1)!.
\end{equation}
Each sublattice has $2d+1$ cosets, so the construction gives $p2^{d-1}(d-1)!$ perfect configurations.

In particular, for $d=2,3$~\cite[Theorem~2]{horakNonperiodicTilingsCrosses2012} and $d=5$~\cite[Theorem~4]{horakTilingR5Crosses2014}, any two perfect configurations are congruent.
Thus every perfect
configuration in these dimensions is one of the lattice cosets defined by~\eqref{eqn:perfect-Zd-tiles}. 
By~\eqref{eqn:perfect-count}, we have
\begin{equation*}
    N_2=10,\qquad N_3=56,\qquad N_5=4224.
\end{equation*}
The case of $d=2$ was analyzed from first principles in \cite{jauslin2018high,wang2025maximal}.
For $p\ge 13$ prime, it is conjectured (see~\cite[Conjecture~3]{horakNonperiodicTilingsCrosses2012}) that this construction exhausts all perfect configurations.

\begin{example}
We exemplify the use of Theorem~\ref{thm:volume allocation crystallization} for maximal independent sets in $\Z^2$ in the case of low activity. To find the bound for
 activity $u$ below which this theorem applies, we solve~\eqref{eqn: parameters 1}--\eqref{eqn: parameters 3} of Assumption~\ref{asm: parameters}. It is shown in~\cite[Theorem~6.4]{wang2025maximal} that we can take the Peierls constant $\tau=1/270$. Further, we find by computer that the number of distinct restrictions
 of maximal independent sets to a $5\times 5$ square is $\kappa=4883$. The dominant inequality among the three conditions is the one in \eqref{eqn: parameters 2}. The series in it converges if $\beta \ge 22725$, which 
 gives a bound on $\beta_0$. 
 Recalling that $u=e^{\mu \beta}$, we conclude that 
crystallization occurs for all $u\le \exp(-22725)$.
 \end{example}

\subsubsection{Non-example: the honeycomb lattice $\bH_2$ at low activity}\label{sec: hexagonal}
Here $\Delta=3$, so $v(x\mid X)\le4$ and $p_\ast=1/4$.  As on $\Z^d$, equality
holds at every $x\in X$ when $X$ is a
radius-one perfect code, or an efficient dominating set
of~$\bH_2$~\cite{zhaoPerfectDominationRatios2022,senthilthilakEfficientDominationLattice2023}.
We show that there are infinitely many periodic ones.

Group the vertices of $\bH_2$ into the pairs $\set{x,x+\delta}$,
$x\in\bL$, and the pairs into the rows $x+\Z b_1$, successive rows
differing by $b_2$.  
In every second row, occupy one vertex of each pair,
alternating between the two translates along the row, and leave the remaining
rows empty: the occupied vertices lie on zig-zag lines whose closed
neighborhoods tile $\bH_2$, so the configuration is perfect.  Each line may
sit in either of two positions, which differ by $b_1$, and its position
constrains none of the others; every line therefore slides on its own
(Figure~\ref{fig:honeycomb low activity sliding}), and any choice of positions
is again perfect.  There are therefore infinitely
many periodic perfect configurations:
sliding one line in every $\ell\ge3$ gives a configuration whose period lattice
is generated by $2b_1$ and $2\ell b_2$, of index $4\ell$ in $\bL$. Hence Assumption~\ref{itm:global optimization} fails. 

\begin{figure}[tbp]
    \centering
    \includegraphics[width=0.8\textwidth]{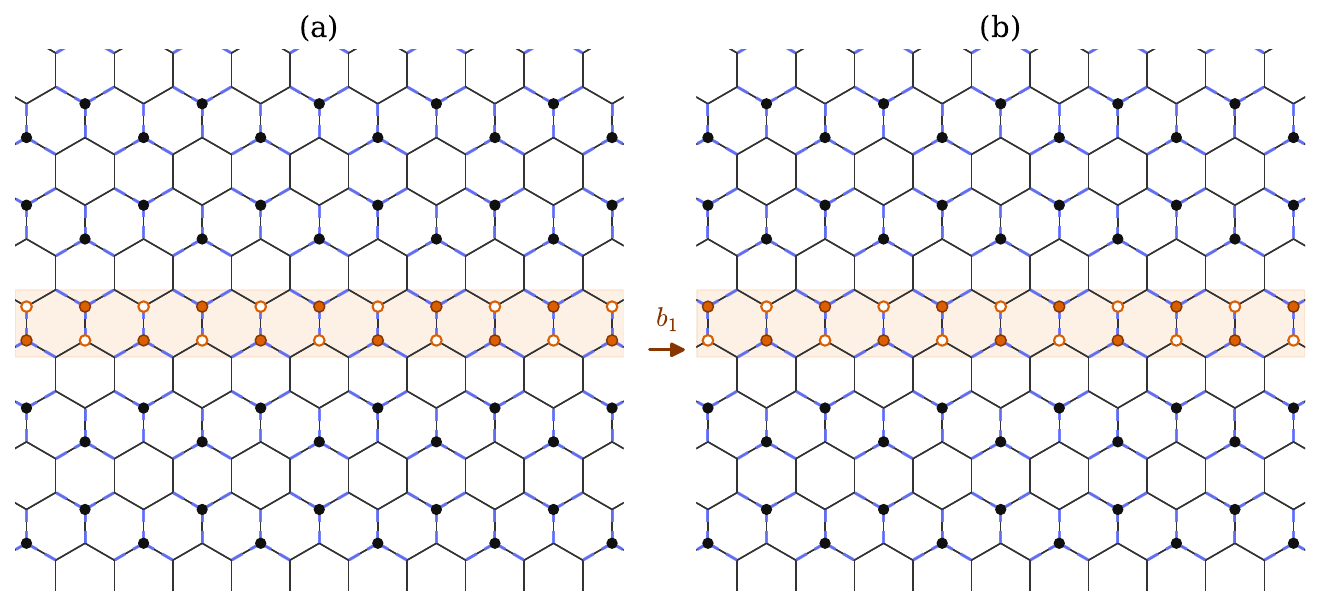}
    \caption{The sliding line.  Both panels are perfect configurations of
    $\bH_2$ at low activity: The panels differ in the
    position of the shaded line alone, which (b) slides by $b_1$. 
    Any set of lines may be slid in
    this way; sliding one line in every $\ell$ gives a periodic configuration
    of period $2\ell b_2$.}
    \label{fig:honeycomb low activity sliding}
\end{figure}

\subsection{Beyond nearest-neighbor models}\label{sec:beyond NN}

\subsubsection{A two-tile example}

We consider a two-tile example on $\bG = \bL =  \Z^2$ with 
\begin{align*}
    T_1 &= B_\epsilon((0,0)) \cup B_\epsilon((1/5,1/5)) \cup B_\epsilon((2/5,2/5)) \cup B_\epsilon((2/5,7/5)),\\
    T_2 &= B_\epsilon((0,0)) \cup B_\epsilon((6/5,1/5)) \cup B_\epsilon((3/5,3/5)) \cup B_\epsilon((3/5,8/5)),
\end{align*}
where $0<\epsilon < 1/100$ is a small constant such that two tiles overlap if and only if they coincide on a disk.
See Figure~\ref{fig:two tile base tiles}.

\begin{figure}[tbp]
    \centering
    \begin{subfigure}[b]{0.25\textwidth}
        \centering
        \includegraphics[width=.8\textwidth]{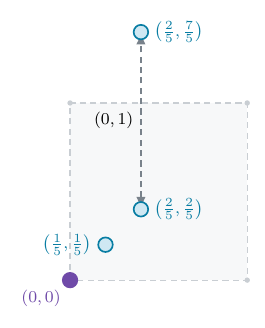}
        \caption{tile $T_1$}
        \label{fig:two tile T1}
    \end{subfigure}
    \hspace{0.05\textwidth}
    \begin{subfigure}[b]{0.35\textwidth}
        \centering
        \includegraphics[width=\textwidth]{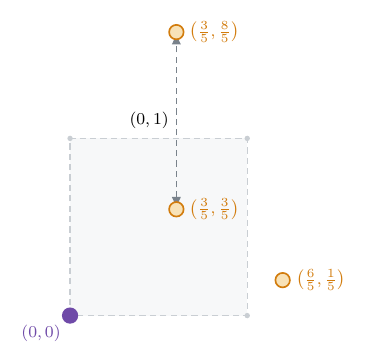}
        \caption{tile $T_2$}
        \label{fig:two tile T2}
    \end{subfigure}
    \caption{The tiles $T_1$ and $T_2$, each drawn as its four disks of radius
    $\epsilon$, with the centers labeled.  The dashed square is the unit cell
    of $\bG=\Z^2$ and the gray dots are its sites.  The disk at the origin
    (purple) is common to both tiles, and the dashed arrow joins the two disks
    of a tile whose centers differ by $(0,1)$.}
    \label{fig:two tile base tiles}
\end{figure}

For $i \in \set{1,2}$ and $(a,b) \in \Z^2$, we write
\begin{equation*}
    T^{(i)}_{a,b} := (a,b) + T_{i}.
\end{equation*}
To study the maximal hard-core model defined by $T_1$ and $T_2$, we construct an equivalent model on $\tG \cong \Z^2$ as follows.

Let $X$ be a configuration on $\bG$.  We map $X$ to a configuration on $\tG$ by the map $\Phi$ defined as follows: for every $(a,b) \in \bG$, if $T^{(1)}_{a,b} \in X$, then occupy $(2a,b) \in \tG$; if $T^{(2)}_{a,b} \in X$, then occupy $(2a+1,b) \in \tG$.

\begin{proposition}
    \label{prop:two tile bijection}
    Let $\Omega_\bG$ and $\Omega_{\tG}$ be the collections of maximal hard-core configurations on $\bG$ and on $\tG$, respectively.  Then $\Phi$ is a bijection between $\Omega_\bG$ and $\Omega_{\tG}$.
\end{proposition}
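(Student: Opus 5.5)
The plan is to show that $\Phi$ is a bijection between the set of particles $\bG\times\set{1,2}$ and the sites of $\tG$, and that it carries the overlap relation between tiles exactly onto the nearest-neighbor relation of $\tG\cong\Z^2$. Once this is done, hard-core configurations on $\bG$ become independent sets of $\tG$, and maximal configurations become maximal independent sets. Maximal configurations on $\tG$ are understood here in the nearest-neighbor sense of Definition~\ref{def:nearest-neighbor model}. At the level of particles, the map $(a,b,i)\mapsto(2a+i-1,b)$ is clearly a bijection $\Z^2\times\set{1,2}\to\Z^2$, with inverse determined by the parity of the first coordinate. So the substance of the argument is the compatibility computation.

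\emph{Step 1: when do two tiles overlap?} By the choice $\epsilon<1/100$, two tiles have intersecting interiors if and only if they contain a common disk, that is, a common disk center. I would list the disk centers modulo $\Z^2$. For $T_1$ they are $(0,0)$, $(1/5,1/5)$, $(2/5,2/5)$ and $(2/5,2/5)$, the last two coming from $(2/5,2/5)$ and $(2/5,7/5)$. For $T_2$ they are $(0,0)$, $(1/5,1/5)$, $(3/5,3/5)$ and $(3/5,3/5)$. Two disks from distinct tiles can coincide only if their fractional parts agree, and this gives a short case list.

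The origin disks of $T^{(i)}_{a,b}$ and $T^{(j)}_{a',b'}$ coincide iff $(a,b)=(a',b')$. For $i=j$ this is the same particle; for $i\ne j$ it is the pair $T^{(1)}_{a,b},T^{(2)}_{a,b}$. The $(1/5,1/5)$-disks can coincide only between $T^{(1)}_{a,b}$ and $T^{(2)}_{a-1,b}$, since $(a+1/5,b+1/5)=(a'+6/5,b'+1/5)$. The $(2/5,2/5)$-disks coincide only between $T^{(1)}_{a,b}$ and $T^{(1)}_{a,b\pm1}$. Likewise the $(3/5,3/5)$-disks coincide only between $T^{(2)}_{a,b}$ and $T^{(2)}_{a,b\pm1}$. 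Under $\Phi$ these four cases map respectively to $(2a,b)\sim(2a+1,b)$, $(2a,b)\sim(2a-1,b)$, $(2a,b)\sim(2a,b\pm1)$ and $(2a+1,b)\sim(2a+1,b\pm1)$. Together these are exactly all nearest-neighbor pairs of $\Z^2$: every horizontal edge joins an even and an odd first coordinate in one of the first two ways, and vertical edges are the last two. Hence two distinct particles conflict if and only if their images are adjacent in $\tG$.

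\emph{Step 2: transfer the constraints.} A configuration on $\bG$ is a function $\omega$, so at most one tile sits at each site. This single-occupancy constraint is itself encoded by the edge $(2a,b)\sim(2a+1,b)$, so both the constraint $\omega(v)\ne0$ in the definition of maximality and the hard-core condition~\eqref{eq:hard-core} translate into independence in $\tG$. Therefore $X$ is a hard-core configuration if and only if $\Phi(X)$ is an independent set. Maximality on either side means that no further particle can be added without creating a conflict, and the conflict relations correspond under $\Phi$. So $X$ is maximal iff $\Phi(X)$ is a maximal independent set. Combined with the bijectivity on particles, this gives that $\Phi$ restricts to a bijection $\Omega_\bG\to\Omega_{\tG}$, with inverse given by parity decoding.

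The main obstacle is Step 1: one must be careful that the case list of coincident disks is exhaustive. In particular, one must check that no unexpected coincidences arise, for instance between a $T_1$ disk at offset $(2/5,2/5)$ and a $T_2$ disk, whose fractional parts differ, and one must handle the two disks of the same tile differing by $(0,1)$. Reducing everything to fractional parts modulo $\Z^2$ makes this a finite check.
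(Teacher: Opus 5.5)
Your proof is correct and follows the same route as the paper's: you enumerate the four ways two tiles can share a disk, observe that $\Phi$ sends them to exactly the nearest-neighbor edges of $\tG\cong\Z^2$, and then transfer the hard-core and maximality conditions. Two points the paper leaves implicit are made explicit in your version: the exhaustiveness check via fractional parts modulo $\Z^2$, and the remark that the single-occupancy constraint is already encoded by the edge $(2a,b)\sim(2a+1,b)$.
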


\begin{proof}
    Two tiles overlap when they share a disk; there are four possibilities:
    $T^{(1)}_{a,b}$ with $T^{(2)}_{a,b}$, $T^{(2)}_{a,b}$ with $T^{(1)}_{a+1,b}$, $T^{(1)}_{a,b}$ with $T^{(1)}_{a,b+1}$, and $T^{(2)}_{a,b}$
    with $T^{(2)}_{a,b+1}$.
    Under $\Phi$ these four exclusions read:
    $(2a,b)$ and $(2a+1,b)$ are not both occupied;
    $(2a+1,b)$ and $(2a+2,b)$ are not both
    occupied;
    $(2a,b)$ and  $(2a,b+1)$ are not both occupied;
    $(2a+1,b)$ and  $(2a+1,b+1)$ are not both occupied.
    That is, $\Phi(X)$ is hard-core on $\tG$ if and
    only if $X$ is hard-core on $\bG$.  As $\Phi$ is a bijection between
    hard-core configurations and preserves inclusion, it maps $\Omega_\bG$ onto
    $\Omega_{\tG}$.
\end{proof}

\begin{corollary}
    Assumptions~\ref{asm:optimization} and~\ref{asm:supercell} hold for this model at both high and low activity, and Theorem~\ref{thm:volume allocation crystallization} applies.
\end{corollary}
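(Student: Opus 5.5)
The plan is to push everything through the bijection $\Phi$ of Proposition~\ref{prop:two tile bijection} and reduce to the nearest-neighbor maximal model on $\Z^2$, which is already covered by Table~\ref{tab:assumption one models}.

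\emph{Step 1: identify the image model.} The four exclusions listed in the proof of Proposition~\ref{prop:two tile bijection} are exactly the nearest-neighbor edges of $\tG\cong\Z^2$. Horizontally they join $(c,b)$ to $(c+1,b)$ for every $c$, since the pairs $(2a,2a+1)$ and $(2a+1,2a+2)$ together cover all horizontal edges. Vertically they join $(c,b)$ to $(c,b+1)$. Hence $\Phi$ identifies $\Omega_\bG$ with the set of maximal independent sets of $\Z^2$, and $\Phi$ sends type-$1$ particles to even columns and type-$2$ particles to odd columns. I will treat the case of equal chemical potentials, $\mu_1=\mu_2=\sigma/\sqrt2$, which is the unit-vector normalization of the two regimes. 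The Hamiltonian is then $\sigma/\sqrt2$ times the number of particles, as in the nearest-neighbor $\Z^2$ model.

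\emph{Step 2: transport the discrete allocation.} To respect Definition~\ref{def:volume allocation}, which requires $S\subset\R^2$ to be $\bL=\Z^2$-invariant, I embed $\tG$ as $S:=\set{(c/2,b)\mid c,b\in\Z}$. I take $\lambda$ to be the counting measure on $S$, which is invariant under $\bL$; under this embedding, translations in $\bL$ act as translations by $(2a,b)$ of $\tG$. For $x\in X$, set $\phi_x(\cdot\mid X)$ equal to the discrete allocation \eqref{eqn:Z2 counting allocation} of $\Phi(x)$ in $\Phi(X)$, transported to $S$. The four properties of Definition~\ref{def:volume allocation} are then inherited from the $\Z^2$ allocation. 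Translation invariance holds because the $\Z^2$ allocation is invariant under all translations, in particular those by $(2a,b)$. Continuity holds because $\Phi$ is a local homeomorphism in the product topology. Partition of unity transfers directly, and the support is bounded by $R=2$. Example~\ref{ex:Z2 counting allocation} gives $2\le v\le5$, so Assumption~\ref{asm:optimization} holds with $p_\ast=1/(2\sqrt2)$ at high activity and $p_\ast=1/(5\sqrt2)$ at low activity. The perfect configurations are the preimages of the two checkerboards, respectively of the ten perfect Lee codes.

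\emph{Step 3: the supercell and the conclusion.} The image allocation is local at every configuration, since it depends only on graph distance $\le2$ in $\tG$, which is a bounded Euclidean distance in $\bG$. Hence Proposition~\ref{prop: locality implies supercell assumption} yields Assumption~\ref{asm:supercell}. Explicitly, the Lee codes are $5$-periodic in both directions of $\tG$, and a horizontal shift by $10$ in $\tG$ corresponds to a shift by $5$ in $\bG$. So $F=[0,5)^2$ in $\bG$ is a common supercell of all twelve perfect configurations, possibly enlarged so that it exceeds the locality and interaction radii. Proposition~\ref{prop:Peierls condition} then holds, and Theorem~\ref{thm:volume allocation crystallization} applies once $\beta_0$ is chosen to satisfy Assumption~\ref{asm: parameters}.

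\emph{Main obstacle.} The main delicacy is bookkeeping rather than mathematics. The embedding of $\tG$ must be $\bL$-invariant even though only even horizontal shifts of $\tG$ are symmetries of $\bG$; this is why the perfect configurations must be checked to have periods lying in $\Phi(\bL)$. The factor $1/\sqrt2$ in $p_\ast$ must also be carried through. If unequal potentials $\mu_1\ne\mu_2$ were intended, I would first try column-dependent weights in the allocation, but that case seems outside the intended statement.
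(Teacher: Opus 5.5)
Your proposal is correct and is essentially the paper's argument: transport the $\Z^2$ discrete allocation through $\Phi$ with $\vec\mu=\frac{\sigma}{\sqrt2}(1,1)$. Your doubled-density $S$ is only a rescaling of the paper's ``half the volume'' and is absorbed into $p_\ast$. One harmless slip: high and low activity are separate models, and $[0,5)^2$ is not a supercell of the checkerboard preimages, since the $\bG$-shift by $(0,5)$ swaps the two row types. Nothing depends on this, because Proposition~\ref{prop: locality implies supercell assumption} already supplies the supercell.
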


\begin{proof}[Proof sketch]
    By Proposition~\ref{prop:two tile bijection} the two models differ by a relabeling only. 
    Take $\vec{\mu}=\frac{\sigma}{\sqrt2}(1,1)$ and allocate to a tile
    of $X$ half the volume that the allocation of
    Example~\ref{ex:Z2 counting allocation} gives to the corresponding
    particle of $\Phi(X)$.  Assumptions~\ref{asm:optimization} and~\ref{asm:supercell} then hold, and the perfect
    configurations on $\bG$ are given by those of the nearest-neighbor maximal
    model on $\tG \cong \Z^2$ (Figure~\ref{fig:two tile perfect phases}).
\end{proof}
\begin{figure}[tbp]
    \centering
    \begin{subfigure}[b]{0.285\textwidth}
        \centering
        \includegraphics[width=\textwidth]{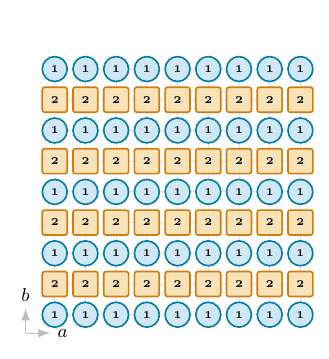}
        \caption{high activity}
        \label{fig:two tile phase high}
    \end{subfigure}
    \hfill
    \begin{subfigure}[b]{0.285\textwidth}
        \centering
        \includegraphics[width=\textwidth]{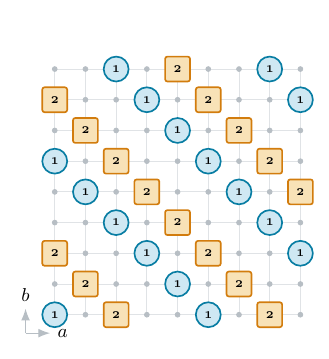}
        \caption{low activity}
        \label{fig:two tile phase low a}
    \end{subfigure}
    \hfill
    \begin{subfigure}[b]{0.285\textwidth}
        \centering
        \includegraphics[width=\textwidth]{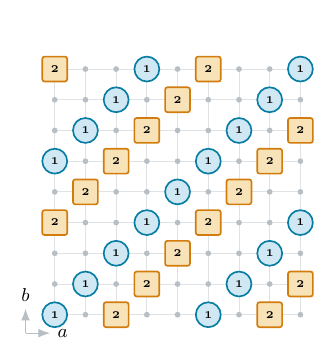}
        \caption{low activity}
        \label{fig:two tile phase low b}
    \end{subfigure}

    \vspace{2ex}
    \begin{subfigure}[b]{0.21\textwidth}
        \centering
        \includegraphics[width=\textwidth]{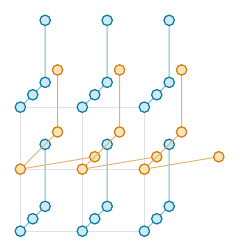}
        \caption{the tiles of (a)}
        \label{fig:two tile real high}
    \end{subfigure}
    \hspace{0.07\textwidth}
    \begin{subfigure}[b]{0.32\textwidth}
        \centering
        \includegraphics[width=\textwidth]{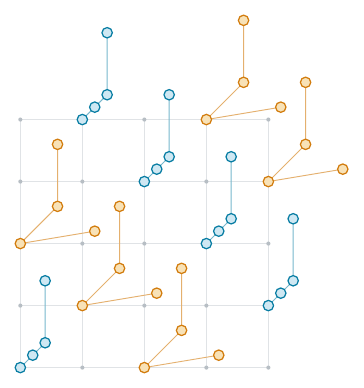}
        \caption{the tiles of (b)}
        \label{fig:two tile real low}
    \end{subfigure}
    \caption{The perfect configurations of the two-tile model on $\bG$.  A
    circle labeled $1$ marks a site carrying $T_1$, a square labeled $2$ marks a
    site carrying $T_2$, and a gray dot marks a vacant site.  (a): at high activity
    every site is occupied and the type alternates from row to row; the second
    phase is the translate of this one by $(0,1)$.  (b), (c): at low activity
    the occupied sites carry the two orientations of the perfect code of
    $\tG$; the remaining eight phases are translates of these two.  (d),
    (e): a window of the phases in (a) and (b) drawn with the actual tiles,
    the segments joining the four disks of a tile being guides only.}
    \label{fig:two tile perfect phases}
\end{figure}

\begin{remark}
    The number of tile types is therefore a feature of how a model is presented: two models with different numbers of tile types can be combinatorially identical.
\end{remark}

\subsubsection{Hard-disk models}
\label{sec: hard disks}
High-activity crystallization in hard-disk models on $\Z^2$, $\bH_2$, and $\bA_2$ without the maximality constraint is studied in a series of papers by Mazel, Stuhl, and Suhov~\cite{mazel2019high, mazel2020hard, mazel2025high}.
Admissible configurations in these works are governed by the condition that disks of diameter $D$ centered at the lattice sites do not overlap, and the probabilities are given by the usual hard-core measure \eqref{eq: hardcore measure}. 
For hard-disk models on $\bH_2$ and $\bA_2$, the procedure of \textit{area redistribution} introduced in~\cite[Section~4.2]{mazel2025high} yields a volume allocation related to the angular volume allocation in~\eqref{eq:phi-ang}.

Starting with a Delaunay triangulation $\cT_X$ of the configuration $X$, the authors of \cite{mazel2025high} partition its triangles into groups $C$ of sizes 2, 3, and 4. Recasting their
arguments in our terminology, they consider the volume allocation
   $$
   \phi_x(s\mid X)=\sum_{\substack{T\in \cT_X\\ T\ni x}}\frac{\theta_x(T)}\pi \frac{\indicator{\bigcup_{T'\in C}T'}(s)}{|C|},
  $$
  where $C$ is the group such that $T\in C$.
Then the allocated volume of the site $x$ is given by
   $$
v(x\mid X)=\sum_{T\ni x} \frac{\theta_x(T)}\pi {\mathscr{S}}(T,X),
   $$
where ${\mathscr{S}}(T,X):=\abs{\bigcup_{T'\in C}T'}/\abs{C}$ is the {\em average area} of a triangle in the group that contains $T$.
They further define the {\em minimal redistributed area} of the configuration $X$ and call a configuration perfect if the area of all the triangles in $\cT_X$ equals this value. They finally prove in \cite[Lemma~4.3]{mazel2025high} that any periodic ground state is a perfect configuration.

The proof that $\phi_x(s\mid X)$ satisfies the conditions for a volume allocation in Definition~\ref{def:volume allocation} is not difficult; for instance,
partition of unity is shown as in~\eqref{eq: sum of angles}.
Moreover, the notion of a perfect configuration in these papers coincides with ours as specialized to the disk-packing model. 
Hence, upon verifying Assumption \ref{asm:optimization} \textit{and} showing that there are finitely many perfect configurations, we could claim high-activity crystallization in these models with the maximality constraint.
However, the finiteness claim is nontrivial and does not hold for all values of the diameter $D$ due to the so-called \textit{sliding} phenomenon; see \cite[Sections~1.3,~1.4]{mazel2025high}.

\subsubsection{Polyforms with a finite number of tilings}

Informally, a polyform is \textit{a plane figure or solid compound consisting of multiple connected copies of a given base shape}~\cite{weisstein_polyform}.
Common examples of polyforms in two dimensions include polyominoes, polyiamonds, and polyhexes, respectively obtained from $\Z^2$, $\bA_2$, and $\bH_2$ by gluing together faces of these lattices along their shared edges.
Polyforms have long been a popular object of study in combinatorics~\cite{martin1991polyominoes,golomb1996polyominoes}, in part due to their wide-ranging packing and tiling behaviors, and have found applications in disciplines far beyond mathematics.
Polyomino fluids and mixtures, for instance, were studied by a group of computational chemists~\cite{barnes2009structure,barnes2010development} seeking to understand their self-assembly properties.

Polyforms naturally give rise to maximal hard-core models by choosing $\bG=\bL$ to be the $d$-dimensional lattice from which a polyform $T$ was constructed, and the tiles $T_1,\dots,T_k$ to be the translationally inequivalent images of $T$ under a subset of symmetries of $\bL$, often including its rotations and sometimes its reflections as well.
Here, we take $\mu_1=\dots=\mu_k=1/\sqrt{k}$.

To study these polyform models uniformly, it is convenient to use the volume allocation induced by the continuous Voronoi tessellation, defined as follows.
Let $S=\R^d$ and $\lambda$ be the Lebesgue measure.
For a maximal polyform configuration $X$ and $x\in X$, define the Voronoi cell associated with the tile $T_x$ by
\begin{equation}
    V_X(T_x):=\set{s\in\R^d\mid \rmd(s,T_x)=\min_{y\in X}\rmd(s,T_y)},
\end{equation}
where $\rmd$ denotes the usual Euclidean distance, and set
\begin{equation}
    \phi_x(s\mid X):=\frac{\indicator{V_X(T_x)}(s)}{\sum_{y\in X}\indicator{V_X(T_y)}(s)}.
\end{equation}
This is a direct, continuous analogue of the discrete volume allocation introduced in Definition~\ref{def: discrete allocation}.
By a similar argument, it satisfies the properties enumerated in Definition~\ref{def:volume allocation} and hence gives rise to a volume allocation.

With this volume allocation, it follows from the assumption that the polyform $T$ has finitely many tilings that its corresponding maximal model, in the high-activity regime, satisfies Assumption~\ref{asm:optimization} with $p_\ast=1/(\sqrt{k}\lambda(T))$, and the perfect configurations of the model are given precisely by the tilings.
Moreover, a similar argument to that for Corollary~\ref{cor: nn model assumption implication} shows that the volume allocation is local at each perfect configuration, so Proposition~\ref{prop: locality implies supercell assumption} yields Assumption~\ref{asm:supercell}.
Hence, Theorem~\ref{thm:volume allocation crystallization} applies and proves crystallization at high activities, in agreement with the established high-activity behavior of these hard-core models without the maximality constraint~\cite[Corollary~3.3 and Remark~3.4]{he2026unified}.

\section*{Acknowledgments} The authors are grateful to Alex Mazel for helpful discussions.
A.B. and G.W. were partially supported by NSF grants CCF2330909 and CCF2526035.
Q.H. was supported by an SAS Fellowship as a PhD student at Rutgers University when this work began and in part through a summer coadjutant appointment, made possible by a gift from Prof. Joel Lebowitz, after his graduation.


\end{document}